\definecolor{DarkGreen}{RGB}{0, 100, 0}
\theoremstyle{thmstyleone}%
\newtheorem{theorem}{Theorem}
\newtheorem{proposition}{Proposition}%
\newtheorem{assumption}{Assumption}
\newtheorem{lemma}{Lemma}
\theoremstyle{thmstyletwo}%
\newtheorem{remark}{Remark}%
\theoremstyle{thmstylethree}%
\newtheorem{definition}{Definition}%
\def\beq{\begin{equation}}
\def\eeq{\end{equation}}
\def\beqr{\begin{eqnarray}}
\def\eeqr{\end{eqnarray}}
\def\beqrs{\begin{eqnarray*}}
\def\eeqrs{\end{eqnarray*}}
\def\bet{\begin{theorem}}
\def\eet{\end{theorem}}
\def\bel{\begin{lemma}}
\def\eel{\end{lemma}}
\def\bed{\begin{definition}}
\def\eed{\end{definition}}
\def\bep{\begin{proposition}}
\def\eep{\end{proposition}}
\def\bg{\begin{figure}[tbph]\begin{center}}
\def\eg{\end{center}\end{figure}}
\def\bc{\begin{center}}
\def\ec{\end{center}}
\def\mR{\mathbb{R}}
\def\argmin{\mbox{argmin}}
\def\beq{\begin{equation}}
\def\eeq{\end{equation}}
\def\beqr{\begin{eqnarray}}
\def\eeqr{\end{eqnarray}}
\def\beqrs{\begin{eqnarray*}}
\def\eeqrs{\end{eqnarray*}}
\begin{document}

\title[Article Title]{Bi-SCORE for Weighted Bipartite Networks with Application in Knowledge Source Discovery}


\author[1]{\fnm{Zicheng} \sur{Xie}}\email{zichengxie753@gmail.com}
\equalcont{Zicheng Xie and Rui Pan are co-first authors and contributed equally to this work.}

\author[1]{\fnm{Rui} \sur{Pan}}\email{panrui\_cufe@126.com}
\equalcont{Zicheng Xie and Rui Pan are co-first authors and contributed equally to this work.}

\author*[2]{\fnm{Yan} \sur{Zhang}}\email{zhangyan\_elyssa@163.com}

\affil[1]{\orgdiv{School of Statistics and Mathematics}, \orgname{Central University of Finance and Economics}, \orgaddress{\city{Beijing}, \country{China}}}

\affil[2]{\orgdiv{School of Statistics and Data Science}, \orgname{Shanghai University of International Business and Economics}, \orgaddress{\city{Shanghai}, \country{China}}}



\abstract{Community detection in citation networks offers a powerful approach to understanding knowledge flow and identifying core research areas within academic disciplines. This study focuses on knowledge source discovery in statistics by analyzing a weighted bipartite journal citation network constructed from 16,119 articles published in eight core journals from 2001 to 2023. To capture the inherent asymmetry of citation behavior, we explicitly preserve the bipartite structure of the network, distinguishing between citing and cited journals. For this task, we propose Bi-SCORE (Bipartite Spectral Clustering on Ratios-of-Eigenvectors), a computationally efficient and initialization-free spectral method designed for community detection in weighted bipartite networks with degree heterogeneity. We establish rigorous theoretical guarantees for the performance of Bi-SCORE under the weighted bipartite degree-corrected stochastic block model. Furthermore, simulation studies demonstrate its robustness across varying levels of sparsity and degree heterogeneity, where it outperforms existing methods. When applied to the real-world citation network, Bi-SCORE uncovers a six-community structure corresponding to key research areas in statistics, including applied statistics, methodology, theory, computation, and econometrics. These findings provide valuable insights into the intricate citation patterns and knowledge flow among statistical journals.}

\keywords{Journal Citation Network, Bipartite Network, Spectral Clustering, Community Detection, Knowledge Source Discovery}



\maketitle

\section{Introduction}\label{sec1}

Journal citation networks are powerful tools for studying the structure and evolution of science knowledge \citep{heneberg2016excessive}. In these networks, nodes represent journals and edges represent citation relationships. Specifically, if a paper published in journal $A$ cites a paper published in journal $B$, there is a directed edge from $A$ to $B$. This results in a directed and weighted network, where the weight reflects the citation counts between journals. Journal citation networks have been widely applied in various disciplines, such as economics \citep{jochmans2018semiparametric, hardle2024introduction}, education \citep{wang2016mapping}, medicine \citep{fiorentino2011clinical, fujita2022using}, neuroscience \citep{dworkin2020extent, vaca2021unreflective}, technology management \citep{lee2015uncovering}, e-commerce \citep{gong2024knowledge}, demography \citep{liu2005mapping}, materials science \citep{aykol2019network} and others. Specifically, journal citation networks can be used to identify influential journals \citep{nerur2005assessing, peng2013network}, uncover knowledge dissemination patterns \citep{calero2012seed, peng2015assortative}, and measure inter-disciplinarity \citep{franceschet2012large, leydesdorff2018betweenness}.

In particular, journal citation networks play a crucial role in knowledge source discovery by revealing key journals, researchers, methods, and theories within a field or interdisciplinary research \citep{li2023research, zhang2024latent}. Existing techniques for identifying knowledge sources using journal citation networks fall into three main categories. The first is the network centrality analysis, which assesses the importance of journals by calculating measures such as degree, closeness, and betweenness centrality \citep{leydesdorff2016aggregated, chen2024monitoring}. The second is intermediary analysis, which traces citation paths to identify influential journals that serve as bridges \citep{leydesdorff2018discontinuities}. The third is community detection, which clusters journals to uncover topical or disciplinary structures \citep{carusi2019scientific, butt2021systematic, xu2023covariate}. For instance, \cite{gomez2016visualization} identify research clusters based on direct citations, co-citation, and bibliographic coupling to generate a scientific knowledge map in library and information science. Later, \cite{calma2019journal} perform a comprehensive bibliometric analysis of publications in the business field, employing a label propagation algorithm to identify research themes. However, to the best of our knowledge, although the journal citation network plays a crucial role, its application in the field of statistics is relatively rare. 

In this study, we focus on the field of statistics and collect articles from eight core journals (listed in Table \ref{table:journal}) that span 2001 to 2023. The data are retrieved from the Web of Science (WOS, {\it www.webofknowledge.com}), including 16,119 articles and their corresponding 179,654 cited references. Using this dataset, we construct a bipartite (i.e., two-mode) journal citation network according to the citation relationships, where one mode comprises the eight core journals and the other the cited journals. This bipartite construction explicitly distinguishes the citing and cited roles, thereby preserving the inherent asymmetry of the citation behavior. This representation is particularly suitable for the discovery of the source of knowledge, which can be obscured in traditional one-mode projections \citep{hayes2025co}. Although prior studies mainly rely on one-mode networks that treat all journals as structurally equivalent \citep{guo2009knowledge, choudhury2020mining}, the bipartite perspective offers a more accurate structural representation. This formulation choice forms one of our key contributions. Similar bipartite network formulations have been adopted in domains such as recommendation systems \citep{maier2022bipartite}, ecological networks \citep{pilosof2017multilayer}, and financial transaction analysis \citep{chen2025model}.

To reveal the structural patterns of citation behaviors between journals, community detection can be performed on such bipartite journal citation networks. Existing approaches to bipartite community detection generally fall into three categories. The first is the projection method. It simply converts the bipartite network into one-mode projections and then applies traditional community detection techniques to the one-mode networks. Although this approach is simple to implement, it often results in information loss in the bipartite structure \citep{barber2007modularity, newman2013spectral, arthur2020modularity}. The second is the heuristic method, which identifies communities using techniques such as matrix decomposition, motif-based clustering, or discrete optimization, without relying on explicit generative models. These methods emphasize computational efficiency and scalability and have been widely applied in computer science domains such as information retrieval, recommendation systems, and web mining \citep{zhang2015community, wang2021efficient}. The third is the model-based method, which assumes that the network is generated from an underlying probabilistic model \citep{peixoto2015inferring, yen2020community}. A widely adopted model in this category is the Bipartite Stochastic Block Model (BiSBM), which assumes that edge probabilities depend solely on the community memberships of the two types of nodes \citep{larremore2014efficiently}. However, the original BiSBM does not account for degree heterogeneity, which is an empirically common feature of real-world networks \citep{karrer2011stochastic}. To address this limitation, recent extensions, such as degree-corrected BiSBM, introduce node-specific parameters to accommodate degree variability, improving both empirical performance and theoretical guarantees \citep{jin2015fast, barucca2016disentangling, brault2020consistency, zhao2024variational}.

Despite these advances, efficient and accurate community detection in large-scale and degree heterogeneous bipartite networks remains a challenge \citep{calderer2021community}. Traditional inference techniques such as pseudo-likelihood estimation \citep{amini2013pseudo, wang2023fast} and variational methods \citep{liu2024variational, zhao2024variational} are often computationally intensive or sensitive to initialization \citep{zhao2017survey}. As a scalable and initialization-free alternative, spectral clustering relies on the leading eigenvectors of the adjacency or Laplacian matrices. Several recent studies have explored spectral methods for community detection under BiSBM assumptions. For instance, \cite{qing2023community} propose a distribution-free model and a degree-corrected extension for weighted bipartite networks. Then they introduce bipartite spectral clustering (BiSC) and normalized BiSC for community detection. These methods are further extended to overlapping bipartite networks in \citet{qing2024bipartite}. Although these approaches represent important progress in model-based community detection, their theoretical guarantees often depend on the boundedness of deviations between adjacency entries and their expectations, a condition that may not hold in many weighted networks. Moreover, although normalized BiSC attempts to address degree heterogeneity, it can perform poorly in sparse settings. The limitation can also be observed in normalized principal component analysis for unipartite spectral clustering \citep{jin2015fast}. 

To overcome these challenges, we propose Bipartite Spectral Clustering on Ratios-of-Eigenvectors (Bi-SCORE), a new spectral methodology designed for bipartite degree-corrected block models. Bi-SCORE generalizes the widely used SCORE framework \citep{jin2015fast} in two key directions: it extends SCORE from unipartite to bipartite networks and from unweighted to weighted networks, thereby largely broadening its applicability. Specifically, Bi-SCORE utilizes the leading singular vectors of the bipartite adjacency matrix and applies a ratio transformation to mitigate the influence of nodal degrees. Unlike iterative optimization-based methods, Bi-SCORE is initialization-free and computationally efficient. Theoretically, Bi-SCORE offers strong consistency guarantees under mild regularity conditions, even in sparse and highly heterogeneous bipartite networks. Practically, we demonstrate the effectiveness of Bi-SCORE by applying it to a bipartite journal citation network in the field of statistics. The results reveal a clear and interpretable six-community structure, where each community corresponds to a meaningful subfield of statistical research, and the citation patterns among core journals further validate the domain-specific focus and cross-community knowledge flow captured by Bi-SCORE.

The remainder of this paper is organized as follows. In Section 2, we construct the journal citation network and provide a descriptive analysis. Section 3 proposes the Bi-SCORE method and presents its theoretical properties. Extensive simulation studies are conducted in Section 4 to evaluate the finite sample performance of our newly proposed methodology. The empirical results of the real data are shown in Section 5. Conclusions and discussions are presented in Section 6. All technical proofs are left in the Appendix.

\section{Journal Citation Network}\label{sec2}

\subsection{Data and Descriptive Analysis}

This study collects publications from 2001-01-01 to 2023-12-31 in eight representative journals in the field of statistics from the WOS. The journal name, number of publications, and JCR category are listed in Table \ref{table:journal}. These eight journals cover the core aspects of statistical science, i.e., statistical methodology, mathematical theory, applied statistics, and computation. Specifically, \emph{Biometrika}, \emph{Journal of Business \& Economic Statistics} (\emph{JBES}), \emph{Journal of the American Statistical Association} (\emph{JASA}), and \emph{Journal of the Royal Statistical Society Series B--Statistical Methodology} (\emph{JRSS-B}) aim to publish high-quality articles on statistical methodology. For instance, \emph{JBES} focuses on methodological innovations in economics, business, and finance, including the development and improvement of statistical methods, the adaptation of methods from other fields, and advances in computation. Regarding mathematical theory, the \emph{Annals of Statistics} (\emph{AOS}) focuses on publishing research papers on the theory of mathematical statistics and applied/interdisciplinary statistics. In the aspect of applied statistics, the \emph{Annals of Applied Statistics} (\emph{AOAS}) aims to provide a timely and unified forum for all areas of applied statistics. Furthermore, the \emph{Journal of Computational and Graphical Statistics} (\emph{JCGS}) and \emph{Statistics and Computing} (\emph{SC}) specialize in computation, with \emph{JCGS} having a particular focus on the latest techniques of graphical methods in statistics and data analysis. We refer to these eight journals as the {\it core journals}. In addition, we thoroughly retrieve the basic information for each publication, including the title, authors, publisher, publication year, keywords, abstract, citation count, and reference list. A concrete example is provided in Table \ref{table:example}. Specifically, the citation count represents the number of times each article has been cited by other works indexed by the \emph{Web of Science} until 2023. This dataset differs primarily from the data previously collected by our team \citep{gao2023large} in that it includes more detailed information about the journals in which the references are published.

\begin{table}[hbp]
\centering
\caption{An example of the basic information for publications, including the title, authors, publisher, publication year, citation count, and reference list.}
\label{table:example}
\begin{tabularx}{\textwidth}{>{\raggedright\arraybackslash}p{2.6cm} X}
\toprule
\textbf{Variable}                & \textbf{Detail} \\
\midrule
\textbf{Title}                & Using Heteroscedasticity to Identify and Estimate Mismeasured and Endogenous Regressor Models \\
\textbf{Authors}              & Lewbel, A \\
\textbf{Publisher}            & Journal of Business \& Economic Statistics \\
\textbf{Publication Year}     & 2012 \\
\textbf{Keywords}             & Endogeneity, Heteroscedastic errors, Identification, Measurement error, Partly linear model, Simultaneous system \\
\textbf{Abstract}             & This article proposes a new method of obtaining identification in mismeasured regressor models, triangular systems, and simultaneous equation systems. The method may be used in applications where other sources of identification, such as instrumental variables or repeated measurements ... \\
\textbf{Citation Count}       & 1,315 \\
\textbf{Reference List}       & 
\begin{tabular}[t]{@{}p{\dimexpr\textwidth-3.5cm\relax}@{}}
- Efficient estimation of models with conditional moment restrictions containing unknown functions (Econometrica) \\  
- The evidence on credit constraints in post-secondary schooling (Economic Journal) \\
- Two-step GMM estimation of the errors-in-variables model using high-order moments (Econometric Theory) \\
... (and more) \\
\end{tabular} \\

\bottomrule
\end{tabularx}
\end{table}

To better understand the performance of the eight core journals, we define the following three metrics: average citation, average reference, and self-citation rate. Specifically, the average citation is defined as the total citation counts of all publications in a specific journal over the total number of publications in the same journal. This measurement reflects the influence of publications in a journal. A higher average citation indicates that the research outputs of the journal have greater academic value and impact \citep{marx2015causes}. The average reference is defined as the total reference counts of all publications in a specific journal over the total number of publications in the same journal. A higher average reference may suggest that publications in a journal rely more heavily on and integrate research from other fields \citep{abt2002relationship}. In addition, the self-citation rate is defined as the total self-citation counts of all publications in a specific journal over the total citation counts of all publications in the same journal, multiplied by 100\%. It measures the proportion of citation counts that a journal receives for its own previously published work. Research on the self-citation rate is important in the study of knowledge source discovery, as it often reflects structural differences among journals, such as journal size, publisher country, and subject area \citep{tacskin2021self}.

The results are shown in Table \ref{table:three descriptive indicator}. Among the eight core journals, \emph{JRSS-B} has the highest average citation at 113.62, highlighting its influence in the field of statistical methodology. In addition, \emph{AOS} and \emph{JASA} also have high average citations, at 79.84 and 65.51, respectively, indicating that these two journals hold a prominent position in statistical theory. Most journals fall within the 30 to 40 range in terms of average reference, suggesting a moderate level of reliance on external sources in their research. There is an obvious variation in the self-citation rate among the eight core journals. Specifically, \emph{AOS}, \emph{Biometrika}, and \emph{JASA} have a relatively higher self-citation rate, all exceeding 10\%, while \emph{AOAS} and \emph{SC} have a relatively lower self-citation rate, below 3\%. This difference may be attributed to the fact that \emph{AOS}, \emph{Biometrika}, and \emph{JASA} concentrate on theoretical and methodological aspects of statistics, and publications in these journals often build upon prior work in the same journal. As a result, publications in these journals are more likely to cite earlier research from the same journal. In contrast, \emph{AOAS} and \emph{SC} focus on applied statistics and statistical computing, fields that are more interdisciplinary in nature, thus requiring the citation of research from a broader range of disciplines. The second column of Table \ref{table:three descriptive indicator} represents the total number of publications in the eight core journals. It can be seen that \emph{JASA} and \emph{AOS} have higher volume of publication compared to other core journals, demonstrating stronger academic influence. These findings emphasize that the eight core journals have a high academic reputation and authoritative status in statistics and related fields.

\begin{table}[h]
\centering 
\caption{Number of publications, average citation, average reference, and self-citation rate for core journals from 2001 to 2023.}
\label{table:three descriptive indicator}
\begin{tabularx}{\textwidth}{@{}>{\raggedright\arraybackslash}p{2.2cm} >{\centering\arraybackslash}p{2.3cm} >{\centering\arraybackslash}p{2.3cm} >{\centering\arraybackslash}p{2.3cm} >{\centering\arraybackslash}X@{}}
\toprule
\textbf{Journal Name} & \textbf{Number of Publications} & \textbf{Average Citation} & \textbf{Average Reference} & \textbf{Self-Citation Rate} \\
\midrule
\textbf{AOAS} & 1,793 & 37.05 & 42.38 & 1.68\% \\
\textbf{AOS} & 2,627 & 79.84 & 36.19 & 16.24\% \\
\textbf{Biometrika} & 1,861 & 45.46 & 25.68 & 10.82\% \\
\textbf{JRSS-B} & 1,290 & 113.62 & 37.84 & 5.99\% \\
\textbf{JASA} & 3,758 & 65.51 & 37.96 & 10.17\% \\
\textbf{JCGS} & 1,679 & 34.27 & 33.76 & 3.36\% \\
\textbf{JBES} & 1,451 & 49.97 & 38.34 & 3.89\% \\
\textbf{SC} & 1,660 & 34.47 & 36.45 & 2.61\% \\
\bottomrule
\end{tabularx}
\end{table}

\subsection{The Journal Citation Network}
\label{sec:jcn}

To investigate the knowledge source of the eight core journals, we construct a bipartite journal citation network. Specifically, we denote the eight core journals as one set of nodes, indexed by $i = 1, \dots, 8$. Meanwhile, the journals cited by the core journals are denoted as the other set of nodes, indexed by $j = 1, \dots, m$, where $m$ is the total number of journals including the core journals. As a result, the adjacency matrix of the journal citation network can be defined as $A = (A_{ij})\in\mathbb{R}^{8\times m}$, where $A_{ij}$ is a non-negative integer representing the frequency with which the core journal $i$ cites papers published in journal $j$. To focus more on key sources of knowledge and ensure a more concentrated analysis, we retain only journals with $A_{ij} \geq 40$, resulting in $m = 334$. Figure \ref{fig:journal citation network} visualizes the journal citation network, where the core journals are colored orange. The thickness of the edges represents the citation counts between journals, and we only display about 25\% names of all journals randomly, for the purpose of image clarity. It shows a clear community structure around the core journals, implying the need for community detection. For instance, the journals surrounding \emph{JBES} are concentrated in the fields of economics and finance (e.g., \emph{Journal of Econometrics}, \emph{Quantitative Economics}), focusing primarily on statistical methods related to economic models. Similarly, the journals surrounding \emph{AOS} are primarily associated with mathematical theory (e.g., \emph{Annals of Mathematics}, \emph{Probability Surveys}), reflecting its emphasis on probability theory, statistical inference, and other aspects of mathematical statistics. The above analysis provides a preliminary view of the community structure of the journal citation network.

\begin{figure}[htbp]
\centering
\includegraphics[trim=0.5cm 5cm 0.5cm 5cm, clip, width=0.9\textwidth]{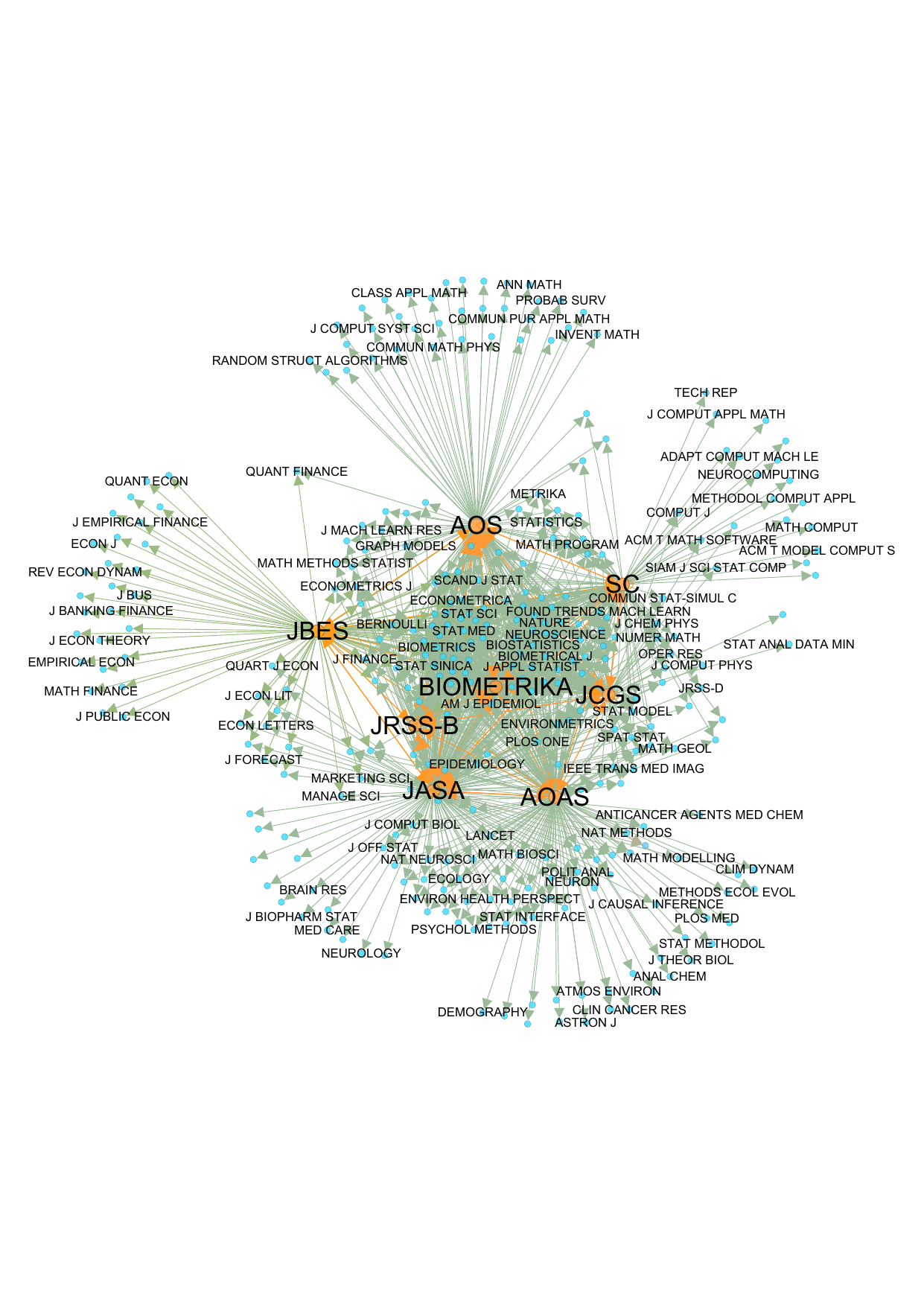}  
\caption{Journal Citation Network. It contains a total of 334 nodes and 1,063 edges, with the nodes for the eight selected journals highlighted in orange.}\label{fig:journal citation network}
\end{figure}

Table \ref{table:top 10 references} further lists the top 10 reference journals of the eight core journals. It can be seen that \emph{JASA}, \emph{AOS}, \emph{Biometrika} and \emph{JRSS-B} rank among almost the top five in terms of references in all core journals. This may be due to the high quality of their articles, making them an important knowledge source for other core journals during the observed period, as well as its comprehensive nature in statistical research and applications. In addition, Table \ref{table:top 10 references} reflects the reference preferences of the core journals, which is in line with the journal scope as presented in Section 2.1. 
For instance, \emph{AOAS} tends to cite articles related to applied statistics (e.g., \emph{Biometrics}, \emph{Statistics in Medicine}), while also referencing general science journals such as \emph{Proceedings of the National Academy of Sciences of the United States of America (PNAS)} and \emph{Nature}. This reflects its interdisciplinary orientation and broad scope of application.
\emph{AOS} primarily cites methodological and theoretical journals (e.g., \emph{Bernoulli}, \emph{IEEE Transactions on Information Theory (TIT)}, \emph{Annals of Mathematical Statistics (AMS)}), indicating its emphasis on foundational statistical and mathematical research.
\emph{Biometrika} tends to cite journals related to biostatistics and biomedical research (e.g., \emph{Biometrics}, \emph{Statistics in Medicine}). This indicates a concentrated emphasis on methodological development for biological and medical applications.
\emph{JRSS-B} and \emph{JASA} cite journals across multiple statistical domains, including biostatistics (e.g., \emph{Biometrics}), econometrics (e.g., \emph{Econometrica}), and machine learning (e.g., \emph{Journal of Machine Learning Research (JMLR)}). This reflects their integrative scope, including theoretical, applied, and computational statistics, as well as other interdisciplinary research.
\emph{JCGS} and \emph{SC} tend to cite journals in computational statistics, such as \emph{Computational Statistics \& Data Analysis (CSDA)} and \emph{JMLR}. This reflects their focus on statistical computing, data analysis, and the development of algorithmic methodologies.
\emph{JBES} cites more studies related to economics and finance (e.g., \emph{Journal of Econometrics (JOE)}, \emph{Econometrica}). This reflects its emphasis on the use of statistical methods to address problems in economic and financial research.

In addition to characterizing reference patterns of each eight core journals, we then shift our attention to the journals being cited, evaluating their impact and importance within the citation network. In-degree is widely used to measure this importance, reflecting how frequently a journal is cited by others. Specifically, we calculate the weighted in-degree for each cited journal $j$ as $C_w(j) = \sum_{i=1}^{8} A_{ij}$, where $A_{ij}$ is the weight of the edge from core journal $i$ to journal $j$ \citep{abbasi2011identifying}. The detailed definition of $A_{ij}$ is provided in Section \ref{sec3_1}.
Figure \ref{fig:weighted in-degree} presents the distribution of the weighted in-degree. The histogram indicates a highly right-skewed distribution, where the majority of nodes exhibit low-weighted in-degree (between 0 and 1,500), while a few nodes possess extremely high centrality values. Notably, \emph{AOS}, \emph{JASA}, \emph{JRSS-B}, and \emph{Biometrika} exhibit the highest weighted in-degree. This phenomenon is common in journal citation networks because the right-skewed distribution is often associated with the preferential attachment mechanism, in which newly published articles tend to cite already well-established, high-impact journals \citep{wang2008measuring}. Furthermore, structural centralization in academic disciplines reinforces this imbalance, making core journals crucial in academic dissemination and knowledge accumulation.

\begin{figure}[htbp]
\centering
\includegraphics[width=0.9\textwidth]{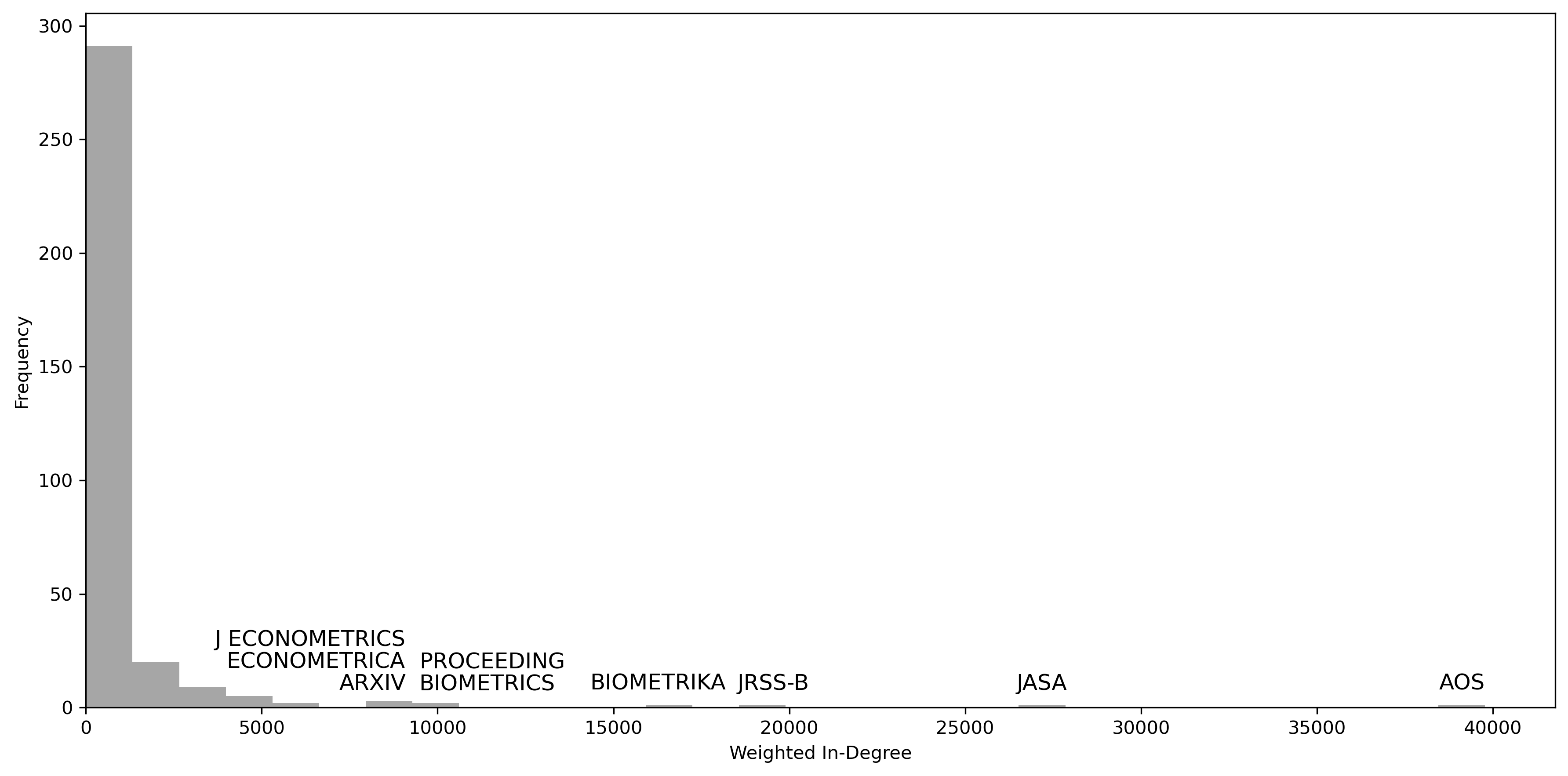}
\caption{Histogram of weighted in-degree. It has a highly right-skewed distribution and the weighted in-degree of most journals are between 0 and 1500.}\label{fig:weighted in-degree}
\end{figure}

\section{Methodology}\label{sec3}

In this section, we first introduce the Weighted Bipartite Degree-Corrected Stochastic Block Model (DCBM), which serves as the underlying generative model for the networks under study. Subsequently, we present Bi-SCORE, our novel spectral clustering algorithm designed to effectively identify community structures within these networks. Finally, we establish the theoretical properties of Bi-SCORE, providing rigorous guarantees on its performance.

\subsection{Weighted Bipartite DCBM}\label{sec3_1}
The stochastic block model (SBM) is a generative model designed for networks with community structures. In its most basic form, it applies to undirected, unweighted, and single-mode networks, where each node is assigned to one of the $K$ communities. The probability of forming an edge between two nodes is entirely dependent on their community memberships. However, the basic SBM ignores the inherent degree heterogeneity observed in real-world networks. To address this limitation, \cite{karrer2011stochastic} introduce the degree-corrected stochastic block model (DCBM), which accounts for degree heterogeneity, making it more suitable for analyzing empirical networks. Recently, \cite{zhao2024variational} extend the DCBM to weighted bipartite networks. Consider a network composed of $n$ row nodes, denoted by $\mathcal{V}^r = \{1, \dots, n\}$, and $m$ column nodes, denoted by $\mathcal{V}^c = \{1, \dots, m\}$. Let $A = (A_{ij}) \in \mR^{n \times m}$ be the adjacency matrix, where $A_{ij}$ denotes a non-negative integer weight between the $i$-th row node and the $j$-th column node. Assume that the $n$ row nodes are partitioned into $K$ communities, with their community labels given by $c^r = (c^r_1,\dots,c^r_n)^\top \in \{1,\dots,K\}^n$. Similarly, the $m$ column nodes are classified into $L$ communities, with labels $c^c = (c^c_1,\dots,c^c_m)^\top \in \{1,\dots,L\}^m$. To capture degree heterogeneity, we further incorporate two sets of parameters $\theta = (\theta_1,\dots, \theta_n)^\top \in \mR^n$ for row nodes and $\gamma = (\gamma_1,\dots,\gamma_m)^\top \in \mR^m$ for column nodes. We call them the {\it heterogeneity parameter}. These parameters model the individual propensities of nodes to form edges. Conditional on the community labels, the edge weights $A_{ij},\ i=1,\dots,n,\ j=1,\dots,m$ are assumed to be independent Poisson random variables with expected values defined as
\begin{align}
    \mathbb E(A_{ij}) = \theta_i \gamma_j B_{c^r_i c^c_j},
    \label{model1}
\end{align}
where $B= (B_{c^r_i c^c_j}) \in \mR^{K \times L}$ is a full rank matrix that quantifies the interaction strengths between row and column communities.
The community memberships can be represented by indicator matrices: $Z=(Z_{ik}) \in \mathbb{R}^{n \times K}$ for row nodes, where $Z_{ik}=1$ if $c^r_i = k$ and $Z_{ik}=0$ otherwise; and $W = (W_{jl}) \in \mathbb{R}^{m \times L}$ for column nodes, defined analogously. Throughout this paper, each row (column) node belongs to only one row (column) community. In other words, each row of $Z$ or $W$ has only one 1.  Additionally, we assume that $\text{rank}(Z) = K$ and $\text{rank}(W)=L$, which ensures that every row and column community is non-empty (i.e., contains at least one node).
In matrix form, the expected adjacency matrix $\Omega = \mathbb E(A)$ can be expressed as
\begin{align}
    \Omega = \mathbb E(A) = \Theta Z B W^\top \Gamma,
    \label{expected-A}
\end{align}
where $\Theta$ is an $n \times n$ diagonal matrix with $\Theta_{ii} = \theta_i$ and $\Gamma$ is an $m \times m$ diagonal matrix with $\Gamma_{jj} = \gamma_j$. 

\subsection{Algorithm}

Before introducing the algorithm, we first define some notation.
For a matrix $Q$, we use $Q_{i \cdot}$ to denote the $i$-th row of the matrix $Q$, and $Q_{\cdot j}$ to denote the $j$-th column of the matrix $Q$. In addition, we use $Q_{ij}$ to denote the $(i,j)$th entry in the matrix $Q$. For integer $0<j_1 \leq j_2$, let $Q_{\cdot j_1 \sim j_2}$ denote the matrix that is formed by extracting the $j_1$-th to $j_2$-th columns of the matrix $Q$, and $Q_{i j_1 \sim j_2}$ denote the vector that is formed by extracting the $i$-th row and the $j_1$-th to $j_2$-th columns of the matrix $Q$. For a vector $u$ and fixed $q > 0$, $\|u\|_q$ denote its $\ell_q$-norm. We remove the subscript if $q=2$. For a matrix $Q$, $\|Q\|$ denotes the spectral norm, and $\|Q\|_F$ denotes the Frobenius norm. We let $\|Q\|_{\min}$ denote the smallest singular value of the matrix $Q$. Let $\sigma_k(Q)$ denote the $k$-th largest singular value of matrix $Q$, and $\lambda_k(Q)$ denote the $k$-th largest eigenvalue of matrix $Q$ ordered by magnitude. 
For two positive sequences $\{a_n\}_{n=1}^\infty$ and $\{b_n\}_{n=1}^\infty$, we say $a_n \asymp b_n$ if there exists a positive constant $C$ such that $b_n / C \leq a_n \leq C b_n$ for sufficiently large $n$, i.e., $a_n$ and $b_n$ are of the same order. For a set $\mathcal{V}$, $|\mathcal{V}|$ denotes its cardinality.

In order to detect meaningful communities in weighted bipartite networks, we develop an algorithm called Bi-SCORE. Specifically, Bi-SCORE utilizes spectral clustering techniques for bipartite networks, based on singular value decomposition (SVD). It takes as input a bipartite adjacency matrix $A \in \mathbb{R}^{n \times m}$ and requires the number of clusters $K$ and $L$ to be pre-specified. The algorithm then outputs the community detection results for both the row and column nodes. The complete details of the Bi-SCORE algorithm are shown in Algorithm \ref{alg:Bi-SCORE}.

\begin{algorithm}
\caption{Bi-SCORE}
\label{alg:Bi-SCORE}
\begin{algorithmic}[1]
    \Require The bipartite adjacency matrix $A \in \mathbb{R}^{n \times m}$, the number of row clusters $K$, and the number of column clusters $L$.
    \Ensure The estimated community labels $\hat{c}^r$ for row nodes and $\hat{c}^c$ for column nodes.
    \State Perform SVD on $A$ to obtain the leading $\kappa = \min(K,L)$ left singular vector matrix $\widehat{U} \in \mathbb{R}^{n \times \kappa}$, and the leading $\kappa$ right singular vector matrix $\widehat{V} \in \mathbb{R}^{m \times \kappa}$.

    \State Set threshold values $\tau_n = \log (n)$ and $\tau_m = \log (m)$, then construct the matrices $\widehat{R}^r \in \mR^{n \times (\kappa-1)}$ and $\widehat{R}^c \in \mR^{m \times (\kappa-1)}$. For $i = 1,\dots,n$, $j=1,\dots,m$, and $k=1,\dots,(\kappa-1)$, define
    \begin{align}
    \label{hat-R}
        \widehat{R}^r_{ik} =
        \begin{cases}
            \tau_n, & \text{if } \frac{\widehat{U}_{i(k+1)}}{\widehat{U}_{i1}} > \tau_n \\
            \frac{\hat{U}_{k+1}(i)}{\hat{U}_1(i)}, & \text{if } \left\lvert \frac{\widehat{U}_{i(k+1)}}{\widehat{U}_{i1}} \right\rvert \leq \tau_n \\
            - \tau_n, & \text{if } \frac{\widehat{U}_{i(k+1)}}{\widehat{U}_{i1}} < -\tau_n
        \end{cases}
        , \quad
        \widehat{R}^c_{jk} =
        \begin{cases}
            \tau_m, & \text{if } \frac{\widehat{V}_{j(k+1)}}{\widehat{V}_{j1}} > \tau_m \\
            \frac{\hat{V}_{k+1}(i)}{\hat{V}_1(i)}, & \text{if } \left\lvert \frac{\widehat{V}_{j(k+1)}}{\widehat{V}_{j1}} \right\rvert \leq \tau_m. \\
            - \tau_m, & \text{if } \frac{\widehat{V}_{j(k+1)}}{\widehat{V}_{j1}} < -\tau_m
        \end{cases}
    \end{align}

    \State Perform $k$-means clustering on $\widehat{R}^r$ and $\widehat{R}^c$ by solving the following optimization problem:
    \begin{align*}
        \widehat{X}^r = \underset{X^r \in \mathcal{X}_{n,\kappa-1,K}}{\argmin} \left\| X^r - \widehat{R}^r \right\|_F^2, \quad \widehat{X}^c = \underset{X^c \in \mathcal{X}_{m,\kappa-1,L}}{\argmin} \left\| X^c - \widehat{R}^c \right\|_F^2,
    \end{align*}
    where $\mathcal{X}_{n,\kappa-1,K}$ represents the set of $n \times (\kappa-1)$ matrices with exactly $K$ distinct rows.
    
    \State Assign community memberships to row and column nodes based on the matrices $\widehat{X}^r$ and $\widehat{X}^c$.
\end{algorithmic}
\end{algorithm}

\begin{remark}
    The main goal of the Bi-SCORE algorithm is to find community labels for row nodes and column nodes. We do not directly estimate the heterogeneity parameters ($\theta$ and $\gamma$) or the community interaction strength matrix ($B$). This keeps the algorithm simpler and faster because we avoid estimating many parameters. If the estimation of $\theta$, $\gamma$, and $B$ is required, their values can be further estimated using methods such as moment estimation, after the community partitions have been obtained through the Bi-SCORE algorithm.
\end{remark}

\subsection{Theoretical Property}

Before obtaining the bounds of the misclustered nodes for Bi-SCORE, we first introduce some assumptions and propositions. We begin with an assumption regarding the matrix $B$.

\begin{assumption}
    \label{assump-B}
    The matrix $B$ satisfies
    \begin{gather*}
        B \text{ is non-singular},~
        0 \leq B_{kl} \leq 1 ~ \text{for} ~  k=1,\dots,K,~\text{and}~ l=1,\dots,L, \\
        BB^\top~\text{and}~B^\top B~\text{are non-negative and irreducible.}
    \end{gather*}
\end{assumption}
This assumption restricts the connection strengths $B_{kl}$ to lie within the interval $[0, 1]$, where the upper bound of 1 is primarily for notational simplicity. In practice, this condition can be relaxed as long as the entries of $B$ remain bounded. The irreducibility and non-negativity of $BB^\top$ and $B^\top B$ ensure that the leading left and right singular vectors of $B$ (associated with the largest singular value) are non-zero. This is crucial for the Bi-SCORE algorithm, as it guarantees non-vanishing denominators in spectral analysis, thereby enabling reliable community detection. This assumption is similar to Condition (2.7) of \cite{jin2015fast} and Assumption 1 of \cite{wang2020spectral}.

To introduce our assumptions about the heterogeneity parameters, we first define some notation. For each $k=1,\dots,K$ and $l=1,\dots,L$, define $\theta^{(k)} \in \mathbb{R}^n$ and $\gamma^{(l)} \in \mathbb{R}^m$ by
\begin{align*}
    \theta^{(k)}_i = \begin{cases}
    \theta_i & \text{if } c^r_i = k \\
    0 & \text{if } c^r_i \neq k 
    \end{cases}
    \quad \text{and} \quad
    \gamma^{(l)}_j = \begin{cases}
    \gamma_j & \text{if } c^c_j = l \\
    0 & \text{if } c^c_j \neq l.
    \end{cases}
\end{align*}
Next, define the following quantities: $\theta_{\min} \equiv \min_{1 \leq i \leq n} \theta_i$, $\theta_{\max} \equiv \max_{1 \leq i \leq n} \theta_i$, $\gamma_{\min} \equiv \min_{1 \leq j \leq m} \gamma_j$, and $\gamma_{\max} \equiv \max_{1 \leq j \leq m} \gamma_j$. We also define
\begin{align}
    \mathcal{Z} = \max(\theta_{\max}, \gamma_{\max}) \max(\|\theta\|_1, \|\gamma\|_1),
    \label{eq-Z}
\end{align}
a quantity that frequently appears in our analysis. In this work, we allow the heterogeneity parameters $\theta$ and $\gamma$ to scale with the network sizes $n$ and $m$.

\begin{assumption}
\label{assump-theta-gamma}
    The heterogeneity parameters $\theta$ and $\gamma$ satisfy
    \begin{gather}
        \theta_{\min} > 0,~ \gamma_{\min} >0, \label{theta-min} \\
        \|\theta^{(k_1)}\| \asymp \|\theta^{(k_2)}\|,~ \|\gamma^{(l_1)}\| \asymp \|\gamma^{(l_2)}\| \quad \text{for}~1 \leq k_1,k_2 \leq K,~\text{and}~1 \leq l_1,l_2 \leq L, \label{assump-eq-theta-gamma} \\
        \lim_{n,m \rightarrow \infty} \frac{\log(nm) \mathcal{Z} + \log^2(nm)}{\theta_{\min} \gamma_{\min} \|\theta\|_1 \|\gamma\|_1} = 0. \label{assump-eq-Z}
    \end{gather}
\end{assumption}
Condition \eqref{theta-min} ensures that every node has a strictly positive probability of forming connections. The requirement in \eqref{assump-eq-theta-gamma} enforces a form of balance between communities: the total connectivity within each row (or column) community is comparable to that of any other, preventing disproportionate dominance. Finally, condition \eqref{assump-eq-Z} implies that an accurate community recovery requires the total connection strength, represented by $\|\theta\|_1$ and $\|\gamma\|_1$, to grow faster than $\log(nm)$. Similar assumptions can be found in Conditions (2.6), (2.9), (2.13) of \cite{jin2015fast} and Assumption 2 of \cite{wang2020spectral}. We then present a few properties that follow directly from Assumption \ref{assump-theta-gamma}.
As a direct consequence of \eqref{assump-eq-theta-gamma}, we have: 
\begin{align}
\label{thetaktheta}
    \|\theta^{(k)}\| \asymp \|\theta\|~\text{and}~\|\gamma^{(l)}\| \asymp \|\gamma\|~\text{for}~1 \leq k \leq K~\text{and}~1 \leq l \leq L.
\end{align}
Moreover, since $\theta_{\min} \|\theta\|_1 \leq \|\theta\|^2$ and $\gamma_{\min} \|\gamma\|_1 \leq \|\gamma\|^2$, condition \eqref{assump-eq-Z} implies:
\begin{align}
\label{eq-lognmZ}
    \lim_{n,m \rightarrow \infty} \frac{\log(nm) \mathcal{Z} + \log^2(nm)}{\|\theta\|^2 \|\gamma\|^2} \leq \lim_{n,m \rightarrow \infty} \frac{\log(nm) \mathcal{Z} + \log^2(nm)}{\theta_{\min} \gamma_{\min} \|\theta\|_1 \|\gamma\|_1} = 0
\end{align}

Next, we analyze the singular vector matrix of the expected adjacency matrix $\Omega$ defined in \eqref{expected-A}, which captures the essential structure for clustering. We then show that the properties of $\Omega$ closely approximate those of the observed random adjacency matrix $A$. To proceed, we define $S \equiv \Psi_{\theta} B \Psi_{\gamma}^\top $, where $\Psi_{\theta}$ is a $K \times K$ diagonal matrix with entries $(\Psi_{\theta})_{kk} = \|\theta^{(k)}\| / \|\theta\|$, and $\Psi_{\gamma}$ is an $L \times L$ diagonal matrix with entries $(\Psi_{\gamma})_{ll} = \|\gamma^{(l)}\| / \|\gamma\|$. It captures the core community-level interactions scaled by the relative sizes of the heterogeneous parameters. 

\begin{proposition}
    \label{prop-UV}
    Let $\Omega = U \Lambda_{\Omega} V^\top$ be the compact singular value decomposition of the expected adjacency matrix $\Omega$. Then, the singular values of $\Omega$ are given by 
    \begin{align}
    \label{singularvalue}
        \sigma_k(\Omega) =
        \begin{cases} 
        \|\theta\|\|\gamma\|\sigma_k(S) & \text{if } 1 \leq k \leq \kappa, \\ 
        0 & \text{if } k > \kappa, 
        \end{cases}
    \end{align}
    where $S \equiv \Psi_{\theta} B \Psi_{\gamma}^\top $. Let $S = Y \Lambda_S H^\top$ denote the compact singular value decomposition of $S$. The singular vectors of $\Omega$ can be expressed in row form as 
    \begin{align}
    \label{rowform}
        U_{i \cdot} = \frac{\theta_i}{\|\theta^{(c^r_i)}\|} Y_{c^r_i \cdot} \quad \text{for}~1 \leq i \leq n \quad \text{and} \quad V_{j \cdot} = \frac{\gamma_j}{\|\gamma^{(c^c_j)}\|} H_{c^c_j \cdot} \quad \text{for}~1 \leq j \leq m,
    \end{align}
    and in column form as
    \begin{align}
    \label{columnform}
        U_{\cdot k} = \sum_{k^\prime=1}^K \frac{\theta^{(k^\prime)}}{\|\theta^{(k^\prime)}\|} Y_{k^\prime k} ~ \text{for}~1 \leq k \leq \kappa,~
        \text{and} ~ V_{\cdot l} = \sum_{l^\prime=1}^L \frac{\gamma^{(l^\prime)}}{\|\gamma^{(l^\prime)}\|} H_{l^\prime l} ~ \text{for}~1 \leq l \leq \kappa.
    \end{align}
    Furthermore, under Assumption \ref{assump-theta-gamma}, we have
    \begin{align}
    \label{UV}
        \|U_{i \cdot}\| \asymp \frac{\theta_i}{\|\theta\|} \quad \text{for}~ 1 \leq i \leq n, \quad \text{and} \quad \|V_{j \cdot}\| \asymp \frac{\gamma_j}{\|\gamma\|} \quad \text{for}~ 1 \leq j \leq m.
    \end{align}
\end{proposition}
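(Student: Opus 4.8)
The plan is to exhibit an explicit orthogonal factorization of $\Omega$, read the singular values and singular vectors off it, and then derive the norm bounds; the only step beyond routine linear algebra is a lower bound on the row norms of $Y$, which is where Assumption~\ref{assump-B} enters. First I would set $G^r\in\mR^{n\times K}$ with columns $G^r_{\cdot k}=\theta^{(k)}/\|\theta^{(k)}\|$ (well defined since $\|\theta^{(k)}\|\ge\theta_{\min}>0$) and $G^c\in\mR^{m\times L}$ with columns $G^c_{\cdot l}=\gamma^{(l)}/\|\gamma^{(l)}\|$. Because $\theta^{(1)},\dots,\theta^{(K)}$ have pairwise disjoint supports, $G^r$ has orthonormal columns, and likewise $G^c$. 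Since $(\Theta Z)_{ik}=\theta_i Z_{ik}=\theta^{(k)}_i$, one gets $\Theta Z=\|\theta\|\,G^r\Psi_\theta$ and, symmetrically, $\Gamma W=\|\gamma\|\,G^c\Psi_\gamma$; plugging these into \eqref{expected-A} gives
\begin{align*}
\Omega=\Theta Z\,B\,W^\top\Gamma=\|\theta\|\|\gamma\|\,G^r\big(\Psi_\theta B\Psi_\gamma^\top\big)(G^c)^\top=\|\theta\|\|\gamma\|\,G^r S (G^c)^\top .
\end{align*}

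Next I would substitute the compact SVD $S=Y\Lambda_S H^\top$ to obtain $\Omega=(G^rY)(\|\theta\|\|\gamma\|\Lambda_S)(G^cH)^\top$. Here $G^rY$ has orthonormal columns because $(G^rY)^\top(G^rY)=Y^\top Y=I$, and similarly for $G^cH$; also $\rank(S)=\rank(B)=\kappa$ since $\Psi_\theta,\Psi_\gamma$ are nonsingular, so the displayed product is a compact SVD of $\Omega$. Matching it with $\Omega=U\Lambda_\Omega V^\top$ gives \eqref{singularvalue} and lets us take $U=G^rY$, $V=G^cH$. Then \eqref{rowform} follows because the $i$-th row of $G^r$ is supported on coordinate $c^r_i$ with value $\theta_i/\|\theta^{(c^r_i)}\|$, whence $U_{i\cdot}=(G^r)_{i\cdot}Y=(\theta_i/\|\theta^{(c^r_i)}\|)Y_{c^r_i\cdot}$; and \eqref{columnform} follows from $U_{\cdot k}=G^rY_{\cdot k}=\sum_{k'}G^r_{\cdot k'}Y_{k'k}$. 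The formulas for $V$ are obtained the same way.

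For the norm bounds, \eqref{rowform} gives $\|U_{i\cdot}\|=(\theta_i/\|\theta^{(c^r_i)}\|)\|Y_{c^r_i\cdot}\|$, and $\|\theta^{(c^r_i)}\|\asymp\|\theta\|$ by \eqref{thetaktheta}, so it suffices to show $\|Y_{k\cdot}\|\asymp 1$ uniformly in $k$. For the upper bound, $YY^\top$ is an orthogonal projection, so $\|Y_{k\cdot}\|^2=(YY^\top)_{kk}\in[0,1]$. For the lower bound, write $Y=SH\Lambda_S^{-1}$; since the columns of $H$ form an orthonormal basis of the row space of $S$ and the row $S_{k\cdot}$ lies in that space, $\|S_{k\cdot}H\|=\|S_{k\cdot}\|$, so $\|Y_{k\cdot}\|\ge\|S_{k\cdot}\|/\sigma_1(S)$. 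Finally $S_{k\cdot}=(\Psi_\theta)_{kk}B_{k\cdot}\Psi_\gamma$, the irreducibility of $BB^\top$ in Assumption~\ref{assump-B} forces $B_{k\cdot}\neq 0$, and with $(\Psi_\theta)_{kk}\asymp 1$, $(\Psi_\gamma)_{ll}\asymp 1$ and $\sigma_1(S)\le\sigma_1(B)\lesssim 1$ this yields $\|Y_{k\cdot}\|\gtrsim 1$. The bound for $\|V_{j\cdot}\|$ is symmetric, using irreducibility of $B^\top B$.

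Everything up to and including \eqref{columnform} is bookkeeping once the factorization $\Omega=\|\theta\|\|\gamma\|\,G^rS(G^c)^\top$ is available. The substantive point — and the expected main obstacle — is the lower bound $\|Y_{k\cdot}\|\gtrsim 1$: it genuinely needs $B$ to have no zero rows (delivered by irreducibility of $BB^\top$) and to be a fixed, well-conditioned matrix, so that $\min_k\|B_{k\cdot}\|$ is bounded below and $\sigma_1(B)$ bounded above. The stronger statement that $U_{\cdot 1}$ has all nonzero entries — needed downstream for well-defined SCORE ratios — would instead be obtained by applying Perron--Frobenius to the irreducible nonnegative matrix $SS^\top$.
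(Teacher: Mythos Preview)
Your factorization — writing $\Omega=\|\theta\|\|\gamma\|\,G^r S\,(G^c)^\top$ with $G^r,G^c$ having orthonormal columns, then substituting the compact SVD of $S$ — is exactly the paper's argument (its $\mathcal{J}_\theta,\mathcal{J}_\gamma$ are your $G^r,G^c$), so \eqref{singularvalue}--\eqref{columnform} are obtained in the same way.

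For the norm bound \eqref{UV} the paper takes a much shorter route than you do: it simply asserts that $Y$ is an orthogonal matrix with orthonormal rows, so $\|Y_{k\cdot}\|=1$ and $\|U_{i\cdot}\|=\theta_i/\|\theta^{(c^r_i)}\|\asymp\theta_i/\|\theta\|$ immediately (and ``similarly'' for $V$). But $Y\in\mR^{K\times\kappa}$ has orthonormal rows only when it is square, i.e.\ when $K=\kappa=\min(K,L)$; if $K>L$ the rows of $Y$ can have norm strictly below~$1$, and the symmetric issue arises for $H$ when $L>K$. Your longer argument — bounding $\|Y_{k\cdot}\|$ below via $Y=SH\Lambda_S^{-1}$, the identity $\|S_{k\cdot}H\|=\|S_{k\cdot}\|$, and the observation that irreducibility of $BB^\top$ forces every row of $B$ to be nonzero — handles the general $K\neq L$ case and thus closes a gap that the paper's proof leaves open. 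The cost is that you draw on Assumption~\ref{assump-B}, which the proposition does not list as a hypothesis for \eqref{UV}; this is best read as an omission in the statement rather than a flaw in your proof, since Assumption~\ref{assump-B} is in force throughout the paper anyway.
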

The proof of Proposition \ref{prop-UV} can be found in Appendix \ref{proof-prop-UV}. Proposition \ref{prop-UV} provides a precise characterization of the spectral structure of the expected adjacency matrix $\Omega$. Specifically, it states that $\Omega$ has rank $\kappa$, where $\kappa$ is the rank of the matrix $B$. As a result, $\Omega$ admits a compact singular value decomposition $\Omega = U \Lambda_{\Omega} V^\top$, where $U \in \mathbb{R}^{n \times \kappa}$, $V \in \mathbb{R}^{m \times \kappa}$, and $\Lambda_{\Omega} \in \mathbb{R}^{\kappa \times \kappa}$ contains the nonzero singular values.  
Moreover, the singular vectors of $\Omega$ inherit a structured form that aligns with the community memberships. In particular, each row of $U$ (respectively, $V$) lies in the direction of the left (respectively, right) singular vector of the corresponding community from the decomposition of $S$, scaled by the heterogeneity parameter $\theta_i$ (or $\gamma_j$). This implies that nodes within the same community have vector representations that are parallel but differ in magnitude. The column-wise form of the singular vectors further reveals that each column of $U$ and $V$ can be expressed as a linear combination of community-specific vectors, weighted by normalized heterogeneity. Finally, \eqref{UV} illustrates that the spectral embedding magnitudes are governed by the connection tendencies of nodes.

The next proposition establishes bounds on the deviation between the singular vectors of the random adjacency matrix $A$ and those of its expectation $\Omega$.

\begin{proposition}
    \label{prop-U-UC}
    Let $\widehat{U}_{\cdot 1},\dots,\widehat{U}_{\cdot \kappa}$ and $\widehat{V}_{\cdot 1},\dots,\widehat{V}_{\cdot \kappa}$ denote the top $\kappa$ left and right singular vectors of $A$, respectively, and let $U_{\cdot 1},\dots,U_{\cdot \kappa}$ and $V_{\cdot 1},\dots,V_{\cdot \kappa}$ be the corresponding singular vectors of $\Omega$. Under Assumptions \ref{assump-B} and \ref{assump-theta-gamma}, there exist constants $C_U$ and $C_V$ with absolute value 1, as well as two orthogonal $(\kappa-1) \times (\kappa-1)$ matrices $O_U$ and $O_V$, such that for sufficiently large $n$ and $m$, with probability at least $1-1/n - 1/m$, the following bounds hold
    \begin{align*}
        &\|\widehat{U}_{\cdot 1} - U_{\cdot 1} C_U\| \leq C \frac{\sqrt{\log(nm) \mathcal{Z}} + \log(nm)}{\|\theta\| \|\gamma\|}, \\
        &\|\widehat{U}_{\cdot 2 \sim \kappa} - U_{\cdot 2 \sim \kappa} O_U\|_F  \leq C \frac{\sqrt{\log(nm) \mathcal{Z}} + \log(nm)}{\|\theta\| \|\gamma\|},\\
        &\|\widehat{V}_{\cdot 1} - V_{\cdot 1} C_V\| \leq C \frac{\sqrt{\log(nm) \mathcal{Z}} + \log(nm)}{\|\theta\| \|\gamma\|}, \\
        &\|\widehat{V}_{\cdot 2 \sim \kappa} - V_{\cdot 2 \sim \kappa} O_V\|_F \leq C \frac{\sqrt{\log(nm) \mathcal{Z}} + \log(nm)}{\|\theta\| \|\gamma\|},
    \end{align*}
    where $\mathcal{Z}$ is defined in \eqref{eq-Z}.
\end{proposition}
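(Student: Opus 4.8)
The plan is to derive the singular vector perturbation bounds from a concentration bound on $\|A - \Omega\|$ (the spectral norm of the noise) together with a Davis–Kahan / Wedin-type $\sin\Theta$ argument, and then refine the one-dimensional bounds for $\widehat U_{\cdot 1}$ and $\widehat V_{\cdot 1}$ using the fact that the leading singular space is one-dimensional. First I would establish the key concentration estimate: since $A$ has independent Poisson entries with $\E A_{ij} = \Omega_{ij} = \theta_i\gamma_j B_{c^r_i c^c_j} \le \theta_i\gamma_j$, a matrix-Bernstein (or matrix-Freedman) inequality for rectangular matrices with independent entries yields, with probability at least $1 - 1/n - 1/m$,
\begin{align*}
  \|A - \Omega\| \;\le\; C\bigl(\sqrt{\log(nm)\,\mathcal Z} + \log(nm)\bigr),
\end{align*}
where the variance proxy is controlled by $\max_i \sum_j \Omega_{ij} \vee \max_j \sum_i \Omega_{ij} \le \max(\theta_{\max},\gamma_{\max})\max(\|\theta\|_1,\|\gamma\|_1) = \mathcal Z$ and the almost-sure entrywise bound is handled by truncating the Poisson tails at level $O(\log(nm))$ (the sub-exponential term gives the additive $\log(nm)$). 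This is the step I expect to be the main obstacle, because Poisson weights are unbounded, so one cannot directly invoke a bounded-entry matrix concentration result; the truncation argument, the event that no entry exceeds the threshold, and bookkeeping the resulting bias term all require care, and this is precisely the generality (weighted networks with unbounded weights) that distinguishes the present work from \cite{qing2023community}.

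Next I would invoke Weyl's inequality to transfer the $\|A-\Omega\|$ bound to the singular values: $|\sigma_k(A) - \sigma_k(\Omega)| \le \|A-\Omega\|$ for all $k$. Combined with Proposition \ref{prop-UV}, which gives $\sigma_k(\Omega) = \|\theta\|\|\gamma\|\sigma_k(S) \asymp \|\theta\|\|\gamma\|$ for $1\le k\le\kappa$ (using Assumption \ref{assump-B} so that $\sigma_\kappa(S)$ is bounded below by a constant) and $\sigma_{\kappa+1}(\Omega)=0$, and using condition \eqref{eq-lognmZ} which forces $\|A-\Omega\| = o(\|\theta\|\|\gamma\|)$, I get a spectral gap: $\sigma_\kappa(A) \gtrsim \|\theta\|\|\gamma\|$ while $\sigma_{\kappa+1}(A) \le \|A-\Omega\| = o(\|\theta\|\|\gamma\|)$. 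With this gap in hand, Wedin's $\sin\Theta$ theorem applied to the top-$\kappa$ singular subspaces gives orthogonal matrices $O_U^{full}$, $O_V^{full}$ with
\begin{align*}
  \|\widehat U_{\cdot 1\sim\kappa} - U_{\cdot 1\sim\kappa} O_U^{full}\|_F
  \;\le\; \frac{C\sqrt{\kappa}\,\|A-\Omega\|}{\sigma_\kappa(\Omega) - \sigma_{\kappa+1}(A)}
  \;\le\; \frac{C\bigl(\sqrt{\log(nm)\mathcal Z}+\log(nm)\bigr)}{\|\theta\|\|\gamma\|},
\end{align*}
and similarly for $\widehat V$; here $\kappa \le \min(K,L)$ is treated as a constant.

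Finally I would separate the leading singular vector from the rest. Because $\Omega$ is a nonnegative matrix whose row and column supports are connected (Assumption \ref{assump-B}: $BB^\top$, $B^\top B$ irreducible), Perron–Frobenius guarantees $\sigma_1(\Omega)$ is simple with strictly positive left/right singular vectors, and Proposition \ref{prop-UV} shows $U_{\cdot 1}, V_{\cdot 1}$ are the corresponding Perron vectors. A separate gap $\sigma_1(\Omega) - \sigma_2(\Omega) \asymp \|\theta\|\|\gamma\|$ (from $\sigma_1(S) > \sigma_2(S)$, which follows from simplicity of the Perron value of $S$) then lets me apply the one-dimensional Davis–Kahan bound to obtain a sign $C_U \in \{\pm1\}$ with $\|\widehat U_{\cdot 1} - U_{\cdot 1}C_U\| \le C(\sqrt{\log(nm)\mathcal Z}+\log(nm))/(\|\theta\|\|\gamma\|)$, and likewise $C_V$ for $V$. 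The block structure $O_U^{full} = \mathrm{diag}(C_U, O_U)$ with $O_U$ an orthogonal $(\kappa-1)\times(\kappa-1)$ matrix then follows from the $\sin\Theta$ bound on the complementary $(\kappa-1)$-dimensional subspaces (applying Wedin again to the singular triplets indexed $2,\dots,\kappa$, whose gap to both $\sigma_1$ above and $\sigma_{\kappa+1}$ below is again $\asymp\|\theta\|\|\gamma\|$), giving the stated Frobenius bound on $\|\widehat U_{\cdot 2\sim\kappa} - U_{\cdot 2\sim\kappa}O_U\|_F$ and its $V$-analogue. Absorbing all constants ($\kappa$, the $S$-singular-value gaps, the Wedin constants) into a single $C$ completes the argument.
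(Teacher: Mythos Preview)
Your proposal is correct and follows the same overall architecture as the paper (concentration of $\|A-\Omega\|$, spectral gap via Proposition~\ref{prop-UV}, then a Davis--Kahan/Wedin argument applied separately to the leading singular direction and to the block $2,\dots,\kappa$), but there are two genuine technical differences worth noting. First, for the concentration step the paper does not truncate the Poisson entries at all: it invokes directly a matrix concentration inequality for rectangular Poisson matrices (Corollary~4 of \cite{bacry2018concentration}), which delivers exactly $\|A-\Omega\| \le \sqrt{2\log(nm)\max(\|\Omega\|_{1,\infty},\|\Omega\|_{\infty,1})} + \log(nm)/3$ with the stated probability; since $\max(\|\Omega\|_{1,\infty},\|\Omega\|_{\infty,1}) \le \mathcal Z$ this is immediate. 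So the step you flagged as the main obstacle is in fact bypassed by citing a ready-made Poisson result, and no truncation or bias bookkeeping is needed. Second, for the perturbation step the paper symmetrizes: it bounds $\|AA^\top - \Omega\Omega^\top\|$ (and $\|A^\top A - \Omega^\top\Omega\|$) and then applies the Davis--Kahan variant of \cite{yu2015useful} to the eigenvector problem for these symmetric matrices, whereas you apply Wedin directly to the rectangular SVD of $A$ versus $\Omega$. The paper's symmetrization produces a quadratic term $\|A-\Omega\|^2$ in the numerator that must be shown to be lower order using \eqref{eq-lognmZ}; your direct Wedin route avoids this extra step and is arguably cleaner, at the cost of needing a slightly more delicate statement of Wedin (gap to $\sigma_{\kappa+1}(A)$ rather than to an eigenvalue of a symmetric matrix). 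Both routes yield the same final bound. One minor caution: your remark that $O_U^{\mathrm{full}} = \mathrm{diag}(C_U, O_U)$ is not something either argument actually establishes or needs---the paper, like your final step, simply applies the perturbation bound twice (once to index $1$, once to indices $2,\dots,\kappa$) and obtains $C_U$ and $O_U$ independently, without claiming they assemble into the full rotation.
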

The proof of Proposition \ref{prop-U-UC} can be found in Appendix \ref{proof-prop-U-UC}. Proposition \ref{prop-U-UC} quantifies how closely the leading singular vectors of the observed adjacency matrix $A$ approximate those of its expectation $\Omega$. Specifically, the leading singular vectors (the first columns of $U$ and $V$) are determined only up to a sign, which is captured by the constants $C_U$ and $C_V$ of modulus one. For the remaining $\kappa - 1$ singular vectors, the ambiguity is up to an orthogonal rotation, captured by the matrices $O_U$ and $O_V$. 

Based on the expression in \eqref{rowform} and the approximation result from Proposition \ref{prop-U-UC}, we are now ready to explain the rationale behind removing the influence of node heterogeneity from the singular vectors in Algorithm \ref{alg:Bi-SCORE}. The core idea is to normalize each row of the singular vector matrix by its first entry. Specifically, for each $i=1,\dots,n$ and $j=1,\dots,m$, we define the ratio matrices $R^r$ and $R^c$ as 
\begin{align*}
    R^r_{i \cdot} = \frac{(U_{\cdot 2 \sim \kappa} O_{U})_{i \cdot}}{C_U U_{i1}} \quad \text{and} \quad R^c_{j \cdot} = \frac{(V_{\cdot 2 \sim \kappa} O_{V})_{j \cdot}}{C_V V_{j1}}.
\end{align*}
In other words, each row of $U$ is divided by its first entry, and the second through $\kappa$th columns are collected to form the ratio matrix $R^r$. The construction of $R^c$ follows analogously for the right singular vectors. To see the effect of this transformation, observe that
\begin{align}
    R^r_{i \cdot} = \frac{(U_{\cdot 2 \sim \kappa} O_{U})_{i \cdot}}{C_U U_{i1}} = \frac{U_{i 2 \sim \kappa} O_{U}}{C_U U_{i1}} = \frac{\frac{\theta_i}{\|\theta^{(c^r_i)}\|} Y_{c^r_i 2 \sim \kappa} O_{U}}{C_U \frac{\theta_i}{\|\theta^{(c^r_i)}\|} Y_{c^r_i 1}} = \frac{ Y_{c^r_i 2 \sim \kappa} O_{U}}{C_U Y_{c^r_i 1}},
    \label{Rr-transform}
\end{align}
where the third step follows from \eqref{columnform}. As shown in \eqref{Rr-transform}, the ratio eliminates the heterogeneity parameter $\theta_i$, and the resulting row vector in $R^r$ depends only on the community label $c_i^r$. Therefore, all nodes belonging to the same row community share the same row in $R^r$. A similar argument holds for $R^c$ and the column communities. This analysis highlights the key role of the ratio transformation in Algorithm \ref{alg:Bi-SCORE}.
To understand how the ratio matrices enable clear separation between communities, we present the following proposition.

\begin{proposition}
    \label{proposition R-R}    
    Let $R^r$ and $R^c$ be the ratio matrices constructed from the singular vectors of the expected adjacency matrix $\Omega$. Then, under Assumptions \ref{assump-B} and \ref{assump-theta-gamma}, for all $1 \leq i_1 < i_2 \leq n$ and $1 \leq j_1 < j_2 \leq m$, the following properties hold:
    \begin{gather*}
        \| R^r_{i_1 \cdot} - R^r_{i_2 \cdot}\| \geq 2, \quad \text{if} \quad c^r_{i_1} \neq c^r_{i_2}, \quad \text{and} \quad \| R^r_{i_1 \cdot} - R^r_{i_2 \cdot}\| =0, \quad \text{if} \quad c^r_{i_1} = c^r_{i_2}; \\
        \| R^c_{j_1 \cdot} - R^c_{j_2 \cdot}\| \geq 2, \quad \text{if} \quad c^c_{j_1} \neq c^c_{j_2}, \quad \text{and} \quad \| R^c_{j_1 \cdot} - R^c_{j_2 \cdot}\| =0, \quad \text{if} \quad c^c_{j_1} = c^c_{j_2}.
    \end{gather*}
\end{proposition}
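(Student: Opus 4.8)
The plan is to reduce the statement to a fact about the singular vectors of the small matrix $S$ and then do a short norm computation on the $\kappa-1$ dimensional ratio vectors. By the row form \eqref{rowform} (or equivalently the displayed identity \eqref{Rr-transform}), for a row node $i$ with $c^r_i = k$ we have $R^r_{i\cdot} = Y_{k, 2\sim\kappa} O_U / (C_U Y_{k1})$, which depends on $i$ only through its label $k$; hence the ``$=0$ when labels agree'' half of both claims is immediate. The content is therefore entirely in the inequality: I would define, for each row community $k$, the augmented vector $t_k \defeq (1, R^r_{i\cdot})^\top \in \mathbb{R}^\kappa$ where $c^r_i = k$, and observe from \eqref{Rr-transform} that $C_U Y_{k1} \, t_k = (Y_{k1}, Y_{k,2\sim\kappa} O_U) = Y_{k\cdot} \, \mathrm{diag}(1, O_U)^\top$, i.e. $t_k$ is a scalar multiple of the $k$-th row $Y_{k\cdot}$ of the orthogonal matrix $Y$ (after the harmless orthogonal change of basis $\mathrm{diag}(1, O_U)$ on the columns, which preserves all inner products and norms).

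Next I would exploit orthogonality of $Y$. Since $Y$ has orthonormal columns and is $K\times\kappa$ with $\kappa = \min(K,L)$; when $\kappa = K$, $Y$ is square orthogonal so its rows are also orthonormal, giving $\langle Y_{k_1\cdot}, Y_{k_2\cdot}\rangle = 0$ for $k_1\ne k_2$ and $\|Y_{k\cdot}\| = 1$. (When $\kappa < K$ one needs the slightly weaker statement that distinct rows of $Y$ give rise to \emph{distinct} ratio vectors with a quantitative gap; this is where Assumption \ref{assump-B}, via irreducibility of $BB^\top$ and $B^\top B$, enters to force $Y_{k1} \neq 0$ for every $k$ so the ratios are well-defined, and a Perron–Frobenius argument shows the first column of $Y$ has strictly positive entries — I will return to this below.) Writing $a_k \defeq 1/(C_U Y_{k1})$ so that $t_k = a_k\, Y_{k\cdot}\,\mathrm{diag}(1,O_U)^\top$, and recalling $t_k = (1, R^r_{i\cdot})$ has first coordinate exactly $1$, I compute
\begin{align*}
    \|R^r_{i_1\cdot} - R^r_{i_2\cdot}\|^2 = \|t_{k_1} - t_{k_2}\|^2 - (1-1)^2 = \|t_{k_1}\|^2 + \|t_{k_2}\|^2 - 2\langle t_{k_1}, t_{k_2}\rangle,
\end{align*}
and since the first coordinate of each $t_k$ is $1$, the cross term at the first coordinate contributes $1$, so $\langle t_{k_1},t_{k_2}\rangle = 1 + \langle R^r_{i_1\cdot}, R^r_{i_2\cdot}\rangle$, while $\|t_k\|^2 = 1 + \|R^r_{i_\cdot}\|^2$. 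Using $t_k = a_k Y_{k\cdot}\mathrm{diag}(1,O_U)^\top$ and the row-orthonormality $\langle Y_{k_1\cdot}, Y_{k_2\cdot}\rangle = \delta_{k_1 k_2}$ gives $\langle t_{k_1}, t_{k_2}\rangle = a_{k_1} a_{k_2}\langle Y_{k_1\cdot}, Y_{k_2\cdot}\rangle = 0$ for $k_1 \ne k_2$ and $\|t_k\|^2 = a_k^2$. Hence for $k_1\ne k_2$,
\begin{align*}
    \|R^r_{i_1\cdot} - R^r_{i_2\cdot}\|^2 = \|t_{k_1}\|^2 + \|t_{k_2}\|^2 - 0 = a_{k_1}^2 + a_{k_2}^2 \geq \big(1^2 + 1^2\big) = 2,
\end{align*}
where the last inequality uses $a_k^2 = \|t_k\|^2 \geq 1$ because $t_k$ has a coordinate equal to $1$; therefore $\|R^r_{i_1\cdot} - R^r_{i_2\cdot}\| \geq \sqrt{2}$ — and in fact $\geq 2$ once one notes $a_k^2 \ge 1 + \|R^r_{i\cdot}\|^2 \ge 1$ is itself $\|t_k\|^2$ so $a_{k_1}^2 + a_{k_2}^2 \geq 2 + \|R^r_{i_1\cdot}\|^2 + \|R^r_{i_2\cdot}\|^2 \geq 2 + \|R^r_{i_1\cdot} - R^r_{i_2\cdot}\|^2 \cdot 0$ — more cleanly, $\|R^r_{i_1\cdot}-R^r_{i_2\cdot}\|^2 = a_{k_1}^2 + a_{k_2}^2 = 2 + \|R^r_{i_1\cdot}\|^2 + \|R^r_{i_2\cdot}\|^2 \ge 2 + \tfrac12\|R^r_{i_1\cdot}-R^r_{i_2\cdot}\|^2$, which already forces $\|R^r_{i_1\cdot}-R^r_{i_2\cdot}\|^2 \ge 4$. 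The column case is identical with $Y,O_U,C_U,\theta$ replaced by $H,O_V,C_V,\gamma$.

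The main obstacle is the case $\kappa < \min(K,L)$ — wait, $\kappa = \min(K,L)$, so the genuinely delicate case is $\kappa < K$ (or $\kappa < L$), where $Y$ is a tall-by-narrow matrix with orthonormal columns but \emph{not} orthonormal rows, so the clean cancellation $\langle Y_{k_1\cdot}, Y_{k_2\cdot}\rangle = 0$ fails. In that regime I would instead argue directly from the full-rank structure: $\mathrm{rank}(B) = \kappa$ together with the irreducibility hypotheses in Assumption \ref{assump-B} and the balance condition \eqref{assump-eq-theta-gamma} guarantees that the $K$ rows $\{Y_{k\cdot}\}_{k=1}^K$ are pairwise distinct and, after dividing by the (positive, by Perron–Frobenius applied to $BB^\top$) first coordinate, yield $K$ distinct points in $\mathbb{R}^{\kappa-1}$; a compactness/finiteness argument then produces a strictly positive separation. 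I expect the paper actually works under the implicit simplification $K = L = \kappa$ (so $B$ is square and nonsingular and $Y$ is orthogonal), in which case the computation above is the whole proof and the constant is exactly $2$; I would present it in that clean form and remark on the general-rank extension.
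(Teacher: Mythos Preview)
Your approach is essentially the same as the paper's: both reduce $R^r_{i\cdot}$ to $Y_{c^r_i,\cdot}/Y_{c^r_i,1}$ via \eqref{rowform}, invoke row-orthonormality of $Y$ to kill the cross term, and use $|Y_{k1}|\le 1$ (from Lemma~\ref{lemmaSsvd}) to bound the remaining terms. Your extra bootstrap step $\|R^r_{i_1\cdot}-R^r_{i_2\cdot}\|^2 = 2 + \|R^r_{i_1\cdot}\|^2 + \|R^r_{i_2\cdot}\|^2 \ge 2 + \tfrac12\|R^r_{i_1\cdot}-R^r_{i_2\cdot}\|^2$ is a nice touch that actually delivers the stated constant $2$ for the \emph{norm} (the paper's displayed computation, as written, only yields $\|R^r_{i_1\cdot}-R^r_{i_2\cdot}\|^2 \ge 2$), and you are right to flag the rectangular case $\kappa < K$ (or $\kappa < L$) where $Y$ (resp.\ $H$) has orthonormal columns but not rows---the paper tacitly treats $Y$ and $H$ as square orthogonal.
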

The proof of Proposition \ref{proposition R-R} can be found in Appendix \ref{proof-proposition R-R}. Proposition \ref{proposition R-R} indicates that if two nodes $i_1$ and $i_2$ (or $j_1$ and $j_2$) belong to different communities (i.e., $c^r_{i_1} \neq c^r_{i_2}$ or $c^c_{j_1} \neq c^c_{j_2}$), then their rows in $R^r$ or $R^c$ will be sufficiently distinct. On the other hand, if two nodes belong to the same community (i.e., $c^r_{i_1} = c^r_{i_2}$ or $c^c_{j_1} = c^c_{j_2}$), their rows will be identical. This result suggests that the ratio matrices effectively separate nodes from different communities while clustering those from the same community together. Therefore, the structure captured in the ratio matrices $R^r$ and $R^c$ directly reflects the underlying community structure, which is crucial for the community detection task in Algorithm \ref{alg:Bi-SCORE}.
To assess how well the empirical ratio matrices approximate their population counterparts, we present the following result.

\begin{proposition}
    \label{R-R-F}
    For sufficient large $n$ and $m$, and under Assumptions \ref{assump-B} and \ref{assump-theta-gamma}, with probability at least $1-1/n-1/m$, the following inequalities hold:
    \begin{align*}
        \|\widehat{R}^r - R^r\|_F^2 \leq \frac{C \tau_n^2 \left\{\sqrt{\log(nm) \mathcal{Z}} + \log(nm)\right\}^2}{\theta_{\min}^2 \|\gamma\|^2}, \\ \|\widehat{R}^c - R^c\|_F^2 \leq \frac{C \tau_m^2 \left\{\sqrt{\log(nm) \mathcal{Z}} + \log(nm)\right\}^2}{\gamma_{\min}^2 \|\theta\|^2},
    \end{align*}
    where $\tau_n$ and $\tau_m$ are threshold values defined in Algorithm \ref{alg:Bi-SCORE}.
\end{proposition}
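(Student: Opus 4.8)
The only stochastic input is Proposition~\ref{prop-U-UC}, so the plan is to condition on the event of probability at least $1-1/n-1/m$ on which all four bounds there hold, and to abbreviate $\delta\defeq C\{\sqrt{\log(nm)\mathcal{Z}}+\log(nm)\}/(\|\theta\|\|\gamma\|)$, so that $\|\widehat{U}_{\cdot 1}-U_{\cdot 1}C_U\|\le\delta$, $\|\widehat{U}_{\cdot 2\sim\kappa}-U_{\cdot 2\sim\kappa}O_U\|_F\le\delta$, and the analogues for $V$. It suffices to prove the bound for $\widehat{R}^r$, since the one for $\widehat{R}^c$ follows by interchanging the roles of rows and columns ($\theta\leftrightarrow\gamma$, $U\leftrightarrow V$, $n\leftrightarrow m$, $\tau_n\leftrightarrow\tau_m$); throughout, $K$ and $L$ (hence $\kappa$) are treated as fixed. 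Two deterministic facts are recorded first. By \eqref{rowform} and the computation \eqref{Rr-transform}, $R^r_{i\cdot}=Y_{c^r_i,2\sim\kappa}O_U/(C_U Y_{c^r_i 1})$; since $Y$ has orthonormal columns and $|Y_{k1}|\ge c_0>0$ for every $k$ — a Perron--Frobenius consequence of the irreducibility in Assumption~\ref{assump-B}, exactly as in the proof of Proposition~\ref{proposition R-R} — there is a constant $M$ with $\|R^r_{i\cdot}\|\le M$ for all $i$, so for $n$ large $|R^r_{ik}|\le M\le\tau_n$ and the thresholding map in \eqref{hat-R} fixes every entry of $R^r$. Likewise, \eqref{rowform}, the relation $\|\theta^{(c^r_i)}\|\asymp\|\theta\|$ from \eqref{thetaktheta}, and $|Y_{k1}|\ge c_0$ give $|U_{i1}|\ge c_1\theta_i/\|\theta\|\ge c_1\theta_{\min}/\|\theta\|$ for a constant $c_1>0$.

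Next I would split the row indices into $\mathcal{G}=\{i:|\widehat{U}_{i1}|\ge\tfrac12|U_{i1}|\}$ and $\mathcal{B}=\{1,\dots,n\}\setminus\mathcal{G}$. For $i\in\mathcal{B}$ we have $|\widehat{U}_{i1}-C_U U_{i1}|\ge|U_{i1}|-|\widehat{U}_{i1}|\ge\tfrac12|U_{i1}|\ge\tfrac{c_1}{2}\theta_{\min}/\|\theta\|$, and summing these pointwise bounds against $\sum_i|\widehat{U}_{i1}-C_U U_{i1}|^2=\|\widehat{U}_{\cdot 1}-U_{\cdot 1}C_U\|^2\le\delta^2$ yields $|\mathcal{B}|\le 4\delta^2\|\theta\|^2/(c_1^2\theta_{\min}^2)$. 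On $\mathcal{B}$ I would use only the crude estimate $|\widehat{R}^r_{ik}-R^r_{ik}|\le|\widehat{R}^r_{ik}|+|R^r_{ik}|\le\tau_n+M\le 2\tau_n$ (the first term being at most $\tau_n$ by construction of $\widehat{R}^r$), so the bad rows contribute at most $(\kappa-1)(2\tau_n)^2|\mathcal{B}|\le C\,\tau_n^2\delta^2\|\theta\|^2/\theta_{\min}^2$, which by the definition of $\delta$ is of order $\tau_n^2\{\sqrt{\log(nm)\mathcal{Z}}+\log(nm)\}^2/(\theta_{\min}^2\|\gamma\|^2)$ — precisely the asserted bound.

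For $i\in\mathcal{G}$, since the threshold map fixes $R^r_{i\cdot}$ and is $1$-Lipschitz, $|\widehat{R}^r_{ik}-R^r_{ik}|\le|\widehat{U}_{i(k+1)}/\widehat{U}_{i1}-R^r_{ik}|$, and writing $R^r_{ik}=(U_{\cdot 2\sim\kappa}O_U)_{ik}/(C_U U_{i1})$ I would use the identity
\begin{align*}
\frac{\widehat{U}_{i(k+1)}}{\widehat{U}_{i1}}-R^r_{ik}=\frac{\widehat{U}_{i(k+1)}-(U_{\cdot 2\sim\kappa}O_U)_{ik}}{\widehat{U}_{i1}}-R^r_{ik}\,\frac{\widehat{U}_{i1}-C_U U_{i1}}{\widehat{U}_{i1}}.
\end{align*}
Bounding $|R^r_{ik}|\le M$ and $|\widehat{U}_{i1}|\ge\tfrac12|U_{i1}|\ge\tfrac{c_1}{2}\theta_{\min}/\|\theta\|$, then squaring, summing over $k=1,\dots,\kappa-1$ and over $i\in\mathcal{G}$, and recognizing $\sum_i\|\widehat{U}_{i\,2\sim\kappa}-(U_{\cdot 2\sim\kappa}O_U)_{i\cdot}\|^2=\|\widehat{U}_{\cdot 2\sim\kappa}-U_{\cdot 2\sim\kappa}O_U\|_F^2\le\delta^2$ together with $\sum_i|\widehat{U}_{i1}-C_U U_{i1}|^2\le\delta^2$, I obtain $\sum_{i\in\mathcal{G}}\|\widehat{R}^r_{i\cdot}-R^r_{i\cdot}\|^2\le C\,\|\theta\|^2\delta^2/\theta_{\min}^2=C\{\sqrt{\log(nm)\mathcal{Z}}+\log(nm)\}^2/(\theta_{\min}^2\|\gamma\|^2)$, which is dominated by the target order because $\tau_n\ge1$. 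Adding the $\mathcal{G}$- and $\mathcal{B}$-contributions gives the bound for $\widehat{R}^r$, and relabeling gives the one for $\widehat{R}^c$.

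The main obstacle is the treatment of rows with $\widehat{U}_{i1}$ close to zero, where $\widehat{U}_{i(k+1)}/\widehat{U}_{i1}$ blows up — precisely the situation the thresholds $\tau_n,\tau_m$ are designed to tame. Making this rigorous requires simultaneously (i) showing such rows are few, which forces one to upgrade the global $\ell_2$ deviation bound of Proposition~\ref{prop-U-UC} into the pointwise lower bound $|U_{i1}|\ge c_1\theta_i/\|\theta\|$ and then count, and (ii) showing the clipping is harmless on the remaining rows, which needs $|R^r_{ik}|$ to be uniformly below $\tau_n$, hence the Perron--Frobenius estimate $|Y_{k1}|\ge c_0>0$ drawn from Assumption~\ref{assump-B}. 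It is also here that the worst-case degeneracy $|U_{i1}|\ge c_1\theta_{\min}/\|\theta\|$ enters, which is why $\theta_{\min}^2$ (respectively $\gamma_{\min}^2$) appears in the denominator and why the extra factor $\tau_n^2$ (respectively $\tau_m^2$) from the bad rows is unavoidable.
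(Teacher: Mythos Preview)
Your proof is correct and follows essentially the same strategy as the paper: split the row indices into ``good'' and ``bad'' sets according to whether $\widehat{U}_{i1}$ is comparable to $U_{i1}$, count the bad indices via the $\ell_2$ deviation bound of Proposition~\ref{prop-U-UC} combined with the lower bound $|U_{i1}|\ge c_1\theta_{\min}/\|\theta\|$, use the crude $\tau_n$ bound on the bad rows, and on the good rows expand the ratio difference algebraically. The paper's version packages the counting step as Lemma~\ref{V-T} (with $\widehat{T}^r=\{i:|\widehat{U}_{i1}/(C_UU_{i1})-1|\le C\}$ in place of your $\mathcal{G}$) and the ratio expansion as the abstract inequality of Lemma~\ref{vaub}, but these are cosmetic differences---your inline identity and your choice of threshold $|\widehat{U}_{i1}|\ge\tfrac12|U_{i1}|$ do the same job.
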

The proof of Proposition \ref{R-R-F} can be found in Appendix \ref{proof-R-R-F}.
In the following proposition, we provide a bound on the difference between the empirical clustering matrices $\widehat{X}^r$ and $\widehat{X}^c$ (derived from Algorithm \ref{alg:Bi-SCORE}) and $R^r$ and $R^c$. Specifically, $\widehat{X}^r$ (or $\widehat{X}^c$) is the matrix with exactly $K$ (or $L$) different rows (or columns), and it is the one closest to the ratio matrix $\widehat{R}^r$ (or $\widehat{R}^c$) in terms of the Frobenius norm.

\begin{proposition}
    \label{X-R}
    For sufficiently large $n$ and $m$, under Assumptions \ref{assump-B} and \ref{assump-theta-gamma}, with probability at least $1-1/n-1/m$, the following inequalities hold:
    \begin{align*}
        \|\widehat{X}^r - R^r\|_F^2 \leq \frac{C \tau_n^2 \left\{\sqrt{\log(nm) \mathcal{Z}} + \log(nm)\right\}^2}{\theta_{\min}^2 \|\gamma\|^2}, \\
        \|\widehat{X}^c - R^c\|_F^2 \leq \frac{C \tau_m^2 \left\{\sqrt{\log(nm) \mathcal{Z}} + \log(nm)\right\}^2}{\gamma_{\min}^2 \|\theta\|^2}.
    \end{align*}
\end{proposition}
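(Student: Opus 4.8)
The plan is to use the fact that $\widehat{X}^r$ and $\widehat{X}^c$ are, by construction in Algorithm \ref{alg:Bi-SCORE}, the nearest points (in Frobenius norm) to $\widehat{R}^r$ and $\widehat{R}^c$ within the constraint sets $\mathcal{X}_{n,\kappa-1,K}$ and $\mathcal{X}_{m,\kappa-1,L}$, and then to show that the population ratio matrices $R^r, R^c$ already lie in these sets. First I would verify that $R^r \in \mathcal{X}_{n,\kappa-1,K}$: by the transformation \eqref{Rr-transform}, every row of $R^r$ in community $k$ equals $Y_{c^r_i\, 2\sim\kappa}O_U / (C_U Y_{c^r_i 1})$, so $R^r$ has at most $K$ distinct rows; by Proposition \ref{proposition R-R} rows from different communities differ by at least $2$, and since $\text{rank}(Z)=K$ forces all $K$ communities to be non-empty, $R^r$ has exactly $K$ distinct rows. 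The identical argument gives $R^c \in \mathcal{X}_{m,\kappa-1,L}$.

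Next, since $\widehat{X}^r$ minimizes $\|X^r - \widehat{R}^r\|_F^2$ over $\mathcal{X}_{n,\kappa-1,K}$ and $R^r$ is feasible, optimality gives $\|\widehat{X}^r - \widehat{R}^r\|_F \le \|R^r - \widehat{R}^r\|_F$. A triangle inequality then yields
\begin{align*}
\|\widehat{X}^r - R^r\|_F \le \|\widehat{X}^r - \widehat{R}^r\|_F + \|\widehat{R}^r - R^r\|_F \le 2\,\|\widehat{R}^r - R^r\|_F,
\end{align*}
so that $\|\widehat{X}^r - R^r\|_F^2 \le 4\,\|\widehat{R}^r - R^r\|_F^2$. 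Applying Proposition \ref{R-R-F} to bound $\|\widehat{R}^r - R^r\|_F^2$ and absorbing the factor $4$ into the constant $C$ gives the first inequality; the bound for $\|\widehat{X}^c - R^c\|_F^2$ follows by the same argument with $R^c$, $\widehat{R}^c$, $\widehat{X}^c$, and $\tau_m$, $\gamma_{\min}$, $\|\theta\|$ in place of their row-side counterparts. The high-probability event is exactly the one in Proposition \ref{R-R-F}, so the probability $1-1/n-1/m$ carries over verbatim.

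This argument has no real obstacle — it is the standard ``the $k$-means objective value at the estimator is no larger than its value at the ground-truth partition'' sandwich. The only step needing a moment of care is confirming that $R^r$ and $R^c$ genuinely belong to the constraint sets (exactly $K$, resp.\ $L$, distinct rows, not fewer), which is immediate from Proposition \ref{proposition R-R} together with the non-emptiness of all communities guaranteed by $\text{rank}(Z)=K$ and $\text{rank}(W)=L$. Everything else reduces to a triangle inequality and a direct citation of Proposition \ref{R-R-F}.
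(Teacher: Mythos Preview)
Your proposal is correct and follows essentially the same approach as the paper: observe that $R^r$ and $R^c$ lie in the feasible sets, use optimality of $\widehat{X}^r,\widehat{X}^c$ to get $\|\widehat{X}^r-\widehat{R}^r\|_F\le\|R^r-\widehat{R}^r\|_F$, apply a triangle inequality, and invoke Proposition~\ref{R-R-F}. You are in fact slightly more careful than the paper in explicitly justifying that $R^r$ has \emph{exactly} $K$ distinct rows via Proposition~\ref{proposition R-R} and the non-emptiness of all communities.
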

The proof of Proposition \ref{X-R} can be found in Appendix \ref{proof-X-R}. Proposition \ref{X-R} suggests that the empirical clustering matrices will closely match the true community structure, which is essential for effective clustering in the Bi-SCORE algorithm.

Before presenting the main theorem for the Bi-SCORE algorithm, we first introduce some notation.
Recall that $\mathcal{V}^r$ and $\mathcal{V}^c$ are the sets of all row and column nodes, respectively. We define the sets of ``well-behaved'' nodes as follows: $\mathcal{W}^r \equiv \{1 \leq i \leq n: \|\widehat{X}_{i\cdot}^r - R_{i\cdot}^r \| \leq 1/2\}$ and $\mathcal{W}^c \equiv \{1 \leq j \leq m: \|\widehat{X}_{j\cdot}^c - R_{j\cdot}^c \| \leq 1/2\}$. This specific threshold of $1/2$ is chosen to ensure that the estimated values $\widehat{X}^r$ and $\widehat{X}^c$ for nodes from different communities remain sufficiently separated, which is crucial for accurate clustering. While any constant within the interval $(0,1)$ could serve as a valid threshold, we adopt $1/2$ in line with the practice established in \cite{wang2020spectral}.
By this definition, the sets $\mathcal{V}^r \setminus \mathcal{W}^r$ and $\mathcal{V}^c \setminus \mathcal{W}^c$ thus consist of nodes that are ``ill-behaved''. Let $n_k$ denote the number of row nodes in community $k$, for $k = 1, \dots, K$, and $m_l$ denote the number of column nodes in community $l$, for $l = 1, \dots, L$. The following theorem provides upper bounds on the number of misclustered row and column nodes under mild conditions.

\begin{theorem}
\label{theorem1}
    Under Assumptions \ref{assump-B} and \ref{assump-theta-gamma}, suppose that $\lvert \mathcal{V}^r \setminus \mathcal{W}^r\rvert < \min\{n_1,\dots,n_K\}$ and $\lvert \mathcal{V}^c \setminus \mathcal{W}^c\rvert < \min\{m_1,\dots,m_L\}$. Then, all nodes in $\mathcal{W}^r$ and $\mathcal{W}^c$ are correctly clustered by the Bi-SCORE algorithm. Moreover, for $n$ and $m$ large enough, with probability at least $1-1/n-1/m$,
    \begin{align*}
        \lvert \mathcal{V}^r \setminus \mathcal{W}^r \rvert < \frac{C \tau_n^2 \left\{\sqrt{\log(nm) \mathcal{Z}} + \log(nm)\right\}^2}{\theta_{\min}^2 \|\gamma\|^2}, \\
        \lvert \mathcal{V}^c \setminus \mathcal{W}^c \rvert < \frac{C \tau_m^2 \left\{\sqrt{\log(nm) \mathcal{Z}} + \log(nm)\right\}^2}{\gamma_{\min}^2 \|\theta\|^2}.
    \end{align*}
\end{theorem}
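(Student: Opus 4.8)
The plan is to deduce Theorem~\ref{theorem1} from Proposition~\ref{X-R} by a counting argument, following the standard SCORE-type analysis. First I would establish the ``correct clustering'' claim for well-behaved nodes. By construction, $\widehat{X}^r$ has exactly $K$ distinct rows, so it induces a partition of $\mathcal{V}^r$ into $K$ blocks; likewise $R^r$ has (by Proposition~\ref{proposition R-R}) exactly $K$ distinct rows, one per true community. I would show that on the block structure of $\widehat{X}^r$, each block is a subset of a single true community and, conversely, each true community receives exactly one block, so that the map between $\widehat{X}^r$-blocks and true communities is a bijection. The key geometric fact is Proposition~\ref{proposition R-R}: any two rows of $R^r$ in different communities differ by at least $2$. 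Hence if $i_1,i_2$ are both well-behaved (each within $1/2$ of its $R^r$-row) but in different communities, then $\|\widehat{X}^r_{i_1\cdot}-\widehat{X}^r_{i_2\cdot}\| \ge \|R^r_{i_1\cdot}-R^r_{i_2\cdot}\| - \|\widehat{X}^r_{i_1\cdot}-R^r_{i_1\cdot}\| - \|\widehat{X}^r_{i_2\cdot}-R^r_{i_2\cdot}\| \ge 2 - 1/2 - 1/2 = 1 > 0$, so they land in different $\widehat{X}^r$-blocks; and two well-behaved nodes in the same community are within $1/2+1/2=1$ of each other, which I will argue is small enough (relative to the $\ge 2$ inter-community gap) that they cannot be split across two blocks that each must also contain ``representatives'' forcing a contradiction with the $K$-block budget. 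This is exactly where the hypothesis $|\mathcal{V}^r\setminus\mathcal{W}^r| < \min_k n_k$ enters: since each community has more well-behaved nodes than the total number of ill-behaved nodes, a pigeonhole argument guarantees that no community can be ``empty of well-behaved nodes'' in any block and no $\widehat{X}^r$-block can straddle two communities without violating the count, pinning down the bijection. The identical reasoning applies to the column nodes using $\widehat{X}^c$, $R^c$, and $|\mathcal{V}^c\setminus\mathcal{W}^c|<\min_l m_l$.

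Second, I would bound $|\mathcal{V}^r\setminus\mathcal{W}^r|$ and $|\mathcal{V}^c\setminus\mathcal{W}^c|$. By definition, $i\in\mathcal{V}^r\setminus\mathcal{W}^r$ means $\|\widehat{X}^r_{i\cdot}-R^r_{i\cdot}\| > 1/2$, hence $\|\widehat{X}^r_{i\cdot}-R^r_{i\cdot}\|^2 > 1/4$. Summing over all such nodes,
\begin{align*}
    \frac{1}{4}\,\lvert\mathcal{V}^r\setminus\mathcal{W}^r\rvert < \sum_{i\in\mathcal{V}^r\setminus\mathcal{W}^r}\bigl\|\widehat{X}^r_{i\cdot}-R^r_{i\cdot}\bigr\|^2 \le \bigl\|\widehat{X}^r-R^r\bigr\|_F^2,
\end{align*}
and then Proposition~\ref{X-R} gives, with probability at least $1-1/n-1/m$,
\begin{align*}
    \lvert\mathcal{V}^r\setminus\mathcal{W}^r\rvert < 4\,\bigl\|\widehat{X}^r-R^r\bigr\|_F^2 \le \frac{C\tau_n^2\bigl\{\sqrt{\log(nm)\mathcal{Z}}+\log(nm)\bigr\}^2}{\theta_{\min}^2\|\gamma\|^2},
\end{align*}
absorbing the factor $4$ into the constant $C$. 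The column bound follows symmetrically from the second inequality of Proposition~\ref{X-R}. Since the two bounds of Proposition~\ref{X-R} hold simultaneously on an event of probability at least $1-1/n-1/m$ (same event in the statement), both displayed bounds of the theorem hold simultaneously on that event.

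The main obstacle I anticipate is the bijection/pigeonhole argument in the first step: one must carefully verify that the $K$-distinct-rows constraint on $\widehat{X}^r$, combined with the separation property of Proposition~\ref{proposition R-R} and the cardinality hypothesis, forces each true community's well-behaved nodes into a single common $\widehat{X}^r$-block and forces distinct communities into distinct blocks. The subtlety is ruling out ``degenerate'' configurations—e.g., a block containing well-behaved nodes from two communities plus the fact that the remaining $K-1$ blocks must still cover the other $K-1$ communities' well-behaved nodes—which is precisely where $|\mathcal{V}^r\setminus\mathcal{W}^r|<\min_k n_k$ is used to derive a contradiction via counting. Everything else (the triangle-inequality separation estimate, the Frobenius-to-cardinality bound, invoking Proposition~\ref{X-R}) is routine. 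I would also note that the correct-clustering conclusion is deterministic given the two cardinality hypotheses, while the quantitative bounds on those cardinalities are the probabilistic content, inherited directly from Proposition~\ref{X-R}.
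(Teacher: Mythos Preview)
Your proposal is correct and follows essentially the same route as the paper: the triangle-inequality separation estimate $\|\widehat{X}^r_{i_1\cdot}-\widehat{X}^r_{i_2\cdot}\|\ge 2-1/2-1/2=1$ for well-behaved nodes in distinct communities, the pigeonhole/counting argument using $|\mathcal{V}^r\setminus\mathcal{W}^r|<\min_k n_k$ to force a bijection between the $K$ rows of $\widehat{X}^r$ and the $K$ communities, and the Frobenius-to-cardinality bound $|\mathcal{V}^r\setminus\mathcal{W}^r|<4\|\widehat{X}^r-R^r\|_F^2$ followed by Proposition~\ref{X-R} are exactly what the paper does. One minor simplification: the paper does not use the observation that same-community well-behaved nodes are within distance $1$ of each other; once the $K$ distinct rows of $\widehat{X}^r$ are all occupied by well-behaved representatives of the $K$ communities, any additional well-behaved node is forced into its own community's row purely by the separation inequality for \emph{different} communities (it cannot share a row with a representative of another community), so the same-community distance estimate is unnecessary.
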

The proof of Theorem \ref{theorem1} can be found in Appendix \ref{Proof_of_Theorem_1}. In Theorem \ref{theorem1}, the assumption that the number of misclustered nodes is less than the size of the smallest community guarantees that at least one node of each community is correctly clustered. A similar assumption is made in \cite{jin2015fast} and \cite{wang2020spectral}. In the special case where the row nodes coincide with the column nodes, the network is simplified to a one-mode graph. When $\mathcal{Z}$ is of the order larger than $\log(nm)$, the bound on the number of misclustered nodes aligns with the result presented in Theorem 1 of \cite{wang2020spectral}.
Next, consider a simple case where $\theta$ and $\gamma$ are bounded by constants, i.e., $0 < \alpha \leq \min(\theta_{\min}, \gamma_{\min}) \leq \max(\theta_{\max}, \gamma_{\max}) \leq \beta < \infty$ with $\alpha$ and $\beta$ being positive constants and $\alpha \leq \beta$. For sufficiently large $n$ and $m$, we have $\mathcal{Z} \asymp \max(n,m)$, which dominates $\sqrt{\log(nm)}$. In this scenario, it follows that:
$\mathcal{Z} / (\theta_{\min}^2 \|\gamma\|^2) \leq \beta^2\alpha^{-4}$ and $\mathcal{Z} / (\gamma_{\min}^2 \|\theta\|^2) \leq \beta^2\alpha^{-4}$. When $\tau_n = \log(n)$ and $\tau_m = \log(m)$ with $n \asymp m$, the misclustering rates satisfy: 
\begin{align*}
    \lim_{n,m \rightarrow \infty} \frac{\lvert \mathcal{V}^r \setminus \mathcal{W}^r \rvert}{n} < \lim_{n,m \rightarrow \infty} \frac{C \log^2 (n) \log(nm)}{n} = 0, \\
    \lim_{n,m \rightarrow \infty} \frac{\lvert \mathcal{V}^c \setminus \mathcal{W}^c \rvert}{m} < \lim_{n,m \rightarrow \infty} \frac{C \log^2 (m) \log(nm)}{m} = 0. 
\end{align*}

\section{Simulation}

\subsection{Model Setup}

To evaluate the finite sample performance of our newly proposed methodology, we conduct a number of simulation studies in this section. We adopt the weighted bipartite DCBM described in Section \ref{sec3_1} to generate bipartite networks. First, we assign each of the \(n\) row nodes and the \(m\) column nodes to one of the \(K\) and \(L\) communities, respectively. Specifically, we independently sample the community label of each row node from the discrete uniform distribution over \(\{1, \dots, K\}\), and similarly for each column node over \(\{1, \dots, L\}\). The resulting community memberships are recorded in matrices \(Z \in \mathbb{R}^{n \times K}\) and \(W \in \mathbb{R}^{m \times L}\). Second, we specify the community-level interaction matrix \(B \in \mathbb{R}^{K \times L}\), which controls the baseline connectivity strength between communities. The details of generating $B$ are described in each scenario.
Third, we incorporate two degree parameters as \(\theta_i = \sqrt{\rho} \cdot a_i\) for the row nodes and \(\gamma_j = \sqrt{\rho} \cdot b_j\) for the column nodes, where \(a_i, b_j \sim \mathrm{Uniform}(0.5, 1)\) with \(i = 1, \dots, n\) and \(j = 1, \dots, m\). The parameter \(\rho \in (0,1]\) controls the overall strength of degree heterogeneity in the network. Specifically, higher values of \(\rho\) lead to greater variability in nodal degrees, whereas smaller values result in more homogeneous degree distributions. As a result, the adjacency matrix \(A \in \mathbb{R}^{n \times m}\) is generated with independent Poisson random variables whose means are specified by equation~\eqref{model1}.

\subsubsection*{Scenario 1: Balanced Sample Size}

In this scenario, we investigate the clustering performance under balanced sample size. Specifically, the combinations \((n, m)\) are chosen as \((500, 525)\), \((1000, 1050)\), \((1500, 1575)\), \((2000, 2100)\), \((2500, 2625)\), and \((3000, 3150)\). In addition, the degree heterogeneity parameter is set to \(\rho = 0.2\), and the number of communities is set to \(K = 2\) and \(L = 3\). Finally, the interaction matrix \(B \in \mathbb{R}^{2 \times 3}\) is specified as
\[
B =
\begin{bmatrix}
1.0 & 0.1 & 0.2 \\
0.3 & 0.9 & 0.1
\end{bmatrix}.
\]

\subsubsection*{Scenario 2: Unbalanced Sample Size}

In this scenario, we evaluate the effect of an unbalanced sample size between the two types of nodes. In particular, the combinations \((n, m)\) are chosen as \((50, 1500)\), \((100, 3000)\), \((150, 4500)\), \((200, 6000)\), \((250, 7500)\), and \((300, 9000)\). In addition, we set the degree heterogeneity parameter to \(\rho = 0.9\), and the number of communities is chosen the same as that in Scenario 1. Finally, the interaction matrix \(B \in \mathbb{R}^{2 \times 3}\) is specified as
\[
B =
\begin{bmatrix}
0.9 & 0.1 & 0.2 \\
0.2 & 1.0 & 0.1
\end{bmatrix}.
\]
This setting mimics some bipartite networks in the real world, such as in academic collaboration networks, where one type of node is typically much smaller in number than the other \citep{newman2001structure, zhou2007co}. Specifically, the journal citation network constructed in our study also exhibits the same phenomenon.

\subsubsection*{Scenario 3: Degree Heterogeneity}

In this scenario, we examine the influence of degree heterogeneity on clustering performance. We fix the number of row and column nodes at \((n, m)\) = \((1000, 1050)\). In addition, the degree heterogeneity parameter \(\rho\) is varied in \(\{0.1, 0.2, \ldots, 1.0\}\). As \(\rho\) increases, nodal degrees become more heterogeneous, raising greater challenges for community detection \citep{dall2019revisiting, kojaku2024network}. Finally, the number of communities remains the same as in Scenario 1, and the interaction matrix \(B\) is specified as
\[
B =
\begin{bmatrix}
1.0 & 0.2 & 0.1 \\
0.2 & 0.9 & 0.3
\end{bmatrix}.
\]

\subsubsection*{Scenario 4: Number of Communities}

In this scenario, we investigate the effect of varying the number of communities on clustering performance. We fix the number of nodes in the row and column at \((n, m) = (1000, 1050)\), and set the degree heterogeneity parameter to \(\rho = 0.9\). Specifically, we consider three combinations of community numbers \((K = 2, L = 3)\), \((K = 3, L = 5)\), and \((K = 3, L = 7)\). The corresponding community interaction matrices \(B_1\), \(B_2\), and \(B_3\) are given as 
\[
B_1 =
\begin{bmatrix}
    1.0 & 0.3 & 0.1 \\
    0.3 & 1.0 & 0.1
\end{bmatrix},
B_2 =
\begin{bmatrix}
    1.0 & 0.2 & 0.3 & 0.1 & 0.3 \\
    0.3 & 0.1 & 1.0 & 0.1 & 0.3 \\
    0.3 & 0.1 & 0.3 & 0.2 & 1.0
\end{bmatrix},
B_3 =
\begin{bmatrix}
    1.0 & 0.2 & 0.1 & 0.3 & 0.1 & 0.3 & 0.1 \\
    0.3 & 0.1 & 0.3 & 1.0 & 0.2 & 0.3 & 0.1 \\
    0.3 & 0.1 & 0.1 & 0.3 & 0.1 & 1.0 & 0.3
\end{bmatrix}.
\]



\subsection{Performance Measurement and Results}

To assess the performance of different clustering methods, we conduct simulation experiments with \(T = 1{,}000\)  replications for each scenario. In the \(t\)th replication, the clustering performance is evaluated using the error rate (ErrorRate) and the adjusted Rand index (ARI) \citep{hubert1985comparing}, which are defined as follows.
\begin{align*}
    \mathrm{ErrorRate}^{(t)} = \max \left\{
        \frac{1}{n} \sum_{i=1}^n \mathbb{I} \left( \hat{c}^{r(t)}_i \ne c^r_i \right),
        \frac{1}{m} \sum_{j=1}^m \mathbb{I} \left( \hat{c}^{c(t)}_j \ne c^c_j \right)
    \right\},
\end{align*}

\begin{align*}
\mathrm{ARI}^{(t)} = \min \left\{ \mathrm{ARI}^{r(t)},\, \mathrm{ARI}^{c(t)} \right\},
\end{align*}

\noindent where
\begin{align*}
\mathrm{ARI}^{r(t)} = 
\frac{
    \sum_{k=1}^{K} \sum_{l=1}^{K} \binom{n_{kl}^{r(t)}}{2} - 
    \Big[ \sum_{k=1}^{K} \binom{n_{k\cdot}^{r(t)}}{2} \sum_{l=1}^{K} \binom{n_{\cdot l}^{r(t)}}{2} \Big] \big/ \binom{n}{2}
}{
    \tfrac{1}{2} \Big[ \sum_{k=1}^{K} \binom{n_{k\cdot}^{r(t)}}{2} + \sum_{l=1}^{K} \binom{n_{\cdot l}^{r(t)}}{2} \Big] -
    \Big[ \sum_{k=1}^{K} \binom{n_{k\cdot}^{r(t)}}{2} \sum_{l=1}^{K} \binom{n_{\cdot l}^{r(t)}}{2} \Big] \big/ \binom{n}{2}
},
\end{align*}

\begin{align*}
\mathrm{ARI}^{c(t)} = 
\frac{
    \sum_{k=1}^{L} \sum_{l=1}^{L} \binom{n_{kl}^{c(t)}}{2} - 
    \Big[ \sum_{k=1}^{L} \binom{n_{k\cdot}^{c(t)}}{2} \sum_{l=1}^{L} \binom{n_{\cdot l}^{c(t)}}{2} \Big] \big/ \binom{m}{2}
}{
    \tfrac{1}{2} \Big[ \sum_{k=1}^{L} \binom{n_{k\cdot}^{c(t)}}{2} + \sum_{l=1}^{L} \binom{n_{\cdot l}^{c(t)}}{2} \Big] -
    \Big[ \sum_{k=1}^{L} \binom{n_{k\cdot}^{c(t)}}{2} \sum_{l=1}^{L} \binom{n_{\cdot l}^{c(t)}}{2} \Big] \big/ \binom{m}{2}
}.
\end{align*}
In the above equations, \(n_{kl}^{r(t)}\) denotes the number of row nodes that belong to cluster \(k\) in the true partition and are assigned to cluster \(l\) in the predicted partition in the \(t\)th replication, where \(1 \le k, l \le K\). The terms \(n_{k\cdot}^{r(t)} = \sum_l n_{kl}^{r(t)}\) and \(n_{\cdot l}^{r(t)} = \sum_k n_{kl}^{r(t)}\) represent the marginal totals of the row and column, respectively. The quantities \(n_{kl}^{c(t)}\), \(n_{k\cdot}^{c(t)}\), and \(n_{\cdot l}^{c(t)}\) for column nodes are defined analogously, with \(1 \le k, l \le L\).
Specifically, the adjusted Rand index is bounded above by \(1\) and equals \(0\) under random labeling, which higher values indicate better agreement \citep{vinh2010information}. The final error rate and adjusted Rand index are taken as the averages over the \(1{,}000\) replications.
For each setting, we first generate the bipartite network according to the model described in Section 3.1, and extract the giant component if necessary. Subsequently, we apply Bi-SCORE, nBiSC \citep{qing2023community}, and spectral clustering \citep{dhillon2001co}. The results are summarized in Figure \ref{fig:error_curves}.

\begin{figure}[htbp]
\centering
\includegraphics[width=0.9\textwidth, trim=0 0 0.5cm 0, clip]{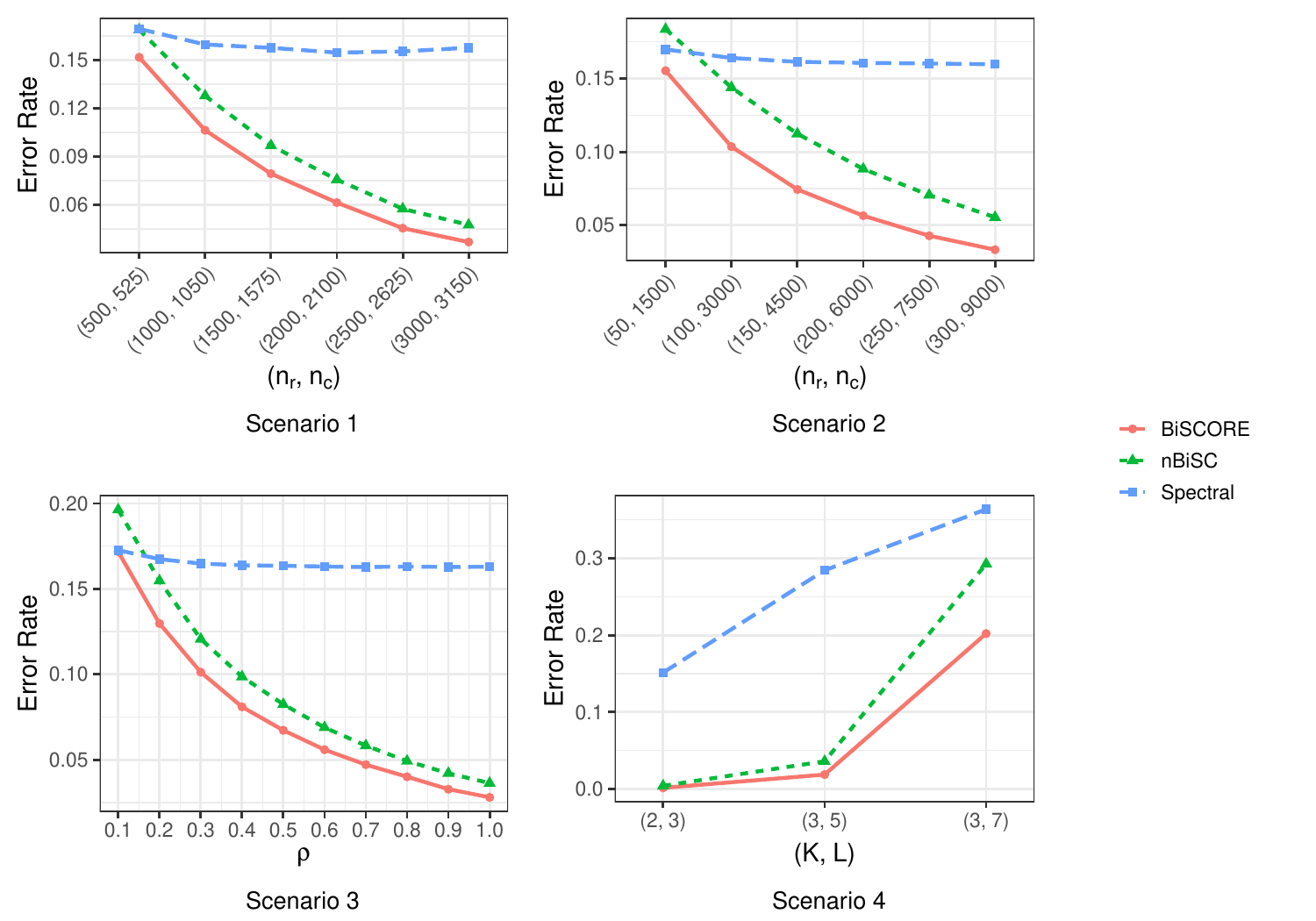}
\caption{
The error rate curves across four scenarios: 
(top-left) balanced sample size, 
(top-right) unbalanced sample size, 
(bottom-left) degree heterogeneity, 
and (bottom-right) network sparsity. 
Bi-SCORE, nBiSC, and spectral clustering methods are compared across all settings. Bi-SCORE achieves lower error rates under most scenarios, especially with increasing sample size, higher degree heterogeneity, or reduced network sparsity.
}
\label{fig:error_curves}
\end{figure}

Figure \ref{fig:error_curves} displays the error rate curves in four scenarios. In general, Bi-SCORE performs better than nBiSC and spectral clustering in terms of the error rate. We observe the following three phenomena. 
First, in both Scenario 1 and Scenario 2, the error rates of Bi-SCORE and nBiSC decrease notably as the sample size increases, while the performance of spectral clustering remains relatively unchanged. Specifically, Bi-SCORE achieves the lowest error rates and exhibits the fastest rate of improvement with increasing sample size.
Second, in Scenario 3, increasing \(\rho\) leads to greater degree heterogeneity in the network, which in turn improves the performance of both Bi-SCORE and nBiSC. In particular, when \(\rho > 0.2\), Bi-SCORE outperforms the other two methods, highlighting its adaptability to heterogeneous network structures.
Third, in Scenario 4, the performance of all methods declines as the network sparsity increases. Although the advantage of Bi-SCORE over nBiSC becomes less pronounced under this setting, Bi-SCORE still maintains a slight advantage, especially in less complex cases.

\begin{figure}[htbp]
\centering
\includegraphics[width=0.9\textwidth, trim=0 0 0.5cm 0, clip]{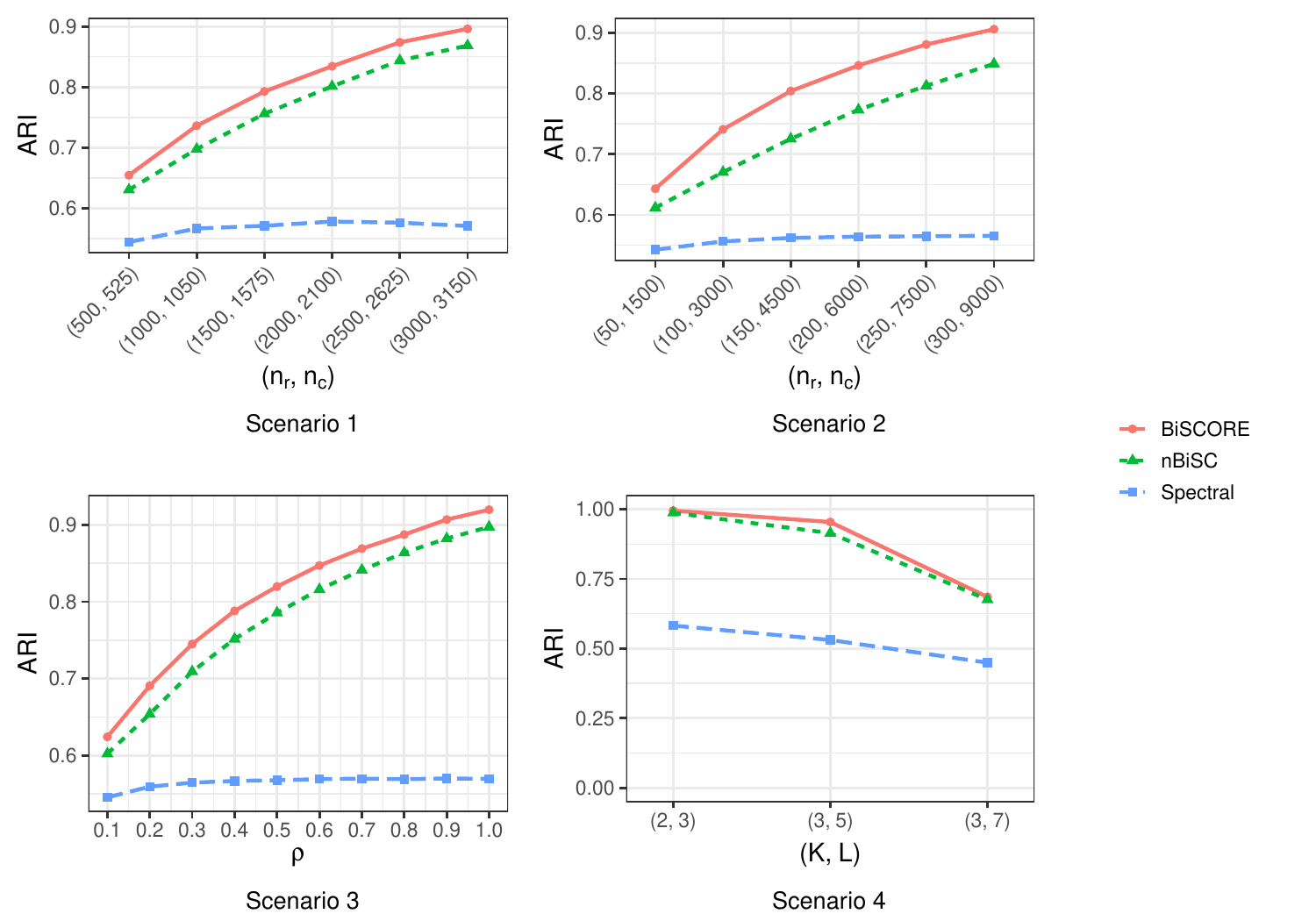}
\caption{
The adjusted Rand index curves across four scenarios: 
(top-left) balanced sample size, 
(top-right) unbalanced sample size, 
(bottom-left) degree heterogeneity, 
and (bottom-right) network sparsity. 
Bi-SCORE, nBiSC, and spectral clustering methods are compared across all settings. Bi-SCORE achieves higher adjusted Rand indices under most scenarios, especially with increasing sample size, higher degree heterogeneity, or reduced network sparsity.
}
\label{fig:ari_curves}
\end{figure}

Figure~\ref{fig:ari_curves} displays the adjusted Rand index curves in four scenarios. In general, Bi-SCORE achieves higher adjusted Rand indices than both nBiSC and spectral clustering. We observe the following three phenomena. 
First, in Scenario 1 and Scenario 2, the adjusted Rand indices of Bi-SCORE and nBiSC increase substantially as the sample size grows, while the adjusted Rand index of spectral clustering remains relatively lower. In particular, Bi-SCORE attains the highest adjusted Rand index and demonstrates the fastest rate of improvement with increasing sample size.
Second, in Scenario 3, both Bi-SCORE and nBiSC achieve higher adjusted Rand indices as the degree heterogeneity parameter \(\rho\) increases. Specifically, Bi-SCORE achieves the highest adjusted Rand index among all three methods.
Third, in Scenario 4, the adjusted Rand indices for all methods decrease as the network sparsity increases. Although the advantage of Bi-SCORE over nBiSC becomes less pronounced in this setting, Bi-SCORE still maintains a slight advantage, especially in less complex cases.

\section{Knowledge Discovery of Statistical Journals}\label{sec5}

\subsection{Knowledge Discovery of the Journal Citation Network}\label{sec5_1}

In this section, we apply the newly proposed Bi-SCORE to our journal citation network described in Section \ref{sec:jcn}. Since our goal is to conduct knowledge source discovery in the eight core statistical journals, we only need to identify the community structure among the column nodes, i.e., cited journals.
To this end, we first need to decide the number of communities \(L\). We apply Bi-SCORE for \(L = \{ 3, 4, 5, 6, 7, 8 \}\), and compare the clustering results for each \(L\). The results show that the community structure is clearest when \(L = 6\). Therefore, we set \(L = 6\) in the subsequent analysis. The result of the community structure is further visualized in Figure~\ref{fig:community_dectection}. It reveals that the journal citation network is divided into 6 communities, which are summarized in detail in the following.

\begin{figure}[htbp]
\centering
\includegraphics[trim=0.5cm 4cm 0.5cm 4cm, clip, width=0.9\textwidth]{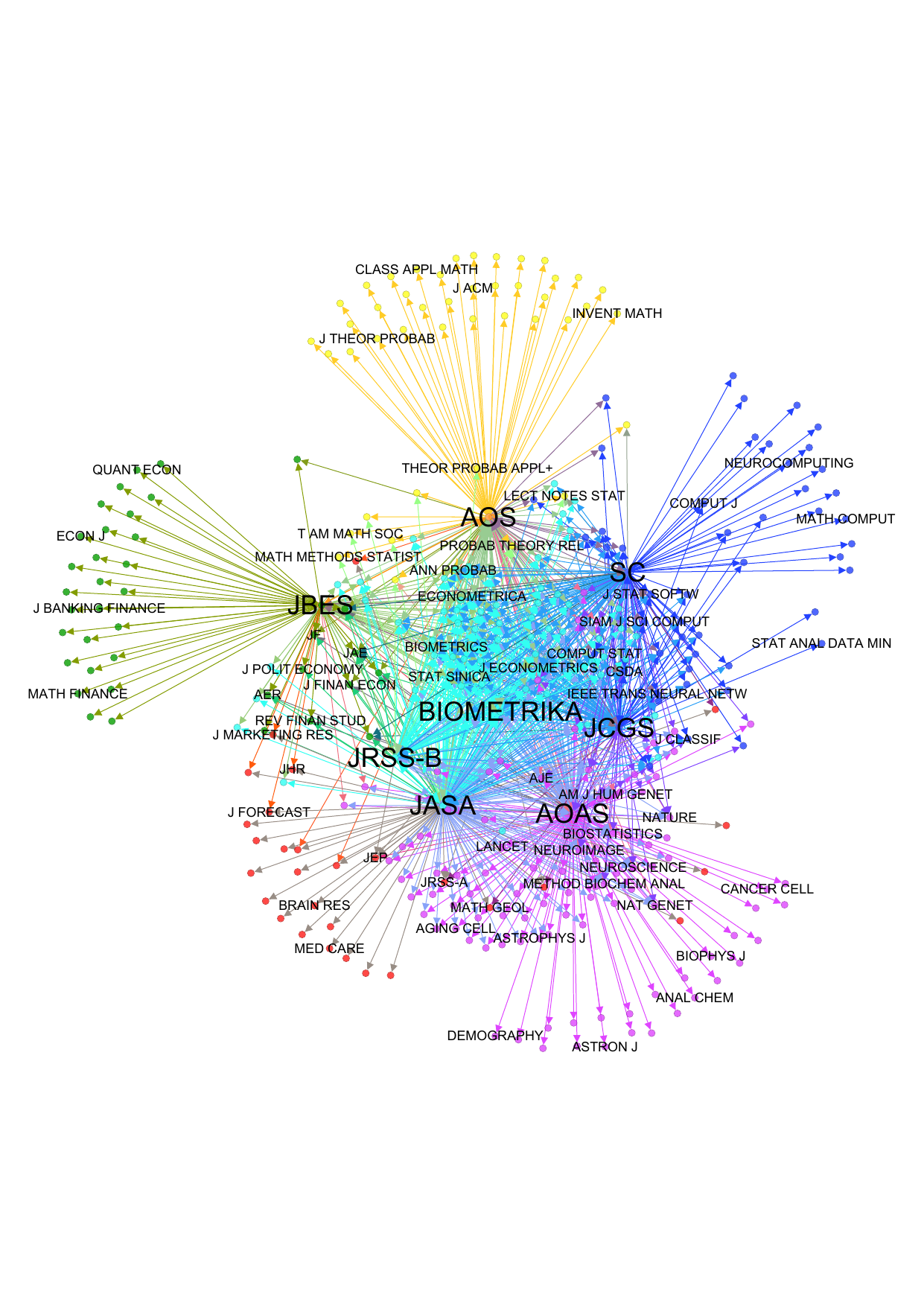}
\caption{Visualization of the Journal Citation Network with communities detected by Bi-SCORE. The 6 identified communities are visualized in different colors: Applied Statistics (purple), Statistical Methodology (cyan), Computational Statistics (blue), Mathematical Statistics (yellow), Econometrics and Business Statistics (dark green), and Others (red).}
\label{fig:community_dectection}
\end{figure}

\begin{itemize}

     \item \textbf{Applied Statistics (Purple)}. This community contains 98 journals, which primarily focus on the practical application of statistical methods. Research in this community emphasizes real-world data analysis, interdisciplinary collaboration, and methodological adaptation. The most representative journal in this community is \emph{AOAS}, as expected for a core journal. In addition, \emph{Bioinformatics}, \emph{Nature}, \emph{Neuroscience}, and \emph{American Journal of Epidemiology (AJE)} also have high weighted in-degree within this community.
    
    \item \textbf{Statistical Methodology (Cyan)}. This community contains 73 journals, which primarily focus on methodological development and innovation in statistical science. Research topics include new statistical methods, model selection, and statistical inference. The main journals in this community are \emph{Biometrika}, \emph{JASA}, and \emph{JRSS-B} (core journals). Furthermore, \emph{Biometrics}, \emph{Econometrica}, and \emph{Statistica Sinica} also exhibit high weighted in-degree in this community.
    
    \item \textbf{Computational Statistics (Blue)}. This community contains 50 journals, which emphasize computational methodology, statistical software, and data science techniques. These features highlight the central role of algorithmic and computational innovation in modern statistical research, bridging traditional statistical modeling with emerging data science applications. Specifically, \emph{JCGS}, \emph{SC}, \emph{CSDA}, and \emph{Journal of Statistical Software} have the highest weighted in-degree. 
    
    \item \textbf{Mathematical Statistics (Yellow)}. This community contains 45 journals, which focus primarily on mathematical analysis, probability theory, and statistical inference. Research in this community emphasizes theoretical innovation and foundational contributions. The leading journal in this community is \emph{AOS}, which is the core journal defined in our journal citation network. In addition, \emph{Annals of Probability}, \emph{Probability Theory and Related Fields}, and \emph{Mathematical Methods of Statistics} also have high weighted in-degree in this community.
    
    \item \textbf{Econometrics and Business Statistics (Dark Green)}. This community contains 39 journals, with a primary focus on econometrics, finance, and quantitative economics. These citation patterns reflect a close integration of econometric methods with empirical research in economics and finance. Notably, \emph{JBES}, \emph{Journal of Applied Econometrics (JAE)}, \emph{American Economic Review (AER)}, and \emph{Journal of Finance (JF)} have the highest weighted in-degree within this community.
    
    \item \textbf{Others (Red)}. This community contains 29 journals spanning a wide range of fields, including education, psychology, medicine, biology, economics, political science, engineering, and others. Research in this group emphasizes the application and advancement of statistical methods in emerging interdisciplinary fields. Representative journals in this community include \emph{Journal of Educational Psychology (JEP)}, \emph{Journal of Forecasting}, \emph{Journal of Human Resources (JHR)}, and \emph{Journal of the Royal Statistical Society Series A (JRSS-A)}.
    
\end{itemize}

Figure~\ref{fig:core_citation} reveals the knowledge sources for the core journals in the 6 identified communities. The following three main phenomena can be observed. 
First, all core journals extensively refer to the ``Statistical Methodology'' community. This suggests the central role of methodological development, which connects different fields of statistics from theory to application. In contrast, all core journals rarely cite from the ``Others" community, indicating a concentration of knowledge sources within statistics-related domains.
Second, both \emph{AOAS} and \emph{JASA} cite heavily from the ``Applied Statistics" community, reflecting their strong emphasis on empirical research and real-world applications. This may be due to the special section \emph{Journal of the American Statistical Association: Applications and Case Studies (JASA ACS)}, which is devoted to applied statistical analyses of real data. Specifically, \emph{JASA} also cites extensively from the ``Statistical Methodology'' community, highlighting its integrated role in both methodological development and applied research. 
Third, \emph{SC} and \emph{JCGS} cite frequently from the ``Computational Statistics" community. This reflects their emphasis on computational tools, graphical methods, and data-driven applications in diverse scientific domains. Moreover, \emph{AOS} cites heavily from the ``Mathematical Statistics" community, and \emph{JBES} cites heavily from the ``Econometrics and Business Statistics" community. This reveals the distinct orientations of the two journals: \emph{AOS} emphasizes theoretical innovation in probability and statistical inference, while \emph{JBES} focuses on the development and application of statistical methods in economics, finance, and business.

\begin{figure}[htbp]
\centering
\includegraphics[width=0.9\textwidth]{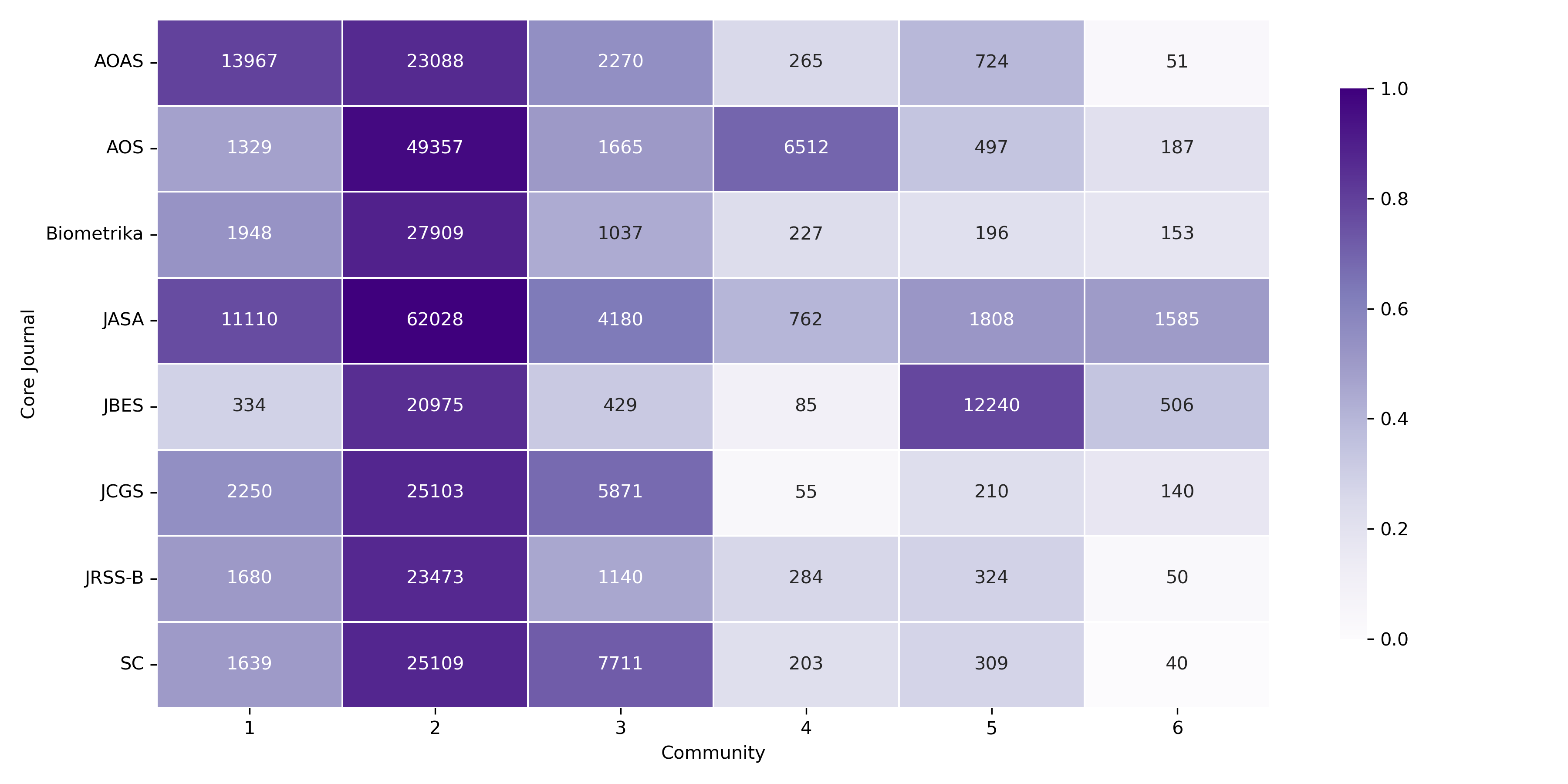}
\caption{Heatmap of knowledge sources for core journals across the 6 identified communities. The x-axis labels correspond to the 6 communities: (1) Applied Statistics, (2) Statistical Methodology, (3) Computational Statistics, (4) Mathematical Statistics, (5) Econometrics and Business Statistics, and (6) Others. Each cell indicates the number of journals from a given community cited by a specific core journal. Darker colors indicate higher citation counts (based on log-transformed and normalized values). The figure reveals distinctive citation patterns and research emphases of the core journals.}
\label{fig:core_citation}
\end{figure}

\subsection{Knowledge Discovery of the Applied Statistics Community}

In Section \ref{sec5_1}, we analyze the knowledge sources of our journal citation network and divide it into 6 communities. In addition, as the largest community in the journal citation network (98 journals), the ``Applied Statistics" community contains the core journal AOAS and many other representative journals. To further investigate the knowledge sources of this community, we apply Bi-SCORE to its subnetwork. Similarly, we apply Bi-SCORE for \(L = \{2, 3, 4, 5\}\) and compare the clustering results. Among these alternatives, the partition obtained with \(L = 4\) provides the most coherent and interpretable community division. Consequently, \(L = 4\) is adopted in the subsequent investigation. The result of the community structure is visualized in Figure~\ref{fig:sub_dectection}, and its characteristics are discussed in detail below.

\begin{figure}[htbp]
\centering
\includegraphics[trim=0.5cm 6cm 0.5cm 6cm, clip, width=0.9\textwidth]{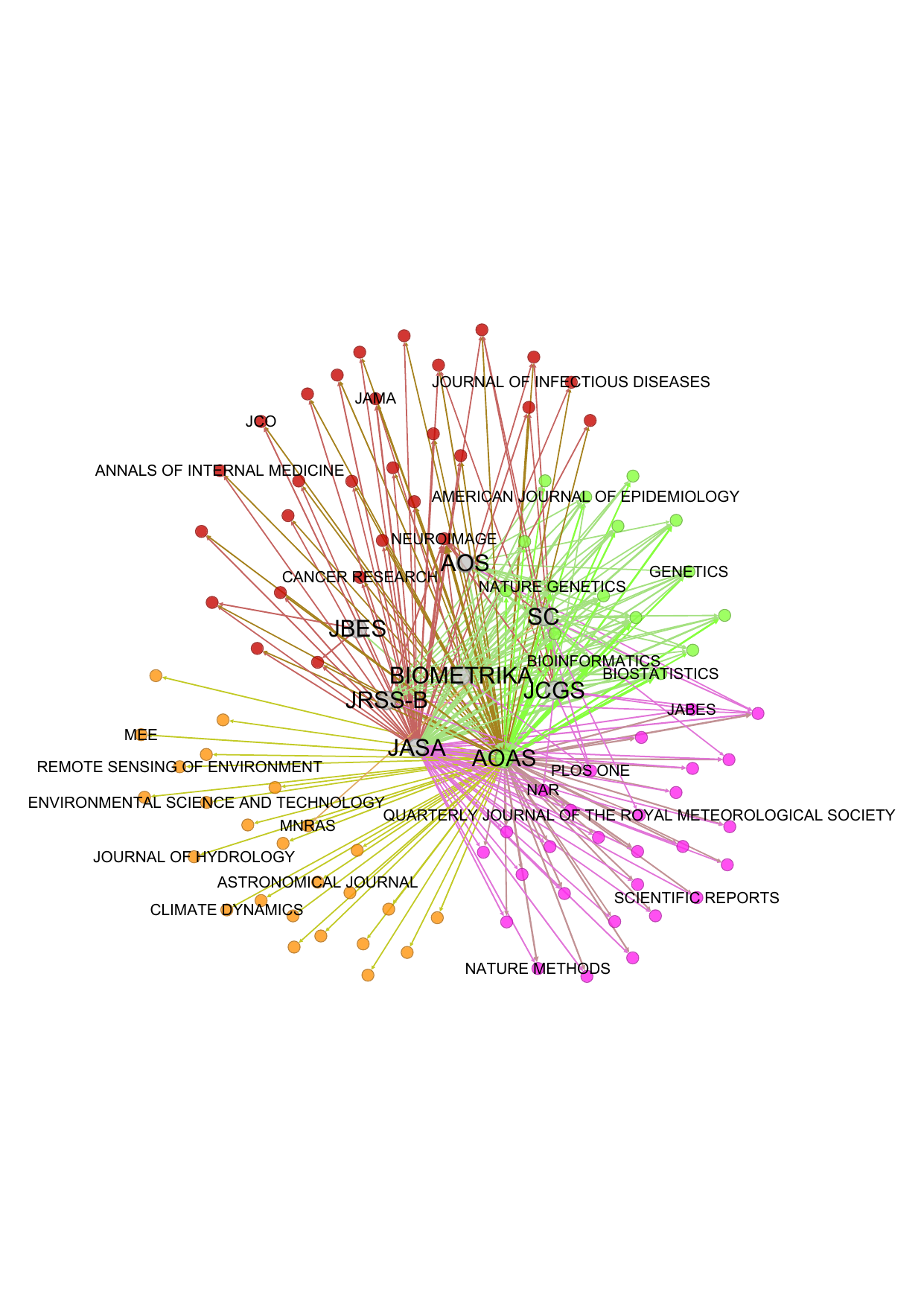}
\caption{Subnetwork of the ``Applied Statistics" community with subcommunities detected by Bi-SCORE. The 4 identified subcommunities are visualized in different colors: Interdisciplinary Research (pink), Medical Science (brown), Natural Science (orange), and Biostatistics (light green).}
\label{fig:sub_dectection}
\end{figure}

\begin{itemize}

    \item \textbf{Interdisciplinary Research (Pink)}. This subcommunity contains 28 journals, where statistical methods are applied to complex systems in various academic fields. Research in this subcommunity focuses on the practical applications conducted in an interdisciplinary context. The main journals in this subcommunity are \emph{PLoS One}, \emph{Nucleic Acids Research (NAR)}, and \emph{Journal of Agricultural, Biological and Environmental Statistics (JABES)}, which have high weighted in-degree in this subcommunity.

    \item \textbf{Medical Science (Brown)}. This subcommunity contains 28 journals, with a main focus on medical imaging, neuroscience, clinical research, and healthcare management. Research in this subcommunity emphasizes the use of statistical modeling to advance medical technologies and improve patient outcomes. Representative journals in this subcommunity include \emph{NeuroImage}, \emph{Journal of the American Medical Association (JAMA)}, and \emph{Journal of Clinical Oncology (JCO)}.

    \item \textbf{Natural Science (Orange)}. This subcommunity contains 25 journals, which primarily focus on astronomy, ecology, environmental science, and geoscience, as shown in Table \ref{table:natural science journal}. These citation patterns highlight the central role of statistical methods in natural science applications. Specifically, \emph{Monthly Notices of the Royal Astronomical Society (MNRAS)}, \emph{Methods in Ecology and Evolution (MEE)}, and \emph{Atmospheric Environment} have the highest weighted in-degree within this subcommunity.

    \item \textbf{Biostatistics (Light Green)}. This subcommunity contains 17 journals, which emphasize biology-related fields such as biochemistry, biotechnology, and genetics. Research in this community focuses on biological discovery through the application of statistical methodologies in the life sciences. The most representative journal in this community is \emph{AOAS}, as expected for a core journal. In addition, \emph{Bioinformatics}, \emph{Biostatistics}, and \emph{Nature Genetics} also have high weighted in-degree.

\end{itemize}

Figure~\ref{fig:applied_citation} reveals the knowledge sources for core journals in the 4 identified subcommunities. The following three main phenomena are observed. 
First, both \emph{AOAS} and \emph{JASA} cited heavily from the ``Interdisciplinary Research" and ``Medical Science" subcommunities. This reflects their emphasis on interdisciplinary applications of statistics, particularly in domains that combine methodological development with real-world problems. In addition, driven by the growing demand for data analysis in clinical research and health management, the application of statistical methods in medical science has grown substantially \citep{murdoch2013inevitable}. 
Second, except for \emph{AOAS}, all other core journals rarely cite from the ``Natural Science" subcommunity. This is because \emph{AOAS} aims to provide a forum for all areas of applied statistics, including nature science. In contrast, other core journals focus primarily on computational applications, methodological development, or the theory of mathematical statistics. 
Third, almost all core journals cite extensively from the ``Biostatistics" subcommunity. This reveals the prominence of biostatistics as one of the most active areas in contemporary applied statistical research \citep{efron2005bayesians}. Specifically, \emph{JBES} primarily targets topics in economics and finance, which explains its limited citation of biostatistics-related work.

\begin{figure}[htbp]
\centering
\includegraphics[width=0.9\textwidth]{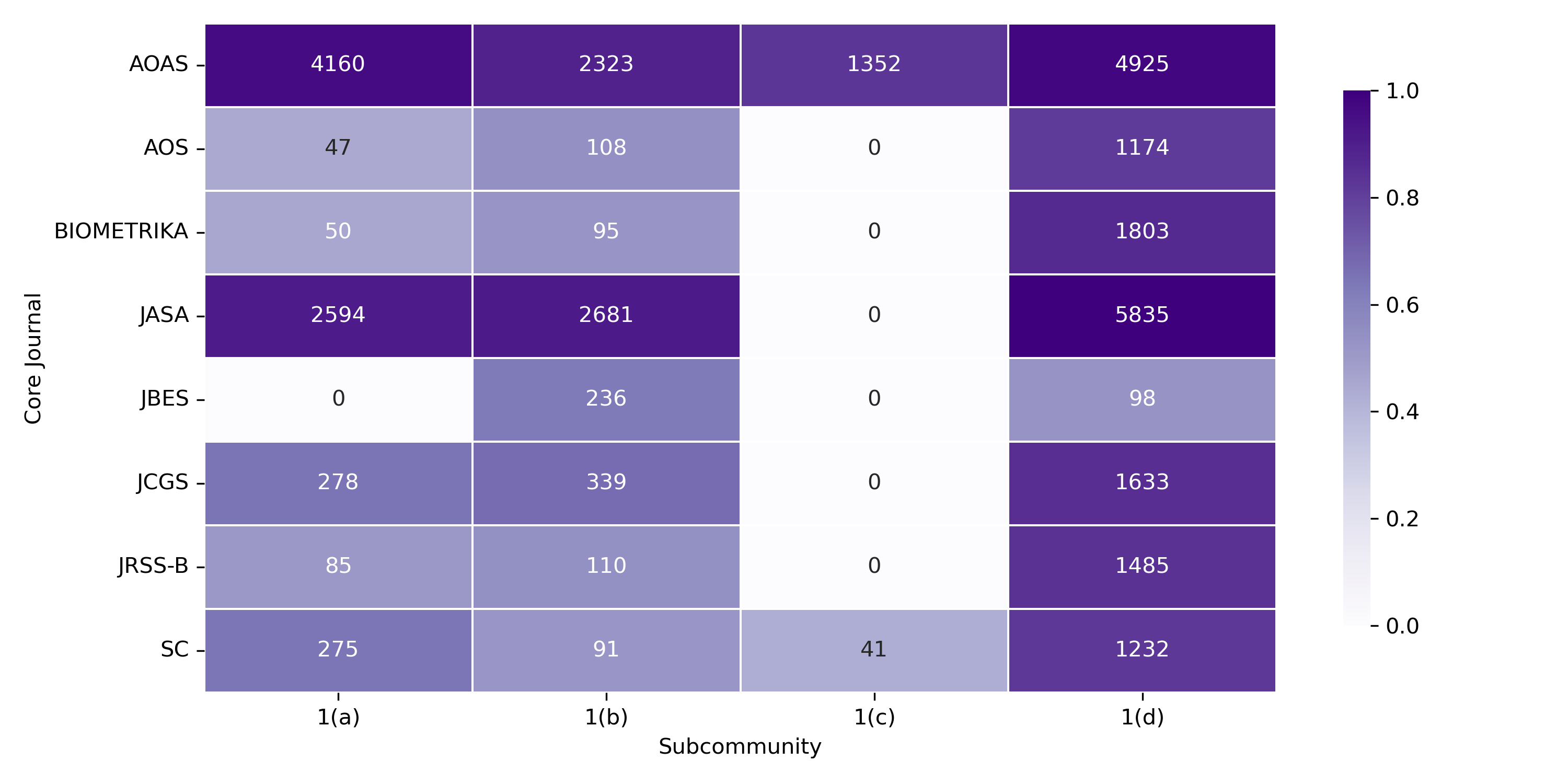}
\caption{Heatmap of knowledge sources for core journals across the 4 identified subcommunities in the ``Applied Statistics" community. The x-axis labels correspond to the 4 subcommunities: (a) Interdisciplinary Research, (b) Medical Science, (c) Natural Science, and (d) Biostatistics. Each cell indicates the number of journals from a given subcommunity cited by a specific core journal. Darker colors indicate higher citation counts (based on log-transformed and normalized values). The figure reveals distinctive citation patterns and research emphases of the core journals.}
\label{fig:applied_citation}
\end{figure}

\section{Discussion}\label{sec6}

In this study, we construct a weighted bipartite journal citation network using publication data from eight core journals over 2001–2023. To address the challenges posed by degree heterogeneity and network sparsity, we propose a novel community detection algorithm, Bi-SCORE. The method integrates spectral decomposition with a ratio-based normalization to extract latent community structures. Theoretical analysis establishes upper bounds for the numbers of misclustered row and column nodes, respectively. Extensive simulation studies under various network settings, including balanced and imbalanced sample sizes, varying degrees of heterogeneity, and different levels of sparsity, demonstrate that Bi-SCORE consistently outperforms existing methods such as nBiSC and spectral clustering in terms of clustering accuracy. In addition, we apply Bi-SCORE to the real-world citation network and identify 6 distinct communities that correspond to major academic fields. These communities include but are not limited to ``Applied Statistics", ``Statistical Methodology", ``Computational Statistics", and ``Mathematical Statistics". Specifically, we further apply Bi-SCORE within the “Applied Statistics” community and identify 4 subcommunities. The empirical results suggest that Bi-SCORE is well-suited for knowledge source discovery in large-scale citation networks with heterogeneous structure.

Several directions for future research are possible. First, the current analysis focuses on citation data from eight core statistical journals over the period 2001–2023. While these journals represent influential sources within the field of statistics, many relevant works are published in multidisciplinary outlets which also contribute to the diffusion of statistical knowledge. Incorporating a broader range of journals may yield a more comprehensive understanding of cross-disciplinary knowledge flows. Second, the Bi-SCORE method is developed for static bipartite networks. However, journal citation networks evolve over time, and it is of interest to extend Bi-SCORE to dynamic settings to better capture the temporal evolution of community structures. Third, our method assigns each journal to a single community, but in reality, some journals span multiple research areas. Future work could explore mixed-membership models that allow journals to simultaneously belong to multiple communities. 

\begin{appendices}
\renewcommand{\thelemma}{A.\arabic{lemma}}
\section{Proofs}\label{secA}

Appendix A provides supplementary material to support the main text. Sections \ref{proof-prop-UV} - \ref{proof-V-T} show the detailed proofs of the propositions, theorem, and lemmas discussed in the main paper.

\subsection{Proof of Proposition \ref{prop-UV}}
\label{proof-prop-UV}
\begin{proof}
    We define the matrices $\mathcal{J}_{\theta} \in \mR^{n \times K}$ and $\mathcal{J}_{\gamma} \in \mR^{m \times L}$ as follows: for $1 \leq i \leq n$, $1 \leq j \leq m$, $1 \leq k \leq K$ and $1 \leq l \leq L$,
    \begin{align*}
        (\mathcal{J}_{\theta})_{ik} =
        \begin{cases}
        \displaystyle \frac{\theta_i}{\|\theta^{(k)}\|} & \text{if } c^r_i = k \\
        0 & \text{if } c^r_i \neq k
        \end{cases}
        \quad \text{and} \quad
        (\mathcal{J}_{\gamma})_{jl} =
        \begin{cases}
        \displaystyle \frac{\gamma_j}{\|\boldsymbol{\gamma}^{(l)}\|} & \text{if } c^c_j = l \\
        0 & \text{if } c^c_j \neq l
        \end{cases}.
    \end{align*}
    Based on these definitions, along with the definitions of $\Psi_{\theta}$, $\Psi_{\gamma}$, the expectation matrix $\Omega$ is expressed as $\Omega = (\mathcal{J}_{\theta} \|\theta\| \Psi_{\theta}) B (\mathcal{J}_{\gamma} \|\gamma\| \Psi_{\gamma})^\top$. By introducing the substitution $S \equiv \Psi_{\theta} B \Psi_{\gamma}^\top$, we can simplify the expression for $\Omega$ to
    \begin{align}
    \label{Omega-decomp}
        \Omega = \|\theta\|\|\gamma\| \mathcal{J}_{\theta} S \mathcal{J}_{\gamma}^\top.
    \end{align}

    Since the diagonal matrices $\Psi_{\theta}$ and $\Psi_{\gamma}$ are of full rank, the rank of $S$ is given by $\text{rank}(S) = \text{rank}(\Psi_{\theta} B \Psi_{\gamma}^\top) = \kappa$. We then define the compact SVD of the $K \times L$ matrix $S$ as 
    \begin{align}
    \label{S-svd}
        S = Y \Lambda_S H^\top,
    \end{align}
    where $\Lambda_S$ is a $\kappa \times \kappa$ rectangular diagonal matrix containing the non-negative singular values of $S$ in descending order. The matrices $Y \in \mR^{K \times \kappa}$ and $H \in \mR^{L \times \kappa}$ are orthogonal matrices.

    Substituting the SVD of $S$ from \eqref{S-svd} into the expression for $\Omega$ from \eqref{Omega-decomp}, we obtain:
    \begin{align}
    \label{Omega-JYJH}
        \Omega = \|\theta\| \|\gamma\| (\mathcal{J}_{\theta} Y) \Lambda_S (H \mathcal{J}_{\gamma})^\top.
    \end{align}
    From the definitions of $\mathcal{J}_{\theta}$ and $\mathcal{J}_{\gamma}$, it follows that they possess orthonormal columns, specifically $\mathcal{J}_{\theta}^\top \mathcal{J}_{\theta} = I_{K \times K}$ and $\mathcal{J}_{\gamma}^\top \mathcal{J}_{\gamma} = I_{L \times L}$. Consequently,
    \begin{align}
    \label{JYJH}
        (\mathcal{J}_{\theta} Y)^\top (\mathcal{J}_{\theta} Y) = Y^\top \mathcal{J}_{\theta}^\top \mathcal{J}_{\theta} Y = I_{K \times K},  \nonumber \\
        (\mathcal{J}_{\gamma} H)^\top (\mathcal{J}_{\gamma} H) = H^\top \mathcal{J}_{\gamma}^\top \mathcal{J}_{\gamma} H = I_{L \times L}.
    \end{align}
    By \eqref{JYJH}, we observe that $\mathcal{J}_{\theta} Y$ and $\mathcal{J}_{\gamma} H$ have orthogonal columns. Thus, \eqref{Omega-JYJH} constitutes the compact SVD of the matrix $\Omega$. Denoting the compact SVD of $\Omega$ as $\Omega = U \Lambda_{\Omega} V^\top$, we can identify
    \begin{align}
    \label{UVLambda}
        U = \mathcal{J}_{\theta} Y,~
        V = \mathcal{J}_{\gamma} H,~\text{and}~
        \Lambda_{\Omega} = \|\theta\| \|\gamma\| \Lambda_S.
    \end{align}
    From \eqref{UVLambda}, we directly derive the relationships given in \eqref{singularvalue}, \eqref{rowform}, and \eqref{columnform}. As $Y$ is an orthogonal matrix, its rows are orthonormal, meaning $\|Y_{c^r_i \cdot}\| = 1$. Consequently, $\|U_{i \cdot}\| = \|(\theta_i / \|\theta^{(c_i^r)}\|) Y_{c^r_i \cdot}\| = \theta_i / \|\theta^{(c_i^r)}\|$. Furthermore, based on \eqref{thetaktheta}, we have $\|U_{i \cdot}\| \asymp \theta_i / \|\theta\|$. A similar derivation yields $\|V_{j \cdot}\| \asymp \gamma_j / \|\gamma\|$.
    This completes the proof of Proposition \ref{prop-UV}.
\end{proof}

\subsection{Proof of Proposition \ref{prop-U-UC}}
\label{proof-prop-U-UC}
Proposition \ref{prop-U-UC} aims to bound the distance between the singular vectors of $\Omega$ and those of $A$. To achieve this, we first establish several key lemmas. These include Lemma \ref{lambda-Omega}, which demonstrates that the eigenvalues of $\Omega$ are on the order of $\|\theta\|^2 \|\gamma\|^2$; Lemma \ref{lemma-A-Omega}, which bounds the distance between the random adjacency matrix $A$ and its expected value $\Omega$; and Lemma \ref{lemmaSsvd}, which provides a positive lower bound for the singular value gap $\lambda_1(S^\top S) - \lambda_2(S^\top S)$. By combining these foundational lemmas, we then apply the Davis-Kahan Theorem (Lemma \ref{lemma-UO-U}) to establish Proposition \ref{prop-U-UC}.

\begin{lemma}
\label{lambda-Omega}
    Under the weighted bipartite DCBM and in conjunction with Assumption \ref{assump-theta-gamma}, for $1 \leq k \leq \kappa$, the eigenvalues of $\Omega^\top \Omega$ (and $\Omega \Omega^\top$) are of the order: 
    \begin{align*}
        \lambda_k(\Omega^\top \Omega) =\lambda_k(\Omega \Omega^\top) \asymp \|\theta\|^2 \|\gamma\|^2.
    \end{align*}
\end{lemma}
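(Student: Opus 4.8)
The plan is to reduce the statement about $\Omega$ to a statement about the fixed-size core matrix $S = \Psi_\theta B \Psi_\gamma^\top$, and then show that every (nonzero) singular value of $S$ is bounded above and below by positive constants. First I would invoke Proposition \ref{prop-UV}, which already gives $\sigma_k(\Omega) = \|\theta\|\,\|\gamma\|\,\sigma_k(S)$ for $1 \le k \le \kappa$ and $\sigma_k(\Omega) = 0$ for $k > \kappa$. Since the nonzero eigenvalues of $\Omega^\top\Omega$ and $\Omega\Omega^\top$ coincide and equal $\sigma_k(\Omega)^2$, it suffices to prove $\sigma_k(S) \asymp 1$ for each $k \in \{1,\dots,\kappa\}$; this then yields $\lambda_k(\Omega^\top\Omega) = \lambda_k(\Omega\Omega^\top) = \|\theta\|^2\|\gamma\|^2\,\sigma_k(S)^2 \asymp \|\theta\|^2\|\gamma\|^2$.

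To bound $\sigma_k(S)$, I would apply the standard singular-value inequalities for matrix products, using crucially that $\Psi_\theta$ is a $K\times K$ invertible diagonal matrix and $\Psi_\gamma$ an $L\times L$ invertible diagonal matrix, so that left/right multiplication by them preserves rank and contracts each singular value by at most a factor $\sigma_{\max}$ and at least a factor $\sigma_{\min}$. This gives, for every $k$,
\[
\sigma_{\min}(\Psi_\theta)\,\sigma_{\min}(\Psi_\gamma)\,\sigma_k(B) \;\le\; \sigma_k(S) \;\le\; \sigma_{\max}(\Psi_\theta)\,\sigma_{\max}(\Psi_\gamma)\,\sigma_k(B).
\]
Here $\sigma_{\min}(\Psi_\theta) = \min_k \|\theta^{(k)}\|/\|\theta\|$ and $\sigma_{\max}(\Psi_\theta) = \max_k \|\theta^{(k)}\|/\|\theta\|$, both of which are $\asymp 1$ by \eqref{thetaktheta} (itself a consequence of Assumption \ref{assump-theta-gamma}), and likewise for $\Psi_\gamma$. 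It then remains to observe that $\sigma_k(B) \asymp 1$ for $1 \le k \le \kappa$: the entries of $B$ lie in $[0,1]$ and $K,L$ are fixed, so $\sigma_1(B) \le \|B\|_F \le \sqrt{KL} = O(1)$, while $\sigma_\kappa(B) = \sigma_{\min}(B) > 0$ since $B$ has rank $\kappa$ by Assumption \ref{assump-B}; all intermediate $\sigma_k(B)$ are sandwiched between these fixed positive constants. Chaining the displays gives $\sigma_k(S) \asymp 1$ and hence the lemma.

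There is no real obstacle here; the one point that needs care — rather than being a difficulty — is the correct application of the two-sided product inequality, which relies on $\Psi_\theta$ and $\Psi_\gamma$ being square and of full rank, and on keeping track that the $\asymp$ constants depend only on $K$, $L$, on $\sigma_{\min}(B)$, and on the constants implicit in \eqref{thetaktheta}, all of which are treated as fixed and independent of $n,m$. Everything else is routine bookkeeping.
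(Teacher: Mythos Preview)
Your proposal is correct and follows essentially the same route as the paper: reduce via Proposition~\ref{prop-UV} to bounding the singular values of $S = \Psi_\theta B \Psi_\gamma^\top$, then use the two-sided product inequality together with \eqref{thetaktheta} and Assumption~\ref{assump-B} to pin each $\sigma_k(S)$ between fixed positive constants. The paper only bounds $\sigma_1(S)$ and $\sigma_\kappa(S)$ explicitly (which suffices by monotonicity), but your per-$k$ sandwich is equivalent and your remark about needing $\Psi_\theta,\Psi_\gamma$ square and invertible for the lower bound is exactly the right point of care.
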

The proof can be found in Appendix \ref{proof-lambda-Omega}.

\begin{lemma}
    \label{lemma-A-Omega}
    Under Assumption \ref{assump-B}, for sufficiently large $n$ and $m$, with probability at least $1-1/n - 1/m$, the spectral norm distance between the random adjacency matrix $A$ and its expectation $\Omega$ is bounded as:
    \begin{align*}
        \|A - \Omega\| \leq \sqrt{2 \log(nm) \mathcal{Z}} + \log(nm)/3.
    \end{align*}
\end{lemma}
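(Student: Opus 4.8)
The plan is to reduce the rectangular deviation $A-\Omega$ to a Hermitian matrix by dilation and then apply a matrix Bernstein inequality, with the unbounded Poisson entries handled through their sub-exponential moment generating function rather than an almost-sure bound. Set
\[
  \mathcal{X} = \begin{pmatrix} 0 & A-\Omega \\ (A-\Omega)^\top & 0 \end{pmatrix} \in \mR^{(n+m)\times(n+m)},
\]
so that $\lambda_{\max}(\mathcal{X}) = \|A-\Omega\|$, and write $\mathcal{X} = \sum_{i=1}^{n}\sum_{j=1}^{m}\mathcal{X}_{ij}$ with $\mathcal{X}_{ij} = (A_{ij}-\Omega_{ij})(e_i e_{n+j}^\top + e_{n+j} e_i^\top)$. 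These are independent, mean-zero, symmetric summands, each carrying a single centered Poisson weight in an off-diagonal position.

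First I would bound the matrix variance. Since $\mathcal{X}_{ij}^2 = (A_{ij}-\Omega_{ij})^2 (e_i e_i^\top + e_{n+j} e_{n+j}^\top)$ and $\var(A_{ij}) = \Omega_{ij}$, the matrix $\sum_{i,j}\E[\mathcal{X}_{ij}^2]$ is block diagonal, with its first $n$ diagonal entries equal to $\sum_j \Omega_{ij}$ and its last $m$ equal to $\sum_i \Omega_{ij}$. Using $0 \le B_{kl}\le 1$ from Assumption \ref{assump-B} together with $\theta,\gamma \ge 0$,
\[
  \sum_j \Omega_{ij} = \theta_i \sum_j \gamma_j B_{c^r_i c^c_j} \le \theta_{\max}\|\gamma\|_1, \qquad \sum_i \Omega_{ij} \le \gamma_{\max}\|\theta\|_1,
\]
whence $\big\|\sum_{i,j}\E[\mathcal{X}_{ij}^2]\big\| \le \max(\theta_{\max},\gamma_{\max})\max(\|\theta\|_1,\|\gamma\|_1) = \mathcal{Z}$.

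Next I would control the tails. A uniform bound on $\|\mathcal{X}_{ij}\| = |A_{ij}-\Omega_{ij}|$ is not available, so instead I use the centered-Poisson Laplace transform $\E[e^{s(A_{ij}-\Omega_{ij})}] = \exp\!\big(\Omega_{ij}(e^s-1-s)\big) \le \exp\!\big(\tfrac{\Omega_{ij}s^2/2}{1-s/3}\big)$, valid for $0\le s<3$, which upgrades to the matrix estimate $\E[e^{s\mathcal{X}_{ij}}] \preceq \exp\!\big(\tfrac{s^2/2}{1-s/3}\,\E[\mathcal{X}_{ij}^2]\big)$ on the same range (the sub-exponential Bernstein condition with scale $1/3$). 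Plugging this into the matrix Laplace transform method of Ahlswede--Winter and Tropp and summing cumulants gives
\[
  \mP\big(\|A-\Omega\| \ge t\big) \le (n+m)\,\exp\!\left(-\frac{t^2/2}{\mathcal{Z}+t/3}\right);
\]
equivalently, one may quote the sub-exponential form of the matrix Bernstein inequality with variance proxy $\mathcal{Z}$ and scale $1/3$.

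It then remains to take $t = \sqrt{2\log(nm)\,\mathcal{Z}} + \log(nm)/3$ and to verify, via the routine inversion of the Bernstein bound and the identity $(n+m)/(1/n+1/m) = nm$, that the failure probability is at most $1/n+1/m$ for $n,m$ large enough, with the lower-order comparison of $\log(n+m)$ to $\log(nm)$ absorbed into ``sufficiently large''. I expect the main obstacle to be precisely the unboundedness of the Poisson weights: the textbook bounded-matrix Bernstein inequality does not apply, and a crude truncation of the entries at the natural level $\asymp \log(nm)$ would inflate the additive term to order $\log^2(nm)$, which is too weak to feed into the singular-vector perturbation bound of Proposition \ref{prop-U-UC}; it is the sub-exponential route that keeps this term at $\log(nm)/3$. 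A secondary, purely bookkeeping, nuisance is tracking the exact constants $\sqrt 2$ and $1/3$ and confirming that the dimension factor $n+m$ does not spoil them.
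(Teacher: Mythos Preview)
Your proposal is correct and follows essentially the same approach as the paper: both reduce to a sub-exponential matrix Bernstein inequality for centered Poisson entries with variance proxy $\mathcal{Z}$ and scale $1/3$, then set the deviation level to $\log(nm)$. The only difference is that the paper invokes this inequality as a black box (Corollary~4 of \cite{bacry2018concentration}, stated as Lemma~\ref{lemma-poisson-centrality}) and simply checks $\max(\|\Omega\|_{1,\infty},\|\Omega\|_{\infty,1})\le\mathcal{Z}$, whereas you sketch its derivation via Hermitian dilation and the matrix Laplace transform.
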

The proof can be found in Appendix \ref{proof-lemma-A-Omega}.

\begin{lemma}
\label{lemmaSsvd}
    For a matrix $S = Y \Lambda_S H^\top$, and for $k=1,\dots,K$ and $l=1,\dots,L$, under Assumptions \ref{assump-B} and \ref{assump-theta-gamma}, the following properties hold:
    \begin{gather}
        0 < C \leq Y_{k1} \leq 1 \quad \text{and} \quad 0 < C \leq H_{l1} \leq 1 \nonumber \\
        \lambda_1(S^\top S) - \lambda_2(S^\top S) = \lambda_1(S S^\top) - \lambda_2(S S^\top)  \geq C, \nonumber  \\
        U_{i1}>0,~ V_{j1}>0 \quad \text{for} \quad 1 \leq i \leq n~\text{and}~1 \leq j \leq m. \nonumber 
    \end{gather}
\end{lemma}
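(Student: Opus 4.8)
The plan is to exploit the structure $S = \Psi_\theta B \Psi_\gamma^\top$, where $\Psi_\theta$ and $\Psi_\gamma$ are positive diagonal matrices whose entries are $\|\theta^{(k)}\|/\|\theta\|$ and $\|\gamma^{(l)}\|/\|\gamma\|$. By \eqref{thetaktheta} (a consequence of Assumption \ref{assump-theta-gamma}), these diagonal entries are all bounded above by $1$ (since $\|\theta^{(k)}\| \le \|\theta\|$) and bounded below by a positive constant $C$; hence $\Psi_\theta, \Psi_\gamma$ and their inverses have all their singular values pinched between two positive constants. Combined with Assumption \ref{assump-B} — $B$ non-singular, entries in $[0,1]$, with $BB^\top$ and $B^\top B$ non-negative and irreducible — this gives us that $S$ is non-singular, has entries bounded by a constant, and that $SS^\top$ and $S^\top S$ are non-negative irreducible matrices with constant-order spectra.

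First I would prove the claim $0 < C \le Y_{k1} \le 1$. The matrix $SS^\top$ is non-negative (entrywise) and irreducible, so by the Perron--Frobenius theorem its leading eigenvector is the unique (up to sign) eigenvector that can be chosen entrywise positive, and it is strictly positive; this leading eigenvector is exactly $Y_{\cdot 1}$ (up to sign, which we fix to make it positive). The upper bound $Y_{k1}\le 1$ is immediate since $Y$ has orthonormal columns, so $\|Y_{\cdot 1}\| = 1$ forces each coordinate to lie in $[-1,1]$. For the lower bound $Y_{k1}\ge C$, I would use a quantitative Perron--Frobenius estimate: since the entries of $SS^\top$ are bounded above by a constant and its Perron root $\lambda_1(SS^\top)\asymp 1$ (from the constant-order spectrum), the relation $\lambda_1(SS^\top) Y_{k1} = \sum_{k'} (SS^\top)_{kk'} Y_{k'1}$ together with $\max_{k'} Y_{k'1}\ge 1/\sqrt{K}$ (as $\sum_{k'} Y_{k'1}^2 = 1$) propagates a uniform positive lower bound to every coordinate, because $K$ and $L$ are fixed; irreducibility ensures the propagation reaches all indices. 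The argument for $H_{l1}$ is identical with $S^\top S$ in place of $SS^\top$.

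Next, for the spectral gap $\lambda_1(S^\top S) - \lambda_2(S^\top S) = \lambda_1(SS^\top) - \lambda_2(SS^\top) \ge C$, I would again invoke Perron--Frobenius: for a non-negative irreducible matrix the Perron root is a \emph{simple} eigenvalue strictly larger than the modulus of every other eigenvalue, so the gap is strictly positive. To make it a constant independent of $n,m$, I note that the non-singular matrix $S$ depends only on $B$ and on the diagonal entries of $\Psi_\theta,\Psi_\gamma$, all of which lie in a fixed compact set bounded away from degeneracy (entries of $B$ in $[0,1]$ with $|\det B|$ bounded below — which follows because $\|S\|_{\min}\asymp 1$ by the constant-order spectrum — and diagonal entries of $\Psi_\theta,\Psi_\gamma$ in $[C,1]$); since the gap is a continuous, strictly positive function on this compact parameter set, it is bounded below by a positive constant. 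The final claim $U_{i1}>0$, $V_{j1}>0$ then follows directly from the row form \eqref{rowform}: $U_{i1} = (\theta_i/\|\theta^{(c^r_i)}\|)\, Y_{c^r_i 1}$, and both factors are strictly positive by condition \eqref{theta-min} and the already-established lower bound on $Y_{k1}$; similarly for $V_{j1}$.

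The main obstacle is making the lower bounds \emph{uniform} — i.e., genuinely constants not depending on $n$ and $m$. Perron--Frobenius alone gives strict positivity of $Y_{k1}$ and of the gap for each fixed network, but one must rule out these quantities drifting to zero as $n,m\to\infty$. The clean way around this is the compactness observation above: the entire relevant spectral data of $S$ is a continuous function of the finitely many parameters $(B, \Psi_\theta, \Psi_\gamma)$, which under Assumptions \ref{assump-B}--\ref{assump-theta-gamma} are confined to a fixed compact set on which non-singularity and irreducibility persist, so every strictly-positive continuous functional attains a positive minimum there. I would need to double-check that irreducibility of $BB^\top$ and $B^\top B$ is preserved under the positive-diagonal conjugation defining $SS^\top = \Psi_\theta B \Psi_\gamma^2 B^\top \Psi_\theta$ and $S^\top S$ — but conjugating by a positive diagonal matrix preserves the zero/non-zero pattern, hence preserves irreducibility, so this is routine.
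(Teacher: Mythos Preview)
Your proposal is correct and follows essentially the same route as the paper: both arguments reduce to showing that $SS^\top$ and $S^\top S$ inherit non-negativity and irreducibility from $BB^\top$ and $B^\top B$ via the positive-diagonal sandwich $S=\Psi_\theta B\Psi_\gamma^\top$ (the paper does this by the entrywise bounds $C_1^2 B_{kl}\le S_{kl}\le C_2^2 B_{kl}$, which is exactly your ``same zero/non-zero pattern'' observation), and then invoke Perron--Frobenius for the positivity of $Y_{\cdot 1},H_{\cdot 1}$ and the simplicity of the top eigenvalue. Your compactness argument for the \emph{uniform} lower bounds on $Y_{k1}$ and on the spectral gap is in fact more careful than the paper's treatment, which asserts rather briefly that ``these properties do not depend on $n$ or $m$'' once the entrywise comparison with $B^\top B$ is in hand.
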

The proof of Lemma \ref{lemmaSsvd} can be found in Appendix \ref{proof-lemmaSsvd}.

\begin{lemma}[Theorem 2 in \cite{yu2015useful}]
    \label{lemma-UO-U}
    Let $G, \widehat{G} \in \mathbb{R}^{n \times n}$ be symmetric matrices with eigenvalues $\lambda_1 \geq \cdots \geq \lambda_n$ and $\widehat{\lambda}_1 \geq \cdots \geq \widehat{\lambda}_n$, and corresponding eigenvectors $d_1, \dots, d_n$ and $\widehat{d}_1, \dots, \widehat{d}_n$, respectively. Fix indices $1 \leq r \leq s \leq n$. Assume that the eigenvalue gap condition holds: $\min(\lambda_{r-1} - \lambda_r, \lambda_s - \lambda_{s+1}) \geq 0$, where we define $\lambda_0 = \infty$ and $\lambda_{n+1} = -\infty$. Let $k=s-r+1$, $D = (d_r, d_{r+1}, \dots, d_s) \in \mathbb{R}^{n \times k}$, and define the matrices  $\widehat{D} = (\widehat{d}_r, \widehat{d}_{r+1}, \dots, \widehat{d}_s) \in \mathbb{R}^{n \times k}$. Then, there exists an orthogonal matrix $O \in \mathbb{R}^{k \times k}$ such that
    \begin{align*}
        \|DO - \widehat{D}\|_F \leq \frac{2^{\frac{3}{2}} k^{\frac{1}{2}} \|G -\widehat{G}\|}{ \min(\lambda_{r-1} - \lambda_r, \lambda_s - \lambda_{s+1}) }.
    \end{align*}
\end{lemma}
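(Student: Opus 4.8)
The statement is quoted as Theorem~2 of \cite{yu2015useful}, so one option is simply to cite it; below I sketch a self-contained argument. Write $\delta \equiv \min(\lambda_{r-1}-\lambda_r,\ \lambda_s-\lambda_{s+1})$; the asserted bound is vacuous unless $\delta>0$, so assume this. Let $D_\perp\in\mathbb{R}^{n\times(n-k)}$ complete $D$ to an orthonormal basis, so that $G = D\Lambda_1 D^\top + D_\perp\Lambda_0 D_\perp^\top$ with $\Lambda_1=\diag(\lambda_r,\dots,\lambda_s)$ and $\Lambda_0$ the diagonal matrix of the remaining eigenvalues of $G$; write $\widehat\Lambda_1=\diag(\widehat\lambda_r,\dots,\widehat\lambda_s)$. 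Recall that the singular values of $D_\perp^\top\widehat D$ are exactly the sines of the canonical angles between $\mathrm{col}(D)$ and $\mathrm{col}(\widehat D)$, so $\|\sin\Theta(D,\widehat D)\|_F=\|D_\perp^\top\widehat D\|_F$. The plan is to prove (i) the Frobenius-norm $\sin\Theta$ bound $\|\sin\Theta(D,\widehat D)\|_F\le 2k^{1/2}\|G-\widehat G\|/\delta$, and (ii) the Procrustes inequality: there exists an orthogonal $O\in\mathbb{R}^{k\times k}$ with $\|DO-\widehat D\|_F\le\sqrt2\,\|\sin\Theta(D,\widehat D)\|_F$; then combine the two.

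For (i), first I would record the Sylvester identity obtained from the invariance relations $\widehat G\widehat D=\widehat D\widehat\Lambda_1$ and $D_\perp^\top G=\Lambda_0 D_\perp^\top$: left-multiplying the first by $D_\perp^\top$, right-multiplying the second by $\widehat D$, and subtracting gives
\[
D_\perp^\top(\widehat G-G)\widehat D \;=\; (D_\perp^\top\widehat D)\,\widehat\Lambda_1 \;-\; \Lambda_0\,(D_\perp^\top\widehat D).
\]
Passing to the eigenbases of $\Lambda_0$ and $\widehat\Lambda_1$, the linear map $X\mapsto X\widehat\Lambda_1-\Lambda_0 X$ acts entrywise as multiplication by the scalars $\widehat\lambda_j-\mu_i$, where $j$ ranges over the block $\{r,\dots,s\}$ and $\mu_i$ over the eigenvalues of $G$ outside the block; the ordering $\lambda_{r-1}\ge\lambda_r\ge\dots\ge\lambda_s\ge\lambda_{s+1}$ forces $|\lambda_j-\mu_i|\ge\delta$ for all such pairs, and Weyl's inequality $|\widehat\lambda_j-\lambda_j|\le\|G-\widehat G\|$ then gives $|\widehat\lambda_j-\mu_i|\ge\delta-\|G-\widehat G\|$. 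Hence, on the event $\|G-\widehat G\|<\delta/2$ this map is invertible with operator norm (in the Frobenius geometry) at most $2/\delta$, so
\[
\|D_\perp^\top\widehat D\|_F \le \frac{2}{\delta}\,\|D_\perp^\top(\widehat G-G)\widehat D\|_F \le \frac{2}{\delta}\,\|G-\widehat G\|\,\|\widehat D\|_F = \frac{2k^{1/2}}{\delta}\,\|G-\widehat G\|.
\]
On the complementary event $\|G-\widehat G\|\ge\delta/2$ the bound is immediate from $\|\sin\Theta(D,\widehat D)\|_F\le k^{1/2}\le 2k^{1/2}\|G-\widehat G\|/\delta$. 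This two-case split, rather than an \emph{a priori} smallness assumption on the perturbation, is what produces the extra factor of $2$.

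For (ii), take an SVD $D^\top\widehat D=P(\cos\Theta)Q^\top$ with $P,Q$ orthogonal and $\cos\Theta=\diag(\cos\theta_1,\dots,\cos\theta_k)$, and set $O=PQ^\top$. Then $\|DO-\widehat D\|_F^2=2k-2\,\tr(O^\top D^\top\widehat D)=2\sum_i(1-\cos\theta_i)\le 2\sum_i\sin^2\theta_i=2\|\sin\Theta(D,\widehat D)\|_F^2$, using $1-\cos\theta\le\sin^2\theta$ on $[0,\pi/2]$. Combining (i) and (ii),
\[
\|DO-\widehat D\|_F \le \sqrt2\,\|\sin\Theta(D,\widehat D)\|_F \le \frac{2^{3/2}k^{1/2}\,\|G-\widehat G\|}{\min(\lambda_{r-1}-\lambda_r,\ \lambda_s-\lambda_{s+1})},
\]
which is the asserted inequality. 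I expect the main obstacle to be the careful bookkeeping in step (i): establishing invertibility of the Sylvester operator with the stated norm without assuming the perturbation is small forces the case split, and one must be attentive that the gap estimate $|\lambda_j-\mu_i|\ge\delta$ relies only on the ordering of the $\lambda_i$ together with the definition of $\delta$. The remaining ingredients — Weyl's inequality, the submultiplicative bound $\|D_\perp^\top(\widehat G-G)\widehat D\|_F\le k^{1/2}\|G-\widehat G\|$, and the scalar inequality $1-\cos\theta\le\sin^2\theta$ — are routine.
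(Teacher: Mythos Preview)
The paper does not prove this lemma at all: it is stated as a direct quotation of Theorem~2 in \cite{yu2015useful} and used as a black box in the proof of Proposition~\ref{prop-U-UC}. Your self-contained sketch is correct and in fact reproduces the argument behind the cited result --- a Davis--Kahan $\sin\Theta$ bound obtained from the Sylvester identity, with the two-case split to remove any smallness hypothesis on $\|G-\widehat G\|$, followed by the Procrustes step to convert the subspace distance into a bound on $\|DO-\widehat D\|_F$. So your proposal goes beyond what the paper does, but along exactly the route the cited source takes.
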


\begin{proof}[Proof of Proposition \ref{prop-U-UC}]
    We begin by bounding the spectral norm of the difference between $A^\top A$ and $\Omega^\top \Omega$. First, with a probability at least $1-1/n - 1/m$, we have
    \begin{align}
        \|A^\top A - \Omega^\top \Omega\| &\leq \|A^\top A - A^\top \Omega\| + \|A^\top \Omega - \Omega^\top \Omega\| \leq \|A\| \|A - \Omega\| + \|A-\Omega\|\|\Omega\| \nonumber \\
        &\leq \|A-\Omega\|(\|A\| + \|\Omega\|) \leq \|A-\Omega\|(\|A-\Omega\|+2\|\Omega\|) \nonumber \\
        &\leq \left\{\sqrt{2 \log(nm) \mathcal{Z}} + \log(nm)/3\right\} \left\{ 2\|\Omega\| + \sqrt{2 \log(nm) \mathcal{Z}} + \log(nm)/3  \right\} \nonumber \\
        &\leq C_1 \left\{\sqrt{\log(nm) \mathcal{Z}} + \log(nm)\right\} \|\theta\| \|\gamma\| + C_2 \left\{\sqrt{\log(nm) \mathcal{Z}} + \log(nm)\right\}^2,
        \label{eq-AA-OmegaOmega}
    \end{align}
    where the fifth step follows from Lemma \ref{lemma-A-Omega}, and the sixth step follows from Lemma \ref{lambda-Omega}, which implies $\|\Omega\| = \sqrt{\lambda_1 (\Omega^\top \Omega)} \asymp \|\theta\|\|\gamma\|$.

    Next, we apply Lemma \ref{lemma-UO-U} to bound the distance between the top left singular vectors. We have
    \begin{align*}
        \|\widehat{U}_{\cdot 1} - U_{\cdot 1} C_U\| &\leq \frac{C\|AA^\top - \Omega \Omega^\top\|}{\lambda_1(\Omega \Omega^\top) - \lambda_2(\Omega \Omega^\top)} \\
        &\leq \frac{C_1 \left\{\sqrt{\log(nm) \mathcal{Z}} + \log(nm)\right\} \|\theta\| \|\gamma\| + C_2 \left\{\sqrt{\log(nm) \mathcal{Z}} + \log(nm)\right\}^2}{\lambda_1(\Omega \Omega^\top) - \lambda_2(\Omega \Omega^\top)} \\
        & \leq \frac{C_1 \left\{\sqrt{\log(nm) \mathcal{Z}} + \log(nm)\right\} \|\theta\| \|\gamma\| + C_2 \left\{\sqrt{\log(nm) \mathcal{Z}} + \log(nm)\right\}^2}{C \|\theta\|^2 \|\gamma\|^2} \\
        & \leq C_1 \frac{\sqrt{\log(nm) \mathcal{Z}} + \log(nm)}{\|\theta\| \|\gamma\|} + C_2 \left\{ \frac{\sqrt{\log(nm) \mathcal{Z}} + \log(nm)}{\|\theta\| \|\gamma\|} \right\}^2 \\
        & \leq C \frac{\sqrt{\log(nm) \mathcal{Z}} + \log(nm)}{\|\theta\| \|\gamma\|},
    \end{align*}
    where the second step follows from \eqref{eq-AA-OmegaOmega}, the third step follows from \eqref{singularvalue} in Proposition \ref{prop-UV}, and Lemma \ref{lemmaSsvd}, which implies that
    \begin{align}
        \lambda_1(\Omega \Omega^\top) - \lambda_2(\Omega \Omega^\top) = \|\theta\|^2 \|\gamma\|^2 (\lambda_1(SS^\top) - \lambda_2(SS^\top)) \geq C \|\theta\|^2 \|\gamma\|^2,
        \label{eq-lambda-Omega-Omega}
    \end{align}
    and the last step follows from \eqref{eq-lognmZ}, and thus on the right-hand side of the inequality, the first term dominates the second term for large $n$ and $m$.

    Similarly, we apply Lemma \ref{lemma-UO-U} to bound the Frobenius norm distance between the subspaces spanned by the 2nd to $\kappa$-th left singular vectors.
    \begin{align*}
        \|\widehat{U}_{\cdot 2 \sim \kappa} - U_{\cdot 2 \sim \kappa}O_U \|_F &\leq \frac{C\|AA^\top - \Omega \Omega^\top\|}{\min( \lambda_1(\Omega \Omega^\top) - \lambda_2(\Omega \Omega^\top), 
        \lambda_{\kappa}(\Omega \Omega^\top) - \lambda_{\kappa +1}(\Omega \Omega^\top))} \\
        &\leq \frac{C\|AA^\top - \Omega \Omega^\top\|}{\min( \lambda_1(\Omega \Omega^\top) - \lambda_2(\Omega \Omega^\top), 
        \lambda_{\kappa}(\Omega \Omega^\top))} \\
        &\leq \frac{C_1 \left\{\sqrt{\log(nm) \mathcal{Z}} + \log(nm)\right\} \|\theta\| \|\gamma\| + C_2 \left\{\sqrt{\log(nm) \mathcal{Z}} + \log(nm)\right\}^2}{C \|\theta\|^2 \|\gamma\|^2} \\
        & \leq C_1 \frac{\sqrt{\log(nm) \mathcal{Z}} + \log(nm)}{\|\theta\| \|\gamma\|} + C_2 \left\{ \frac{\sqrt{\log(nm) \mathcal{Z}} + \log(nm)}{\|\theta\| \|\gamma\|} \right\}^2 \\
        & \leq C \frac{\sqrt{\log(nm) \mathcal{Z}} + \log(nm)}{\|\theta\| \|\gamma\|},
    \end{align*}
    where the second step follows from Proposition \ref{prop-UV}, which implies that $\lambda_{(\kappa + 1)}(\Omega \Omega^\top) = 0$, the third step follows from Lemma \ref{lambda-Omega}, \eqref{eq-AA-OmegaOmega} and \eqref{eq-lambda-Omega-Omega}.

    Following an analogous procedure for the right singular vectors, we can obtain similar bounds for $\widehat{V}$ and $V$: 
    \begin{align*}
        \|\widehat{V}_{\cdot 1} - V_{\cdot 1} C_V\| \leq C \frac{\sqrt{\log(nm) \mathcal{Z}} + \log(nm)}{\|\theta\| \|\gamma\|}, \\
        \|\widehat{V}_{\cdot 2 \sim \kappa} - V_{\cdot \sim \kappa} C_V\|_F \leq C \frac{\sqrt{\log(nm) \mathcal{Z}} + \log(nm)}{\|\theta\| \|\gamma\|}.
    \end{align*}
    This completes the proof.
\end{proof}

\subsection{Proof of Proposition \ref{proposition R-R}}
\label{proof-proposition R-R}
\begin{proof}
    For two row nodes $i_1$ and $i_2$, the distance is  
    \begin{align}
    \label{Rr-Rr}
        \|R^r_{i_1 \cdot} - R^r_{i_2 \cdot} \|^2 &= \Big\|\frac{(U_{\cdot 2 \sim \kappa} O_{U})_{i_1 \cdot}}{C_U U_{i_1 1}} - \frac{(U_{\cdot 2 \sim \kappa} O_{U})_{i_2 \cdot}}{C_U U_{i_2 1}}\Big\|^2 = \Big\| \frac{(U_{\cdot 2 \sim \kappa})_{i_1 \cdot}}{U_{i_1 1}} - \frac{(U_{\cdot 2 \sim \kappa})_{i_2 \cdot}}{U_{i_2 1}} \Big\|^2 \nonumber \\
        &= \Big\| \frac{(U_{\cdot 2 \sim \kappa})_{i_1 \cdot}}{U_{i_1 1}} - \frac{(U_{\cdot 2 \sim \kappa})_{i_2 \cdot}}{U_{i_2 1}} \Big\|^2 + \Big\| \frac{U_{i_1 1}}{U_{i_1 1}} - \frac{U_{i_2 1}}{U_{i_2 1}} \Big\|^2 \nonumber \\
        &= \Big\| \frac{U_{i_1 \cdot}}{U_{i_1 1}} - \frac{U_{i_2 \cdot}}{U_{i_2 1}} \Big\|^2  = \left\| \frac{\frac{\theta_{i_1}}{\|\theta^{(c_{i_1}^r)}\|} Y_{c^r_{i_1} \cdot}}{\frac{\theta_{i_1}}{\|\theta^{(c_{i_1}^r)}\|} Y_{c^r_{i_1} 1}} - \frac{\frac{\theta_{i_2}}{\|\theta^{(c_{i_2}^r)}\|} Y_{c^r_{i_2} \cdot}}{\frac{\theta_{i_2}}{\|\theta^{(c_{i_2}^r)}\|} Y_{c^{i_2} 1}} \right\|^2 \nonumber \\
        &= \left\|\frac{Y_{c^r_{i_1} \cdot}}{Y_{c^r_{i_1} 1}} - \frac{Y_{c^r_{i_2} \cdot}}{Y_{c^{i_2} 1}} \right\|^2,
    \end{align}
    where the second step is derived from Proposition \ref{prop-U-UC}, and the fifth equality follows from Proposition \ref{prop-UV}. Consequently, \eqref{Rr-Rr} shows that $\|R^r_{i_1 \cdot} - R^r_{i_2 \cdot}\|^2 = 0$ if $c^r_{i_1} = c^r_{i_2}$. Otherwise, if $c^r_{i_1} \neq c^r_{i_2}$, we obtain
    \begin{align}
        \|R^r_{i_1 \cdot} - R^r_{i_2 \cdot} \|^2 &= \left\|\frac{Y_{c^r_{i_1} \cdot}}{Y_{c^r_{i_1} 1}} - \frac{Y_{c^r_{i_2} \cdot}}{Y_{c^{i_2} 1}} \right\|^2 = \left\|\frac{Y_{c^r_{i_1} \cdot}}{Y_{c^r_{i_1} 1}} \right\|^2 + \left\|\frac{Y_{c^r_{i_2} \cdot}}{Y_{c^{i_2} 1}} \right\|^2 - 2 \left  \langle \frac{Y_{c^r_{i_1} \cdot}}{Y_{c^r_{i_1} 1}}, \frac{Y_{c^r_{i_2} \cdot}}{Y_{c^{i_2} 1}} \right \rangle \nonumber \\
        &= \frac{1}{\lvert Y_{c^r_{i_1} 1} \rvert} + \frac{1}{\lvert Y_{c^r_{i_2} 1} \rvert}  \geq 1+1 = 2,
    \end{align}
    where the third step holds because $H$ is an orthogonal matrix. The fourth step follows from Lemma \ref{lemmaSsvd}.

    Similarly, it can be demonstrated that $\|R^c_{j_1 \cdot} - R^c_{j_2 \cdot}\|^2 \geq 2$ when $c^c_{j_1} \neq c^c_{j_2}$, and is zero otherwise. This concludes the proof.
\end{proof}

\subsection{Proof of Proposition \ref{R-R-F}}
\label{proof-R-R-F}
To prove Proposition \ref{R-R-F}, we first establish Lemma \ref{V-T} to bound the number of ill-behaviour nodes, and then present a useful technical inequality in Lemma \ref{vaub}.

For a constant $0 < C <1$, we define the following sets
\begin{align}
    \widehat{T}^{r} \equiv \left( 1 \leq i \leq n, \left| \frac{\widehat{U}_{i1}}{C_U U_{i1}} -1 \right| \leq C \right) \quad \text{and} \quad \widehat{T}^{c} \equiv \left( 1 \leq j \leq m, \left| \frac{\widehat{V}_{j1}}{C_V V_{j1}} -1 \right| \leq C \right)
    \label{hat-T}
\end{align}
After this, Lemma \ref{V-T} helps us figure out how many nodes are not in these sets $\widehat{T}^{r}$ and $\widehat{T}^{c}$.

\begin{lemma}
    \label{V-T}
    For nodes in $\widehat{T}^{r}$ and $\widehat{T}^{c}$, and under Assumptions \ref{assump-B} and \ref{assump-theta-gamma}, the following equations hold
    \begin{align*}
        \lvert \widehat{U}_{i1} \rvert \asymp \lvert C_U U_{i1} \rvert \asymp \frac{\theta_i}{\|\theta\|}~ \text{for} ~ i \in \widehat{T}^{r}, \\
        \lvert \widehat{V}_{j1} \rvert \asymp \lvert C_V V_{j1} \rvert \asymp \frac{\gamma_j}{\|\gamma\|}~ \text{for} ~ j \in \widehat{T}^{c}.
        \end{align*}
        Furthermore, with a probability at least $1-1/n-1/m$, the cardinality of $\mathcal{V}^r \setminus \widehat{T}^{r}$ and $\mathcal{V}^c \setminus \widehat{T}^{c}$ satisfy
        \begin{align*}
            | \mathcal{V}^r \setminus \widehat{T}^{r} | \leq \frac{C\left\{\sqrt{\log(nm) \mathcal{Z}} + \log(nm)\right\}^2}{\theta_{\min}^2 \|\gamma\|^2} ~ \text{and} ~  |\mathcal{V}^c \setminus \widehat{T}^c| \leq \frac{C\left\{\sqrt{\log(nm) \mathcal{Z}} + \log(nm)\right\}^2}{\gamma_{\min}^2 \|\theta\|^2}.
        \end{align*}
\end{lemma}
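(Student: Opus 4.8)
The plan is to establish the order-of-magnitude statement first and then feed it into the cardinality bound. First I would fix an index $i \in \widehat{T}^{r}$. By the definition of $\widehat{T}^{r}$ in \eqref{hat-T}, $\lvert \widehat{U}_{i1}/(C_U U_{i1}) - 1\rvert \leq C < 1$, hence $1 - C \leq \lvert\widehat{U}_{i1}\rvert/\lvert C_U U_{i1}\rvert \leq 1 + C$, which already yields $\lvert\widehat{U}_{i1}\rvert \asymp \lvert C_U U_{i1}\rvert = \lvert U_{i1}\rvert$, using $\lvert C_U\rvert = 1$ from Proposition \ref{prop-U-UC}. It then remains to show $\lvert U_{i1}\rvert \asymp \theta_i/\|\theta\|$ with constants uniform in $i$. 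This follows by writing $U_{i1} = (\theta_i/\|\theta^{(c^r_i)}\|)\, Y_{c^r_i 1}$ via the row form \eqref{rowform} of Proposition \ref{prop-UV}, then applying $0 < C \leq Y_{k1} \leq 1$ from Lemma \ref{lemmaSsvd} together with $\|\theta^{(k)}\| \asymp \|\theta\|$ from \eqref{thetaktheta}. The corresponding statements for $\widehat{V}_{j1}$ and $V_{j1}$ are obtained in exactly the same way, now invoking $0 < C \leq H_{l1} \leq 1$ and $\|\gamma^{(l)}\| \asymp \|\gamma\|$.

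For the cardinality bound, the key step is that every $i \notin \widehat{T}^{r}$ carries a large \emph{absolute} perturbation in its first left-singular-vector coordinate: $\lvert\widehat{U}_{i1} - C_U U_{i1}\rvert = \lvert C_U U_{i1}\rvert\cdot\lvert\widehat{U}_{i1}/(C_U U_{i1}) - 1\rvert > C\,\lvert U_{i1}\rvert \gtrsim \theta_{\min}/\|\theta\|$, where the last step uses the lower bound $\lvert U_{i1}\rvert \gtrsim \theta_i/\|\theta\| \geq \theta_{\min}/\|\theta\|$, which holds for \emph{every} index $i$, not merely for $i \in \widehat{T}^{r}$. Summing the squared perturbations over $i \notin \widehat{T}^{r}$ and bounding by the full sum gives
\[
\lvert \mathcal{V}^r \setminus \widehat{T}^{r}\rvert \cdot \frac{C\,\theta_{\min}^2}{\|\theta\|^2} \;\leq\; \sum_{i \notin \widehat{T}^{r}} \lvert\widehat{U}_{i1} - C_U U_{i1}\rvert^2 \;\leq\; \big\|\widehat{U}_{\cdot 1} - U_{\cdot 1} C_U\big\|^2 .
\]
Invoking the perturbation bound $\|\widehat{U}_{\cdot 1} - U_{\cdot 1} C_U\| \leq C\{\sqrt{\log(nm)\mathcal{Z}} + \log(nm)\}/(\|\theta\|\|\gamma\|)$ from Proposition \ref{prop-U-UC}, valid with probability at least $1-1/n-1/m$, and rearranging produces the stated bound on $\lvert \mathcal{V}^r \setminus \widehat{T}^{r}\rvert$. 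The bound on $\lvert \mathcal{V}^c \setminus \widehat{T}^c\rvert$ follows identically from the perturbation bound on $\widehat{V}_{\cdot 1}$, and since Proposition \ref{prop-U-UC} delivers both perturbation bounds on a single event, the two cardinality bounds hold simultaneously with probability at least $1-1/n-1/m$.

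The computation itself is routine Markov-type aggregation of the $\ell_2$ perturbation bound, so I do not anticipate a serious obstacle. The one load-bearing point — and where I would be most careful — is the uniform lower bound $\lvert U_{i1}\rvert \gtrsim \theta_{\min}/\|\theta\|$: it rests entirely on the strict positivity $Y_{k1}, H_{l1} \geq C > 0$ supplied by Lemma \ref{lemmaSsvd}, without which the denominators in the ratios defining $\widehat{T}^{r}$, $\widehat{T}^c$ (and, more broadly, the ratio construction $\widehat{R}^r, \widehat{R}^c$ in Algorithm \ref{alg:Bi-SCORE}) could degenerate and the entire argument would collapse. Everything downstream is bookkeeping of generic constants.
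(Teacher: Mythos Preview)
Your proposal is correct and follows essentially the same approach as the paper's proof: both obtain $\lvert U_{i1}\rvert \asymp \theta_i/\|\theta\|$ from the row form in Proposition~\ref{prop-UV} combined with the positivity $Y_{k1} \geq C > 0$ of Lemma~\ref{lemmaSsvd}, and both bound $\lvert \mathcal{V}^r \setminus \widehat{T}^r\rvert$ by a Markov-type aggregation of the coordinate-wise perturbations $\lvert\widehat{U}_{i1} - C_U U_{i1}\rvert^2$ against the $\ell_2$ bound $\|\widehat{U}_{\cdot 1} - U_{\cdot 1} C_U\|^2$ from Proposition~\ref{prop-U-UC}. The only cosmetic difference is that the paper phrases the counting step through the relative error $\bigl(\widehat{U}_{i1}/(C_U U_{i1}) - 1\bigr)^2$ before multiplying out, whereas you work directly with the absolute error; the two are identical after one line of algebra.
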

The proof of Lemma \ref{V-T} can be found in Appendix \ref{proof-V-T}.

\begin{lemma}[Lemma A.6 in \cite{wang2020spectral}]
\label{vaub}
    For any two vectors $u,v \in \mathbb{R}^n$, and any two positive numbers $a,b \in \mathbb{R}$, the following inequality holds,
    \begin{align*}
        \left\| \frac{u}{a} - \frac{v}{b} \right\|^2 \leq 2\left\{ \frac{1}{a^2} \|u-v\|^2 + \frac{(b-a)^2}{(ab)^2} \|v\|^2 \right\}.
    \end{align*}
\end{lemma}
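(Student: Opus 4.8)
The plan is to prove this purely algebraic inequality by a single add-and-subtract decomposition followed by the elementary bound $\|x+y\|^2 \le 2\|x\|^2 + 2\|y\|^2$. The quantity on the left compares $u/a$ against $v/b$, which differ in both the numerator vector and the scalar denominator, so the natural strategy is to interpolate through the ``mixed'' term $v/a$, which shares the numerator of the second fraction but the denominator of the first. This isolates the contribution of the numerator discrepancy $u-v$ from that of the scalar discrepancy between $1/a$ and $1/b$.

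Concretely, I would first write
\begin{align*}
    \frac{u}{a} - \frac{v}{b} = \left(\frac{u}{a} - \frac{v}{a}\right) + \left(\frac{v}{a} - \frac{v}{b}\right) = \frac{u-v}{a} + \left(\frac{1}{a} - \frac{1}{b}\right) v.
\end{align*}
Next I would simplify the scalar coefficient of $v$ by putting the fractions over a common denominator, namely $\frac{1}{a} - \frac{1}{b} = \frac{b-a}{ab}$, so that the decomposition reads $\frac{u}{a} - \frac{v}{b} = \frac{u-v}{a} + \frac{b-a}{ab}\,v$.

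Finally I would take squared norms and apply the inequality $\|x+y\|^2 \le 2\|x\|^2 + 2\|y\|^2$, which holds for all vectors $x,y$ since $\|x+y\|^2 = \|x\|^2 + 2\langle x,y\rangle + \|y\|^2$ and $2\langle x,y\rangle \le \|x\|^2 + \|y\|^2$. Setting $x = (u-v)/a$ and $y = \frac{b-a}{ab}v$ yields
\begin{align*}
    \left\|\frac{u}{a} - \frac{v}{b}\right\|^2 \le 2\left\|\frac{u-v}{a}\right\|^2 + 2\left\|\frac{b-a}{ab}\,v\right\|^2 = 2\left\{\frac{1}{a^2}\|u-v\|^2 + \frac{(b-a)^2}{(ab)^2}\|v\|^2\right\},
\end{align*}
which is exactly the claimed bound. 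Note the positivity of $a$ and $b$ is only used to guarantee the denominators are nonzero so the fractions are well defined; the inequality itself is otherwise insensitive to signs.

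There is no genuine obstacle in this argument: the only design choice is selecting the interpolation point $v/a$ rather than, say, $u/b$, and both choices work symmetrically. The one point worth stating carefully is that the factor of $2$ in the conclusion comes precisely from the single application of $\|x+y\|^2 \le 2\|x\|^2 + 2\|y\|^2$, so no sharper constant is sought or needed; this loose-but-clean constant is exactly what the downstream applications (such as the bounds on $\|\widehat{R}^r - R^r\|_F^2$) require.
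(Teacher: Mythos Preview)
Your proof is correct and is the standard argument for this elementary inequality. The paper does not give its own proof of this lemma; it simply cites it as Lemma~A.6 of \cite{wang2020spectral} and uses it as a black box, so there is nothing further to compare.
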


\begin{proof}[Proof of Proposition \ref{R-R-F}]
    We observe that
    \begin{align}
        \|(U_{\cdot 2 \sim \kappa} O_U)_{i \cdot}\|^2 = \|(U_{\cdot 2 \sim \kappa})_{i \cdot} O_U\|^2 = \|(U_{\cdot 2 \sim \kappa})_{i \cdot}\|^2 \leq \|U_{i \cdot}\|^2 \leq C \frac{\theta_i^2}{\|\theta\|^2},
        \label{UOU}
    \end{align}
    where the last inequality comes from Proposition \ref{prop-UV}. Next, we want to show $\|R^r_{i \cdot}\|^2 \leq C$. Based on the definition of $R^r$, we have
    \begin{align}
    \label{R-2norm}
        \|R^r_{i \cdot}\|^2 = \left\|  \frac{(U_{\cdot 2 \sim \kappa} O_{U})_{i \cdot}}{C_U U_{i1}} \right\|^2 \leq \frac{C \frac{\theta_i^2}{\|\theta\|^2}}{\lvert C_U U_{i1} \rvert^2} \leq \frac{C \frac{\theta_i^2}{\|\theta\|^2}}{\frac{\theta_i^2}{\|\theta^{(c^r_i)}\|^2} \lvert Y_{c^r_i 1} \rvert^2} \leq C,
    \end{align}
    where the third step follows from Proposition \ref{prop-UV}, and the fourth step uses Lemma \ref{lemmaSsvd}, which implies $Y_{k1} \geq C >0$. 

    Next, we want to bound the total squared Frobenius norm difference between our estimated representation $\widehat{R}^r$ and the true representation $R^r$. We split this total difference into two parts: the sum over ``ill-behaved'' nodes (those outside $\widehat{T}^r$) and the sum over ``well-behaved'' nodes (those inside $\widehat{T}^r$).
    \begin{align}
        \|\widehat{R}^r - R^r\|_F^2 = \sum_{ i \in (\mathcal{V}^r \setminus \widehat{T}^r)} \|\widehat{R}^r_{i \cdot}- R^r_{i \cdot}\|^2 + \sum_{ i \in \widehat{T}^r} \|\widehat{R}^r_{i \cdot}- R^r_{i \cdot}\|^2.
        \label{hatR-R}
    \end{align}
    For the first part, we have
    \begin{align}
        \sum_{ i \in (\mathcal{V}^r \setminus \widehat{T}^r)} \|\widehat{R}^r_{i \cdot}- R^r_{i \cdot}\|^2 &\leq C \sum_{ i \in (\mathcal{V}^r \setminus \widehat{T}^r)} (\|\widehat{R}^r_{i \cdot}\|^2 + \|R^r_{i \cdot}\|^2) \leq C \sum_{ i \in (\mathcal{V}^r \setminus \widehat{T}^r)} (\kappa \tau_n^2 + C) \nonumber \\ 
        & \leq C |\mathcal{V}^r \setminus \widehat{T}^r| \tau_n^2  \leq \frac{C \tau_n^2 \left\{\sqrt{\log(nm) \mathcal{Z}} + \log(nm)\right\}^2}{\theta_{\min}^2 \|\gamma\|^2},
        \label{firstterm}
    \end{align}
    where the second step uses equations \eqref{hat-R} and \eqref{R-2norm} to bound the individual terms. The third step states that as $n$ grows large enough, the term $\tau_n$ becomes dominant compared to the constant $C$. The last step applies Lemma \ref{V-T}.

    For the second part in \eqref{hatR-R}, we have
    \begin{align*}
        \sum_{ i \in \widehat{T}^r} \|\widehat{R}^r_{i \cdot}- R^r_{i \cdot}\|^2 &\leq C \sum_{ i \in \widehat{T}^r} \left\| \frac{\widehat{U}_{i 2 \sim \kappa}}{\widehat{U}_{i1}} - \frac{U_{i 2 \sim \kappa} O_U}{C_U U_{i1}} \right\|^2 \\
        &\leq C \sum_{ i \in \widehat{T}^r} \left\{ \frac{1}{\widehat{U}_{i1}^2} \|\widehat{U}_{i 2 \sim \kappa} - U_{i 2 \sim \kappa} O_U\|^2 + \frac{(C_U U_{i1} - \widehat{U}_{i1})^2}{(\widehat{U}_{i1} C_U U_{i1})^2} \|U_{i 2 \sim \kappa} O_U\|^2 \right\} \\
        &\leq C \sum_{ i \in \widehat{T}^r} \left\{ \frac{\|\theta\|^2}{\theta_i^2} \|\widehat{U}_{i 2 \sim \kappa} - U_{i 2 \sim \kappa} O_U\|^2 + \frac{\|\theta\|^2}{\theta_i^2} (C_U U_{i1} - \widehat{U}_{i1})^2 \right\} \\
        & \leq C \frac{\|\theta\|^2}{\theta_{\min}} \sum_{ i \in \widehat{T}^r} \left\{ \|\widehat{U}_{i 2 \sim \kappa} - U_{i 2 \sim \kappa} O_U\|^2 + (C_U U_{i1} - \widehat{U}_{i1})^2 \right\} \\
        &\leq C \frac{\|\theta\|^2}{\theta_{\min}} \left( \|\widehat{U}_{\cdot 2 \sim \kappa} - U_{\cdot 2 \sim \kappa} O_U\|_F^2 + \|\widehat{U}_{\cdot 1} - C_U U_{\cdot 1}\|^2 \right) \\
        &\leq \frac{C \left\{ \sqrt{\log(nm) \mathcal{Z}} + \log(nm) \right\}^2 }{\|\gamma\|^2 \theta_{\min}^2},
        \label{secondterm}
    \end{align*}
    where the first step follows from the fact that $\|R_{i \cdot}^r\| \leq C$ (see \eqref{R-2norm}), and $\tau_n$ scales with $n$, which implies $\tau_n \geq C \geq \|R_{i \cdot}^r\|$ for $n$ large enough. Thus, although \eqref{hat-R} shows that the elements in $R_{i \cdot}^r$ are truncated by $\tau_n$, we still have $ \|\widehat{R}^r_{i \cdot}- R^r_{i \cdot}\|^2 \leq \left\| \frac{\widehat{U}_{i 2 \sim \kappa}}{\widehat{U}_{i1}} - \frac{U_{i 2 \sim \kappa} O_U}{C_U U_{i1}} \right\|^2$ for large $n$, the second step uses Lemma \ref{vaub}, the third step uses Lemma \ref{V-T} and \eqref{UOU}, and the last step follows from Proposition \ref{prop-U-UC}.

    By combining the bounds from \eqref{firstterm} and \eqref{secondterm}, we get the total bound for the row representations $\|\widehat{R}^r - R^r\|_F^2 \leq \frac{C \tau_n^2 \left\{\sqrt{\log(nm) \mathcal{Z}} + \log(nm)\right\}^2}{\theta_{\min}^2 \|\gamma\|^2}$. A similar line of reasoning applies to the column representations, leading to $\|\widehat{R}^c - R^c\|_F^2 \leq \frac{C \tau_m^2 \left\{\sqrt{\log(nm) \mathcal{Z}} + \log(nm)\right\}^2}{\gamma_{\min}^2 \|\theta\|^2}$. This completes the proof.
\end{proof}

\subsection{Proof of Proposition \ref{X-R}}
\label{proof-X-R}
\begin{proof}
    Recall that $\widehat{X}^r$ is defined as 
    \begin{align*}
        \widehat{X}^r = \underset{X^r \in \mathcal{X}_{n,K-1,K}}{\argmin} \left\| X^r - \widehat{R}^r \right\|_F^2, \quad \widehat{X}^c = \underset{X^c \in \mathcal{X}_{m,K-1,L}}{\argmin} \left\| X^c - \widehat{R}^c \right\|_F^2.
    \end{align*}
    Note that $R^r$ and $R^c$ also belong to $\mathcal{X}_{n,K-1,K}$ and $\mathcal{X}_{m,K-1,L}$, respectively. Therefore, 
    \begin{align*}
        \|\widehat{X}^r - \widehat{R}^r\| \leq \|R^r - \widehat{R}^r\|, \quad \|\widehat{X}^c - \widehat{R}^c\| \leq \|R^c - \widehat{R}^c\|.
    \end{align*}
    From this, we have 
    \begin{align*}
        \|\widehat{X}^r - R^r\|_F^2 &\leq \|\widehat{X}^r -\widehat{R}^r + \widehat{R}^r - R^r\|_F^2 \leq C\|\widehat{X}^r -\widehat{R}^r\|_F^2 + C\|\widehat{R}^r - R^r\|_F^2 \\
        & \leq C\|R^r -\widehat{R}^r\|_F^2 + C\|\widehat{R}^r - R^r\|_F^2 \leq C\|\widehat{R}^r - R^r\|_F^2 \\
        &\leq \frac{C \tau_n^2 \left\{\sqrt{\log(nm) \mathcal{Z}} + \log(nm)\right\}^2}{\theta_{\min}^2 \|\gamma\|^2},
    \end{align*}
    where the last step follows from Proposition \ref{R-R-F}.
    Similarly, we can show that 
    $$
    \|\widehat{X}^c - R^c\|_F^2 \leq \frac{C \tau_m^2 \left\{\sqrt{\log(nm) \mathcal{Z}} + \log(nm)\right\}^2}{\gamma_{\min}^2 \|\theta\|^2}.
    $$
    This completes the proof of Proposition \ref{X-R}.
\end{proof}

\subsection{Proof of Theorem \ref{theorem1}}
\label{Proof_of_Theorem_1}
\begin{proof}
    For row nodes $i_1$ and $i_2$ in set $\mathcal{W}^r$, if they belong to different communities, then 
\begin{align*}
    \|\widehat{X}_{i_1 \cdot}^r - \widehat{X}_{i_2 \cdot}^r\| = & \|\widehat{X}_{i_1 \cdot}^r - R_{i_1 \cdot}^r + R_{i_1 \cdot}^r - R_{i_2 \cdot}^r + R_{i_2 \cdot}^r - \widehat{X}_{i_2 \cdot}^r\| \\
    \geq & \| R_{i_1 \cdot}^r - R_{i_2 \cdot}^r \| - \| \widehat{X}_{i_1 \cdot}^r - R_{i_1 \cdot}^r + R_{i_2 \cdot}^r - \widehat{X}_{i_2 \cdot}^r \| \\
    \geq & \| R_{i_1 \cdot}^r - R_{i_2 \cdot}^r \| - \|\widehat{X}_{i_1 \cdot}^r - R_{i_1 \cdot}^r\| - \|\widehat{X}_{i_2 \cdot}^r - R_{i_2 \cdot}^r\|.
\end{align*}
Similarly, for column nodes $j_1$ and $j_2$ in set $\mathcal{W}^c$, if they are in different communities, we have 
\begin{align*}
    \|\widehat{X}_{j_1 \cdot}^c - \widehat{X}_{j_2 \cdot}^c\| \geq \| R_{j_1 \cdot}^c - R_{j_2 \cdot}^c \| - \|\widehat{X}_{j_1 \cdot}^c - R_{j_1 \cdot}^c\| - \|\widehat{X}_{j_2 \cdot}^c - R_{j_2 \cdot}^c\|.
\end{align*}
By Proposition \ref{proposition R-R} and the definition of $\mathcal{W}^r$ and $\mathcal{W}^c$, for $i_1,~i_2 \in \mathcal{W}^r$ and $j_1,~j_2 \in \mathcal{W}^c$, we have
\begin{align*}
    \|\widehat{X}_{i_1 \cdot}^r - \widehat{X}_{i_2 \cdot}^r\| \geq 1, \quad \|\widehat{X}_{j_1 \cdot}^c - \widehat{X}_{j_2 \cdot}^c\| \geq 1.
\end{align*}
Therefore, if nodes $i_1$ and $i_2$ belong to different communities, their corresponding rows in $\widehat{X}^r$ are sufficiently different. Since we assume that $\lvert \mathcal{V}^r \setminus \mathcal{W}^r \rvert < \min\{n_1, \dots, n_K\}$, the set $\mathcal{W}^r$ contains at least one node from each community. Given that $\widehat{X}^r$ has exactly $K$ distinct rows, we conclude that if two nodes in $\mathcal{W}^r$ are in the same community, their rows in $\widehat{X}^r$ are identical; otherwise, their rows are sufficiently different. Thus, nodes in $\mathcal{W}^r$ are clustered correctly. By the same reasoning, nodes in $\mathcal{W}^c$ are also clustered correctly. From the definitions of $\mathcal{W}^r$, $\mathcal{W}^c$ and Proposition \ref{X-R}, we have
\begin{align*}
    \lvert \mathcal{V}^r \setminus \mathcal{W}^r \rvert & < 4 \sum_{i \in (\mathcal{V}^r \setminus \mathcal{W}^r)} \|\widehat{X}^r_{i \cdot} - R^r_{i \cdot}\|^2  \leq 4 \|\widehat{X}^r - R^r\|_F^2 \\
    & \leq \frac{C \tau_n^2 \left\{\sqrt{\log(nm) \mathcal{Z}} + \log(nm)\right\}^2}{\theta_{\min}^2 \|\gamma\|^2}.
\end{align*}
Similarly, we obtain
\begin{align*}
    \lvert \mathcal{V}^c \setminus \mathcal{W}^c \rvert < \frac{C \tau_m^2 \left\{\sqrt{\log(nm) \mathcal{Z}} + \log(nm)\right\}^2}{\gamma_{\min}^2 \|\theta\|^2}.
\end{align*}
This completes the proof.
\end{proof}

\subsection{Proof of Lemma \ref{lambda-Omega}}
\label{proof-lambda-Omega}

\begin{proof}[Proof of Lemma \ref{lambda-Omega}]
    According to Proposition \ref{prop-UV}, we have $\sigma_k(\Omega) = \|\theta\| \|\gamma\| \sigma_k(S)$, for $1 \leq k \leq \kappa$. Therefore, it is sufficient to show $0 < C_1 \leq \sigma_{\kappa}(S) \leq \sigma_1(S) \leq C_2$.

    Recall $S = \Psi_{\theta} B \Psi_{\gamma}^\top$, where $\Psi_{\theta}$ and $\Psi_{\gamma}$ are diagonal matrices. The terms $\|\Psi_{\theta}\|$ and $\|\Psi_{\theta}\|_{\min}$ represent the largest and smallest absolute values of the diagonal entries of $\Psi_{\theta}$, respectively. From \eqref{thetaktheta}, we have
    \begin{gather*}
        \|\Psi_{\theta}\| = \max_k (\Psi_{\theta})_{kk} = \max_k \frac{\|\theta^{(k)}\|}{\|\theta\|} \leq C, \\
        \|\Psi_{\theta}\|_{\min} = \min_k (\Psi_{\theta})_{kk} = \min_k \frac{\|\theta^{(k)}\|}{\|\theta\|} \geq C.
    \end{gather*}
    Thus, there exist two constants $C_m > 0$ and $C_M > 0$ such that 
    \begin{align}
        \label{Psi-theta}
        C_m \leq \|\Psi_{\theta}\|_{\min} \leq \|\Psi_{\theta}\| \leq C_M.
    \end{align}
    A similar argument shows that
    \begin{align}
        \label{Psi-gamma}
        C_m \leq \|\Psi_{\gamma}\|_{\min} \leq \|\Psi_{\gamma}\| \leq C_M.
    \end{align}
    From the definition of $S$, we have 
    \begin{align}
    \label{S-max}
        \sigma_1(S) = \|S\| = \|\Psi_{\theta} B \Psi_{\gamma}^\top\| \leq \|\Psi_{\theta}\| \|B\| \|\Psi_{\gamma}^\top\| \leq C,
    \end{align}
    because $B$ is a constant matrix (i.e., it does not change with $n$), and therefore there exists a constant $C$, such that $\|B\| \leq C$. On the other hand, 
    \begin{align}
    \label{S-min}
        \sigma_{\kappa}(S) = \|S\|_{\min} = \|\Psi_{\theta} B \Psi_{\gamma}^\top\|_{\min} \geq \|\Psi_{\theta}\|_{\min} \|B\|_{\min} \|\Psi_{\gamma}^\top\|_{\min} \geq C,
    \end{align}
    where the last inequality follows from Assumption \ref{assump-B}.

    Combining \eqref{S-max} and \eqref{S-min}, we get $\sigma_{k} \asymp C$ for $k=1,\dots,\kappa$. Therefore, we have
    $$
    \lambda_k(\Omega^\top \Omega)=\lambda_k(\Omega \Omega^\top) = \sigma_k^2(\Omega) = \sigma_k^2(S) \|\theta\|^2 \|\gamma\|^2 \asymp \|\theta\|^2 \|\gamma\|^2.
    $$
    This completes the proof of Lemma \ref{lambda-Omega}.
\end{proof}

\subsection{Proof of Lemma \ref{lemma-A-Omega}}
\label{proof-lemma-A-Omega}
We first introduce the following lemma.
\begin{lemma}[Corollary 4 in \cite{bacry2018concentration}]
\label{lemma-poisson-centrality}
    Let $A$ be a $n \times m$ random matrix whose entries $A_{ij}$ follow a Poisson distribution with mean $\Omega_{ij}$. Then, for any $x > 0$, we have
    \begin{align*}
        P\left(\|A - \Omega\|_2 \geq \sqrt{2 \max(\|\Omega\|_{1,\infty}, \|\Omega\|_{\infty, 1}) x} + \frac{x}{3} \right) \leq (n+m)e^{-x}
    \end{align*}
    where $\Omega = (\Omega_{ij})$, $\|\Omega\|_{1,\infty} = \max_i\|\Omega_{i \cdot}\|_1$, and $\|\Omega\|_{\infty, 1} = \max_j\|\Omega_{\cdot j}\|_1$.
\end{lemma}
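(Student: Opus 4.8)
The plan is to reproduce the argument behind this bound (Corollary 4 of \cite{bacry2018concentration}) by writing $A-\Omega$ as a sum of independent mean-zero matrices and running the matrix Laplace transform (matrix Chernoff) method, letting the Poisson tail enter only through a per-summand cumulant generating function bound. First I would pass to the Hermitian dilation
\begin{align*}
\mathcal{A} = \begin{pmatrix} 0 & A \\ A^\top & 0 \end{pmatrix}, \qquad \mathcal{O} = \begin{pmatrix} 0 & \Omega \\ \Omega^\top & 0 \end{pmatrix},
\end{align*}
whose nonzero eigenvalues are exactly $\pm\sigma_k(A-\Omega)$, so that $\|A-\Omega\| = \lambda_{\max}(\mathcal{A}-\mathcal{O})$ and it suffices to control a single upper tail. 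Decomposing $\mathcal{A}-\mathcal{O} = \sum_{i=1}^n\sum_{j=1}^m X_{ij}$ with $X_{ij} = (A_{ij}-\Omega_{ij})\mathcal{E}_{ij}$, where $\mathcal{E}_{ij}$ is the symmetric matrix carrying $1$ in the $(i,n+j)$ and $(n+j,i)$ slots, realizes the dilation as a sum of independent, mean-zero, self-adjoint matrices. Since $\mathcal{E}_{ij}^2 = P_{ij}$ is the diagonal projection onto coordinates $i$ and $n+j$, a short computation gives the variance proxy $V := \sum_{ij}\mathbb{E}[X_{ij}^2] = \sum_{ij}\Omega_{ij}P_{ij}$, a diagonal matrix whose entries are the row sums $\|\Omega_{i\cdot}\|_1$ and column sums $\|\Omega_{\cdot j}\|_1$; hence $\|V\| = \max(\|\Omega\|_{1,\infty},\|\Omega\|_{\infty,1})$, which is precisely the variance factor in the claimed bound.

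Next I would invoke the matrix Laplace transform method: for a sum $Z$ of independent self-adjoint matrices,
\begin{align*}
\mathbb{P}(\lambda_{\max}(Z) \geq t) \leq (n+m)\inf_{\eta>0}\exp\Big( -\eta t + \lambda_{\max}\big( \textstyle\sum_{ij}\log\mathbb{E}[e^{\eta X_{ij}}] \big) \Big),
\end{align*}
so the whole problem reduces to bounding each matrix cumulant generating function $\log\mathbb{E}[e^{\eta X_{ij}}]$. The key observation is that $\mathcal{E}_{ij}$ acts nontrivially only on the two-dimensional span of coordinates $i$ and $n+j$, where it equals $\left(\begin{smallmatrix}0&1\\1&0\end{smallmatrix}\right)$. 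On that subspace $e^{\eta X_{ij}}$ is an explicit $2\times 2$ matrix built from $\cosh(\eta(A_{ij}-\Omega_{ij}))$ and $\sinh(\eta(A_{ij}-\Omega_{ij}))$, whose expectation has eigenvalues $\mathbb{E}[e^{\pm\eta(A_{ij}-\Omega_{ij})}] = e^{\Omega_{ij}g(\pm\eta)}$ with $g(\eta)=e^{\eta}-1-\eta$. Thus $\log\mathbb{E}[e^{\eta X_{ij}}]$ has eigenvalues $\Omega_{ij}g(\eta)$ and $\Omega_{ij}g(-\eta)$ on this subspace and $0$ elsewhere; since $g(\eta)\geq g(-\eta)\geq 0$ for $\eta>0$, this yields the clean semidefinite bound $\log\mathbb{E}[e^{\eta X_{ij}}] \preccurlyeq g(\eta)\,\mathbb{E}[X_{ij}^2]$, with no boundedness assumption on the unbounded Poisson increments.

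Summing this bound and using subadditivity of the matrix cgf gives $\sum_{ij}\log\mathbb{E}[e^{\eta X_{ij}}] \preccurlyeq g(\eta)V$, so that $\lambda_{\max}(\cdot)\leq g(\eta)\|V\|$ and the Laplace bound collapses to the matrix Bennett inequality
\begin{align*}
\mathbb{P}(\|A-\Omega\|\geq t) \leq (n+m)\inf_{0<\eta<3}\exp\big(-\eta t + \|V\|\,g(\eta)\big).
\end{align*}
Finally I would convert this into the stated deviation form. The elementary series comparison $g(\eta)\leq \tfrac{\eta^2/2}{1-\eta/3}$ for $0<\eta<3$ turns the exponent into the standard Bernstein form with variance $\|V\|$ and scale parameter $1/3$; optimizing over $\eta$ and inverting at the threshold $t=\sqrt{2\|V\|x}+x/3$ makes the exponent at most $-x$, which gives $\mathbb{P}(\|A-\Omega\|\geq \sqrt{2\|V\|x}+x/3)\leq (n+m)e^{-x}$, i.e. the claim since $\|V\|=\max(\|\Omega\|_{1,\infty},\|\Omega\|_{\infty,1})$.

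The main obstacle is the per-summand cgf control in the presence of unbounded Poisson entries: generic matrix Bernstein arguments assume a uniform norm bound $\|X_{ij}\|\leq R$, which fails here. The rank-two structure of $\mathcal{E}_{ij}$ is what rescues the argument, reducing the matrix exponential to an exact $2\times 2$ diagonalization and producing the Poisson exponent $g(\pm\eta)$ directly; everything downstream is the routine scalar Bennett-to-Bernstein inversion.
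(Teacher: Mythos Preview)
The paper does not prove this lemma; it quotes it as Corollary~4 of \cite{bacry2018concentration} and uses it as a black box in the proof of Lemma~\ref{lemma-A-Omega}. Your outline supplies a correct self-contained argument via the Hermitian dilation and Tropp's matrix Laplace transform: the rank-two structure of $\mathcal{E}_{ij}$ lets you compute $\log\mathbb{E}[e^{\eta X_{ij}}]$ exactly from the scalar Poisson cumulant generating function $e^{\Omega_{ij}g(\pm\eta)}$, yielding the semidefinite bound $\log\mathbb{E}[e^{\eta X_{ij}}]\preccurlyeq g(\eta)\,\mathbb{E}[X_{ij}^2]$ without any boundedness assumption on the summands. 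The cited reference itself works in a continuous-time matrix-martingale framework and obtains the Poisson case as an instance of a matrix Freedman/Bennett inequality, but your discrete route is equivalent and arguably more direct for this statement.

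One point in the final step deserves care. After the substitution $g(\eta)\leq \tfrac{\eta^2/2}{1-\eta/3}$, the threshold $t=\sqrt{2\|V\|x}+x/3$ makes the \emph{exact} infimum of $-\eta t+\tfrac{\eta^2\|V\|/2}{1-\eta/3}$ over $0<\eta<3$ equal to $-x$: the minimizer is $\eta^\ast=3\bigl(1-1/\beta\bigr)$ with $\beta=1+\tfrac{1}{3}\sqrt{2x/\|V\|}$, and the minimum simplifies algebraically to $-x$. If instead you first pass to the looser closed form $-\tfrac{t^2/2}{\|V\|+t/3}$ that is often quoted as ``the Bernstein exponent'' and then invert, you lose a constant and only recover $t=\sqrt{2\|V\|x}+\tfrac{2x}{3}$. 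Your phrasing ``optimizing over $\eta$ and inverting'' is correct provided you actually carry out the one-variable optimization rather than substituting the textbook Bernstein tail; it would be worth making this explicit in the write-up.
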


\begin{proof}[Proof of Lemma \ref{lemma-A-Omega}]
    From Assumption \ref{assump-B}, we have $\Omega_{ij} = \theta_i \gamma_j B_{c_i^r c_j^c} \leq \theta_i \gamma_j$. Therefore,
    \begin{align*}
        \|\Omega\|_{1,\infty} = \max_i\|\Omega_{i \cdot}\|_1 \leq \max_i \theta_i \|\gamma\|_1, \\
        \|\Omega\|_{\infty,1} = \max_j\|\Omega_{\cdot j}\|_1 \leq \max_j \gamma_j \|\theta\|_1.
    \end{align*}
    Let $\mathcal{Z} = \max(\theta_{\max}, \gamma_{\max}) \max(\|\theta\|_1, \|\gamma\|_1)$, where $\theta_{\max} = \max_i \theta_i$ and $\gamma_{\max} = \max_j \gamma_j$. Then,
    $\max(\|\Omega\|_{1,\infty}, \|\Omega\|_{\infty, 1}) \leq \mathcal{Z}$.
    Applying Lemma \ref{lemma-poisson-centrality} with $x = \log(nm)$, we obtain that, with probability at least $1-1/n - 1/m$, 
    \begin{align*}
        \|A - \Omega\| \leq \sqrt{2 \log(nm) \mathcal{Z}} + \log(nm)/3.
    \end{align*}
    This completes the proof.
\end{proof}

\subsection{Proof of Lemma \ref{lemmaSsvd}}
\label{proof-lemmaSsvd}
We first present the following lemma.
    \begin{lemma}[Theorem 8.4.4 in \cite{horn2012matrix}, Lemma A.7 in \cite{wang2020spectral}] 
    \label{lemma-nonnegative}
        For every $K \times K$ irreducible nonnegative, and positive semidefinite matrix $M$, let $U_{\cdot 1}$ be the eigenvector corresponding to its largest eigenvalue. Then, 
        \begin{itemize}
            \item[(i)] $U_{\cdot 1}$ can be a positive vector.
            \item[(ii)] The largest eigenvalue is an algebraically simple eigenvalue.
        \end{itemize}
    \end{lemma}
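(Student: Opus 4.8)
The plan is to derive Lemma \ref{lemma-nonnegative} directly from the Perron--Frobenius theorem for irreducible nonnegative matrices, which is precisely the content of Theorem 8.4.4 in \cite{horn2012matrix}. The only work beyond citing that theorem is to reconcile its native conclusion — phrased in terms of the spectral radius $\rho(M) \equiv \max_k |\lambda_k(M)|$ — with the present statement, which is phrased in terms of the \emph{largest eigenvalue} $\lambda_1(M)$. The bridge between the two formulations is the assumed positive semidefiniteness of $M$.

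First I would record that, since $M$ is a (symmetric) positive semidefinite matrix, all of its eigenvalues are real and nonnegative. Consequently the spectral radius coincides with the largest eigenvalue, $\rho(M) = \max_k |\lambda_k(M)| = \lambda_1(M)$: there is no complex eigenvalue and no negative eigenvalue that could dominate in modulus. This identification is the crucial point, because the Perron--Frobenius theorem describes the eigenvalue $\rho(M)$ lying on the spectral circle, and for a positive semidefinite matrix that eigenvalue is exactly the dominant eigenvalue $\lambda_1(M)$ to which $U_{\cdot 1}$ belongs.

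Next I would invoke Perron--Frobenius for the irreducible nonnegative matrix $M$, obtaining two facts: (a) $\rho(M)$ is an eigenvalue of $M$ that admits a strictly positive eigenvector; and (b) $\rho(M)$ is an algebraically simple eigenvalue. Combining (a) with the identification $\rho(M) = \lambda_1(M)$ places a strictly positive vector in the eigenspace of $\lambda_1(M)$; since eigenvectors are determined only up to scaling, this establishes part (i), namely that the leading eigenvector $U_{\cdot 1}$ may be taken positive. Fact (b) transfers verbatim to give part (ii), that $\lambda_1(M)$ is algebraically simple; in particular the corresponding eigenspace is one-dimensional, so the positive Perron vector is the essentially unique choice of $U_{\cdot 1}$.

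There is no genuine analytic obstacle here, as the statement is a standard corollary; the only point requiring care — and the place where the hypotheses actually do work together — is that \emph{both} sets of assumptions are needed. Irreducibility and nonnegativity supply Perron--Frobenius (positivity of the dominant eigenvector and simplicity of the dominant eigenvalue), while positive semidefiniteness guarantees that ``dominant in modulus'' and ``largest eigenvalue'' denote the same object, so that the conclusions about $\rho(M)$ transfer without loss to $U_{\cdot 1}$ and $\lambda_1(M)$. I would therefore present the argument as a short reduction, citing Theorem 8.4.4 of \cite{horn2012matrix} for the substantive step and supplying only the PSD reconciliation explicitly.
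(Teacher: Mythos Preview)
Your proposal is correct and is precisely the standard Perron--Frobenius argument one would expect. The paper does not supply its own proof of this lemma at all: it is stated as a citation of Theorem 8.4.4 in \cite{horn2012matrix} and Lemma A.7 in \cite{wang2020spectral}, and is simply invoked as a black box in the proof of Lemma \ref{lemmaSsvd}. Your reduction --- Perron--Frobenius for the spectral radius, plus positive semidefiniteness to identify $\rho(M)$ with $\lambda_1(M)$ --- is exactly what underlies those cited results, so there is nothing to compare.
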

\begin{proof}[Proof of Lemma \ref{lemmaSsvd}]
    To prove Lemma \ref{lemmaSsvd}, it is sufficient to show that $S^\top S$ and $SS^\top$ are irreducible and nonnegative, and these properties do not depend on $n$ or $m$. Once this is established, part (i) of Lemma \ref{lemma-nonnegative} implies that $Y_{k1} \geq C >0$ and $H_{l1} \geq C > 0$, where $Y_{\cdot 1}$ and $H_{\cdot 1}$ are the eigenvectors corresponding to $SS^\top$ and $S^\top S$, respectively. Moreover, $Y_{k1} \geq C >0$  leads to $U_{i1} > 0$, since $U_{i \cdot} = (\theta_i / \|\theta^{(c_i^r)}\|) Y_{c_i^r \cdot}$ for $i=1,\dots,n$. Similarly, we have $V_{j1} >0$ for $j=1,\dots,m$. Part (ii) of Lemma \ref{lemma-nonnegative} further gives $\lambda_1(S^\top S) - \lambda_2(S^\top S) = \lambda_1(S S^\top) - \lambda_2(S S^\top) \geq C$. This completes the proof of Lemma \ref{lemmaSsvd}.

    Next, we verify that $S^\top S$ and $SS^\top$ are irreducible and nonnegative. Recall that $S=\Psi_{\theta} B \Psi_{\gamma}^\top$. From \eqref{Psi-theta}, \eqref{Psi-gamma}, and the definition of the diagonal matrices $\Psi_\theta$ and $\Psi_\gamma$, we see that there exist constants $C_1$ and $C_2$ such that $0 < C_1 \leq \Psi_{\theta,kk} \leq C_2$ and $0 < C_1 \leq \Psi_{\gamma,ll} \leq C_2$. Therefore, for $k=1,\dots,K$ and $l=1,\dots,L$,
    \begin{align*}
        C_1^2 B_{kl} \leq S_{kl} \leq C_2^2 B_{kl}.
    \end{align*}
    For $1 \leq l_1, l_2 \leq L$, we have
    \begin{align*}
        (S^\top S)_{l_1 l_2} = \sum_{k = 1}^K (S^\top)_{l_1 k} S_{k, l_2} \leq C_2^4 \sum_{k = 1}^K (B^\top)_{l_1 k} B_{k, l_2} = C (B^\top B)_{l_1 l_2}.
    \end{align*}
    Similarly, for $1 \leq l_1, l_2 \leq L$, we obtain
    \begin{align*}
        C(B^\top B)_{l_1 l_2} \leq (S^\top S)_{l_1 l_2}.
    \end{align*}
    Since $B$ is a constant matrix with positive entries, $B^\top B$ is irreducible and nonnegative by Assumption \ref{assump-B}. Hence, $S^\top S$ is also nonnegative and irreducible, and these properties do not depend on $n$ or $m$. By the same reasoning, $SS^\top$ is nonnegative and irreducible, and these properties do not depend on $n$ or $m$. This completes the proof.
    
\end{proof}

\subsection{Proof of Lemma \ref{V-T}}
\label{proof-V-T}
\begin{proof}[Proof of Lemma \ref{V-T}]
    We first prove $|C_U U_{i1}| \asymp |\theta_i| / \|\theta\|$ for $i=1,\dots,n$. By Proposition \ref{prop-UV}, we have $U_{i \cdot} = \frac{\theta_i}{\|\theta^{(c^r_i)}\|} Y_{c^r_i \cdot}$, and thus $C_U U_{i1} = C_U \frac{\theta_i}{\|\theta^{(c^r_i)}\|} Y_{c^r_i 1}$, where $|C_U| = 1$ by Proposition \ref{prop-U-UC}. From \eqref{thetaktheta} and Lemma \ref{lemmaSsvd}, we obtain
    \begin{align}
        |C_U U_{i1}| \asymp \left| C_U \frac{\theta_i}{\|\theta^{(c^r_i)}\|} Y_{c^r_i 1} \right| \asymp \frac{|\theta_i|}{\|\theta\|}, \quad \text{for}~i=1,\dots,n.
        \label{CU-U}
    \end{align}
    Next, by \eqref{hat-T}, nodes in the set $\widehat{T}^r$ satisfies $\left| \frac{\widehat{U}_{i1}}{C_U U_{i1}} -1 \right| \leq C < 1$, and therefore $|\widehat{U}_{i1}| \asymp |C_U U_{i1}|$. Thus, for $i \in \widehat{T}^r$  we have
    \begin{align*}
        |\widehat{U}_{i1}|\asymp |C_U U_{i1}|\asymp \frac{|\theta_i|}{\|\theta\|}. 
    \end{align*}
    Similarly, $|\widehat{V}_{j1}|\asymp |C_V V_{j1}|\asymp \frac{|\gamma_j|}{\|\gamma\|}$ for $j \in \widehat{T}^c$.

    Furthermore, by \eqref{CU-U} and \eqref{theta-min} in Assumption \ref{assump-theta-gamma}, we have $|C_U U_{i1}| \asymp \frac{|\theta_i|}{\|\theta\|} >0$, so using $C_U U_{i1}$ as the denominator for all $i$ is valid. We then derive
    \begin{align*}
        & \sum_{ i \in (\mathcal{V}^r \setminus \widehat{T}^{r})} \left( \frac{\widehat{U}_{i1}}{C_U U_{i1}} -1 \right)^2 = \sum_{ i \in (\mathcal{V}^r \setminus \widehat{T}^{r})} \left( \frac{1}{C_U U_{i1}} \right)^2 (\widehat{U}_{i1} - C_U U_{i1})^2 \\
        &\leq \sum_{ i \in (\mathcal{V}^r \setminus \widehat{T}^{r})} \frac{\|\theta\|^2}{\theta_{\min}^2}(\widehat{U}_{i1} - C_U U_{i1})^2 \leq \sum_{i=1}^n \frac{\|\theta\|^2}{\theta_{\min}^2}(\widehat{U}_{i1} - C_U U_{i1})^2 \\
        &\leq \frac{\|\theta\|^2}{\theta_{\min}^2} \|\widehat{U}_{\cdot 1} - U_{\cdot 1} C_U\|^2 \leq C\frac{\left\{\sqrt{\log(nm) \mathcal{Z}} + \log(nm)\right\}^2}{\theta_{\min}^2 \|\gamma\|^2},
    \end{align*}
    where the second step follows from \eqref{CU-U}, and the last step follows from Proposition \ref{prop-U-UC}. 

    Since nodes in the set $\mathcal{V}^r \setminus \widehat{T}^r$ satisfy $\{\widehat{U}_{i1}/(C_U U_{i1}) -1\}^2 \geq C_0^2$, we have
    \begin{align*}
        |\mathcal{V}^r \setminus \widehat{T}^r| = \sum_{i \in (\mathcal{V} \setminus \widehat{T}^r)} 1 \leq \sum_{i \in (\mathcal{V} \setminus \widehat{T}^r)} \frac{1}{C_0^2} \left( \frac{\widehat{U}_{i1}}{C_U U_{i1}} - 1 \right)^2 \leq C\frac{\left\{\sqrt{\log(nm) \mathcal{Z}} + \log(nm)\right\}^2}{\theta_{\min}^2 \|\gamma\|^2}.
    \end{align*}
    Similarly, we can show that 
    \begin{align*}
        |\mathcal{V}^c \setminus \widehat{T}^c| \leq C\frac{\left\{\sqrt{\log(nm) \mathcal{Z}} + \log(nm)\right\}^2}{\gamma_{\min}^2 \|\theta\|^2}.
    \end{align*}
    This completes the proof.
\end{proof}

\end{appendices}

\begin{appendices}
\setcounter{section}{1}
\section{Additional Results}\label{secB}

Appendix B presents additional results from our analysis of the journal citation network. 

\label{add-table}
\begin{table}[!htbp]
\centering
\rotatebox{90}{ 
\begin{minipage}{\textheight} 
\caption{Eight representative journals. Journal name, number of publications and JCR category are presented.}
\label{table:journal}
\begin{tabularx}{\textheight}{>{\raggedright\arraybackslash}p{7.5cm} c X}
\toprule
\multicolumn{1}{c}{\textbf{Journal Name}} & \multicolumn{1}{c}{\textbf{Number of Publications}} & \multicolumn{1}{c}{\textbf{JCR Category}} \\
\midrule
Annals of Applied Statistics (AOAS) & 1677 & STATISTICS \& PROBABILITY \\
Annals of Statistics (AOS)          & 2464 & STATISTICS \& PROBABILITY \\
Biometrika                           & 1804 & 
\begin{tabular}[t]{@{}l@{}}
BIOLOGY\\
MATHEMATICAL \& COMPUTATIONAL BIOLOGY\\
STATISTICS \& PROBABILITY
\end{tabular}\\
Journal of the Royal Statistical Society Series B--statistical Methodology (JRSS-B) & 1129 & STATISTICS \& PROBABILITY \\
Journal of the American Statistical Association (JASA) & 3185 & STATISTICS \& PROBABILITY \\
Journal of Computational and Graphical Statistics (JCGS) & 1614 & STATISTICS \& PROBABILITY \\
Journal of Business \& Economic Statistics (JBES) & 1271 & 
\begin{tabular}[t]{@{}l@{}}
ECONOMICS\\
SOCIAL SCIENCES, MATHEMATICAL METHODS\\
STATISTICS \& PROBABILITY
\end{tabular}\\
Statistics and Computing (SC)       & 1613 & 
\begin{tabular}[t]{@{}l@{}}
COMPUTER SCIENCE, THEORY \& METHODS\\
STATISTICS \& PROBABILITY
\end{tabular}\\
\bottomrule
\end{tabularx}
\end{minipage}
}
\end{table}

\begin{table}[htbp]
\centering
\caption{Top 10 references for {\it core journals}. Journal name abbreviation is presented with its weight in parentheses.}\label{table:top 10 references}
\begin{tabularx}{\textwidth}{@{}p{2.2cm} p{3.5cm} p{2.2cm} p{4cm}@{}}
\toprule
\textbf{Journal Name} & \textbf{Top 10 References} & \textbf{Journal Name} & \textbf{Top 10 References} \\
\midrule
\textbf{AOAS} & JASA (4189) & \textbf{JASA} & JASA (12294) \\
 & JRSS-B (2125) & & AOS (8653) \\
 & AOS (1751) & & Biometrika (5396) \\
 & Biometrika (1686) & & JRSS-B (5122) \\
 & Biometrics (1638) & & Biometrics (3086) \\
 & AOAS (1191) & & Econometrica (1600) \\
 & STAT MED (857) & & STAT MED (1514) \\
 & Bioinformatics (797) & & JOE (1392) \\
 & PNAS (794) & & STAT SINICA (1382) \\
 & Nature (712) & & JMLR (1183) \\
\midrule
\textbf{AOS} & AOS (14471) & \textbf{JCGS} & JASA (4366) \\
 & JASA (4922) & & AOS (2795) \\
 & JRSS-B (2993) & & JRSS-B (2549) \\
 & Biometrika (2859) & & Biometrika (1994) \\
 & JMA (1380) & & JCGS (1829) \\
 & Bernoulli (1279) & & Biometrics (839) \\
 & TIT (1253) & & CSDA (755) \\
 & AMS (1180) & & JMLR (740) \\
 & JMLR (1152) & & SC (704) \\
 & STAT SINICA (1060) & & STAT SINICA (650) \\
\midrule
\textbf{Biometrika} & Biometrika (4998) & \textbf{JBES} & JOE (4411) \\
 & JASA (4368) & & Econometrica (3929) \\
 & AOS (4176) & & JASA (2340) \\
 & JRSS-B (2518) & & JBES (1894) \\
 & Biometrics (1442) & & AOS (1650) \\
 & STAT MED (688) & & ECONOMET THEOR (1008) \\
 & STAT SINICA (643) & & JOF (987) \\
 & Econometrica (571) & & AER (971) \\
 & STAT SCI (506) & & JAE (916) \\
 & JMA (480) & & RES (876) \\
\midrule
\textbf{JRSS-B} & AOS (3618) & \textbf{SC} & JASA (3273) \\
 & JASA (3593) & & JRSS-B (2501) \\
 & JRSS-B (2542) & & AOS (2317) \\
 & Biometrika (2320) & & Biometrika (1869) \\
 & Biometrics (831) & & SC (1535) \\
 & Econometrica (522) & & CSDA (1163) \\
 & STAT SCI (493) & & JCGS (1036) \\
 & JMLR (491) & & Biometrics (736) \\
 & STAT SINICA (472) & & JMLR (727) \\
 & ARXIV (420) & & ARXIV (658) \\
\botrule
\end{tabularx}
\end{table}

\begin{table}[htbp]
\centering
\caption{Journals in the ``Natural Science'' subcommunity. Journal name and weight are presented, where the weight refers to the total citation counts from the core journals.}\label{table:natural science journal}
\begin{tabularx}{\textwidth}{p{8cm} >{\centering\arraybackslash}X}
\toprule
\multicolumn{1}{c}{\textbf{Journal Name}} & \multicolumn{1}{c}{\textbf{Weight}} \\
\midrule
Monthly Notices of the Royal Astronomical Society (MNRAS) & 145 \\
Methods in Ecology and Evolution (MEE) & 83 \\
Astronomy and Astrophysics (A\&A) & 80 \\
Atmospheric Environment & 79 \\
PLoS Medicine (PLoS MED) & 62 \\
Journal of Hydrology & 61 \\
Analytical Chemistry & 60 \\
Demography & 59 \\
Astronomical Journal (AJ) & 58 \\
Journal of Neurophysiology & 57 \\
Philosophical Transactions of the Royal Society B & 49 \\
BMC Genomics & 45 \\
Remote Sensing of Environment (RSE) & 45 \\
Biophysical Journal & 44 \\
Climate Dynamics & 44 \\
Journal of Evolutionary Economics & 44 \\
PLoS Biology (PLoS Biol) & 44 \\
Journal of Theoretical Biology & 43 \\
Statistical Methodology & 43 \\
Blood & 42 \\
Clinical Cancer Research & 42 \\
Journal of Neuroscience Methods & 42 \\
Environmental Science and Technology & 41 \\
Nature Reviews Neuroscience & 41 \\
Cancer Cell & 40 \\
\bottomrule
\end{tabularx}
\end{table}

\end{appendices}

\clearpage
\bibliography{sn-bibliography}

\end{document}